\newtheorem{theorem}{Theorem}
\newtheorem{lemma}{Lemma}
\def\be{\begin{equation}}
\def\ee{\end{equation}}
\def\bea{\begin{eqnarray}}
\def\eea{\end{eqnarray}}
\providecommand{\d}{\mathop{}\!\mathrm{d}}
\renewcommand{\d}{\mathop{}\!\mathrm{d}}
\newcommand{\td}{\d}
\DeclareMathOperator{\sgn}{sgn}
\newcommand*{\omh}{\hat{\omega}}
\newcommand*{\dt}{\d t}
\newcommand*{\dr}{\d r}
\newcommand*{\dps}{\d\psi}
\newcommand*{\dth}{\d\theta}
\newcommand*{\dph}{\d \phi}
\newcommand*{\dv}{\d v}
\newcommand*{\ds}{\d s}
\newcommand*{\Ht}{\tilde{H}}
\newcommand*{\Kt}{\tilde{K}}
\newcommand*{\Lt}{\tilde{L}}
\newcommand*{\Mt}{\tilde{M}}
\newcommand*{\chiph}{\chi} 
\newcommand*{\omhph}{\omh}
\newcommand{\llangle}{\langle\!\langle}
\newcommand{\rrangle}{\rangle\!\rangle}
\newcommand*{\Ord}{
O
}
\title{ \bf{Moduli space of supersymmetric solitons and black holes in
    five dimensions}}
\author{Veronika Breunh\"older\footnote{v.breunhoelder@ed.ac.uk} \\
  and \\ James Lucietti\footnote{j.lucietti@ed.ac.uk } \\ \\
  \small \sl School of Mathematics and Maxwell Institute of
  Mathematical Sciences, \\ \small \sl University of Edinburgh, The
  King's Buildings, \\ \small \sl Edinburgh, EH9 3FD, UK }
\date{}
\begin{document}

\maketitle


\begin{abstract}
  We determine all asymptotically flat, supersymmetric and
  biaxisymmetric soliton and black hole solutions to five-dimensional minimal
  supergravity. In particular, we show that the solution must be a
  multi-centred solution with a Gibbons--Hawking base. The proof
  involves combining local constraints from supersymmetry with global
  constraints for stationary and biaxisymmetric
  spacetimes. This reveals that the horizon topology must be one of
  $S^3$, $S^1\times S^2$ or a lens space $L(p,1)$, thereby providing a
  refinement of the allowed horizon topologies.  We construct the general smooth solution for each possible rod structure. We find a
  large moduli space of black hole spacetimes with noncontractible 2-cycles for each
  of the allowed horizon topologies. In the absence of a black hole we
  obtain a classification of the known  `bubbling' soliton spacetimes.
\end{abstract}

\newpage

\tableofcontents

\section{Introduction}

The classification of isolated gravitating equilibrium states is a
problem of central importance in general relativity.  In
four-dimensional Einstein--Maxwell theory this question has been
answered, under some assumptions. Any asymptotically flat stationary
spacetime must contain a black hole (no soliton theorem), and
furthermore, the black hole uniqueness theorem implies it must be a
Kerr--Newman solution, see e.g.~\cite{Chrusciel:2012jk}.

For higher-dimensional general relativity, an analogous classification
of equilibrium states is a major open
problem~\cite{Emparan:2008eg}. It is known that both the no-soliton
and the black hole uniqueness theorems are violated, even within the
class of asymptotically flat (Minkowski $\mathbb{R}^{1,D-1}$)
spacetimes. The failure of the uniqueness theorem was first revealed
by the discovery of the black ring, an asymptotically flat stationary
black hole vacuum solution with a horizon of spatial topology
$S^1\times S^2$ \cite{Emparan:2008eg}. It is now expected that the
moduli space of stationary black hole solutions in higher dimensions
is very rich, although further explicit solutions are hard to come by.

The failure of the no-soliton theorem became apparent after the
construction of the `bubbling' spacetimes in
supergravity~\cite{Bena:2007kg} (see~\cite{Gibbons:2013tqa} for a
discussion of this).  In particular, there exist finite energy,
asymptotically flat, stationary spacetimes which are regular
everywhere and contain no black hole region. The simplest examples are
supersymmetric solutions to five-dimensional minimal supergravity
(Einstein--Maxwell theory with a Chern--Simons coupling). Such
spacetimes are topologically nontrivial and contain noncontracible 2-cycles, or
`bubbles', supported by magnetic flux. Indeed, such soliton spacetimes
do not exist in vacuum gravity.

The existence of bubbling spacetimes leads to a further violation of
black hole uniqueness. This is because one can envisage a black hole
sitting in a bubbling spacetime. Indeed, the first law of black hole
mechanics is modified by flux terms that couple to the
bubbles~\cite{Kunduri:2013vka}.  Supersymmetric solutions describing a
spherical black hole in an asymptotically flat bubbling spacetime can
be constructed explicitly~\cite{Kunduri:2014iga}.  Interestingly, this
leads to an entropy enigma which raises questions for the microscopic
description of black holes in string
theory~\cite{Horowitz:2017fyg}. Furthermore, these techniques have led
to the construction of the first example of an asymptotically flat
black hole with lens space topology, namely $L(2,1)$, termed a black
lens~\cite{Kunduri:2014kja}. Subsequently, black lenses with more
general horizon topology $L(p,1)$ have been
constructed~\cite{Tomizawa:2016kjh}. Thus, even in five-dimensional
spacetimes, the moduli space of black hole solutions is now expected
to be large and complicated.

A number of results have been derived that help constrain the
topology and symmetry of asymptotically flat black hole spacetimes,
for a review see~\cite{Hollands:2012xy}. Topological censorship
implies the domain of outer communication (DOC) must be simply
connected~\cite{Friedman:1993ty}. The horizon topology theorem states
that spatial sections of the horizon admit positive scalar curvature,
which in five dimensions only allows $S^3$, $S^1\times S^2$,
$S^3/\Gamma$ and connected sums thereof~\cite{Galloway:2005mf}.  The rigidity
theorem implies that a stationary and rotating black hole must also be
axisymmetric and thus possess an isometry group
$\mathbb{R} \times U(1)$~\cite{Hollands:2006rj}.  Of course, these are
all necessary conditions which must be satisfied; what is unclear is
whether black hole solutions to Einstein's equations which realise all
the above topology and symmetry constraints actually exist.  The key
question is: what is the moduli space of black hole solutions with a
given topology and/or symmetry? There are essentially no results which
address this question, except for static black hole solutions to Einstein--Maxwell theory for
which a uniqueness theorem has been
established~\cite{Gibbons:2002bh,Gibbons:2002ju,Gibbons:2002av,Kunduri:2017htl}.

In fact, the known explicit solutions possess more rotational symmetry
than that guaranteed by the rigidity theorem. In particular, the
five-dimensional solutions possess $\mathbb{R} \times U(1)^2$-symmetry
and therefore belong to the class of generalised Weyl
solutions~\cite{Emparan:2001wk, Harmark:2004rm}. As for
four-dimensional stationary and axisymmetric spacetimes the Einstein
equations become integrable and solutions can be classified in terms
of a `rod structure'. The rod structure is essentially a specification
of how the $U(1)^2$-action degenerates on the axes of symmetry and
given this data one can determine the spacetime and horizon
topology. Indeed, by exploiting this structure a uniqueness theorem,
which generalises the four-dimensional one, has been proven. This
states that five-dimensional, asymptotically flat, stationary and
biaxisymmetric vacuum black hole solutions are uniquely specified by
their mass, angular momenta and rod structure~\cite{Hollands:2007aj,
  Hollands:2008fm}. This has been generalised to Einstein--Maxwell
theory and minimal supergravity, where one finds that the magnetic
flux on every noncontratible 2-cycle must also be
specified~\cite{Hollands:2007qf, Tomizawa:2009tb, Armas:2009dd,
  Yazadjiev:2011fg, Armas:2014gga}.\footnote{Analogous results for
  asymptotically Kaluza--Klein solutions have also been
  obtained~\cite{Yazadjiev:2009nm, Yazadjiev:2010uu} (see
  also~\cite{Haas:2015xmc, Haas:2017kdp}).} However, in contrast to
the four-dimensional case, the rod structure in five dimensions can be
arbitrarily complicated in principle.  What is not understood, is the
existence problem: for what rod structures do suitably regular black
hole solutions actually exist? Therefore, even the classification of
five-dimensional stationary black holes with biaxial symmetry remains
open (although see~\cite{Khuri:2017xsc} for recent progress).

Of course, there are other types of symmetry assumptions which help
simplify the construction of solutions. The classification of
supersymmetric solutions has been well studied. Most of these works
consist of local classifications. That is, deriving local constraints
on the geometry arising from the existence of a suitable Killing
spinor. However, what has been largely unstudied is a global analysis
of supersymmetric solutions.  Thus, a natural question presents
itself: Can we classify all supersymmetric soliton and black hole
solutions in five dimensions? Clearly, this requires a global analysis
of suitably general supersymmetric solutions.  Previously, a
uniqueness theorem for supersymmetric spherical topology black holes
in five-dimensional supergravity was
proven~\cite{Reall:2002bh,Gutowski:2004bj}, showing that the only
solution was the BMPV black hole \cite{Breckenridge:1996is}. However, due to an overly
restrictive assumption (the supersymmetric Killing field is
strictly timelike in the DOC) it excluded the recently constructed black holes in
bubbling spacetimes and black lenses~\cite{Kunduri:2014iga,Kunduri:2014kja,Tomizawa:2016kjh}. The analogous result in four
dimensions shows that the Majumdar--Papapetrou solutions are the most
general supersymmetric black holes in Einstein--Maxwell
theory~\cite{Chrusciel:2005ve}.

In fact, the recent new examples of supersymmetric black hole
solutions to five-dimensional supergravity all possess an
$\mathbb{R} \times U(1)^2$-symmetry and hence are in the class of Weyl
solutions (coupled to Maxwell field). Indeed, one
can assign them a rod structure which thus demonstrates that solutions
with more nontrivial rod structures do indeed exist in this case.
Therefore, an even simpler question presents itself: can we classify
all supersymmetric and biaxisymmetric soliton and black hole solutions
in five dimensions?  The purpose of this paper it is to show this is
indeed possible.

We will work in the simplest theory where such solutions exist,
namely five-dimensional minimal supergravity. Supersymmetric solutions to this
theory have been extensively
studied~\cite{Gauntlett:2002nw}. Generically, they are timelike
fibrations over a hyper-K\"ahler base. Remarkably, it was shown that if
the base is a Gibbons--Hawking space, the full local form of the
solution can be determined in terms of harmonic functions on an
auxiliary $\mathbb{R}^3$. The known supersymmetric black hole solutions
(including black rings~\cite{Elvang:2004rt,Gauntlett:2004wh} and black
lenses) and soliton solutions~\cite{Bena:2005va,Berglund:2005vb} belong to this class and are constructed from harmonic functions of
multi-centred type. 

One of our main results is the following classification theorem, the
complete statement of which is given in Theorem \ref{classthm}.

\begin{theorem} Consider an asymptotically flat, supersymmetric and
  biaxisymmetric solution to five-dimensional minimal supergravity
  with a globally hyperbolic domain of outer communication,
  possibly containing a black hole. Then, the solution must have
  a Gibbons--Hawking base and the associated harmonic functions are of
  multi-centred type.
\end{theorem}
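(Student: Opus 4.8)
The plan is to combine the well-known local form of supersymmetric solutions with a global analysis of the $\mathbb{R}\times U(1)^2$ action, as advertised in the abstract. First I would invoke the local classification of supersymmetric solutions of five-dimensional minimal supergravity: the supersymmetric Killing field $V$ is either globally null or timelike on an open dense set. Asymptotic flatness, together with the BPS relations which force $V$ to approach the canonical time translation at spatial infinity (hence to be timelike there), excludes the globally null case (such solutions are plane-fronted waves or G\"odel-type and cannot be asymptotically Minkowski). Thus on the open dense set where $f\neq 0$ the metric is a timelike fibration $\ds^2=-f^2(\dt+\omega)^2+f^{-1}h$ over a four-dimensional hyper-K\"ahler base $(B,h)$, with the Maxwell field similarly fixed by $f$, $\omega$ and a self-dual two-form on $B$. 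Since the solution is asymptotically flat, $V$ coincides with the time translation at infinity and therefore commutes with the biaxial Killing fields $m_1,m_2$.

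Next I would descend the rotational symmetry to the base: $m_1,m_2$ reduce to commuting Killing fields $\hat m_1,\hat m_2$ of $(B,h)$ generating an effective $U(1)^2$ action. Each $\hat m_a$, being Killing and orientation-preserving, maps the (globally defined, three-dimensional) space of parallel self-dual two-forms to itself by an $SO(3)$ transformation, so the induced homomorphism $U(1)^2\to SO(3)$ has image a torus of dimension at most one; its kernel therefore contains a circle subgroup that acts triholomorphically on $B$. A four-dimensional hyper-K\"ahler manifold with a triholomorphic circle action is Gibbons--Hawking, so locally $h=H^{-1}(\dps+\chi)^2+H\,\delta_{ij}\,\d x^i\d x^j$ with $H$ harmonic on an open subset of $\mathbb{R}^3$ and $\nabla\times\chi=\nabla H$; feeding this back into the supersymmetry equations reproduces the local form, known from the literature, in which the full solution is written in terms of four harmonic functions $H,K,L,M$ on this $\mathbb{R}^3$.

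The heart of the argument is then the global step. The circle in $U(1)^2$ complementary to the triholomorphic one descends to an isometry of the Gibbons--Hawking space which fixes an axis, hence acts on $\mathbb{R}^3$ as a rotation, so $H,K,L,M$ are axisymmetric. I would then run the rod-structure analysis for the $U(1)^2$ action on the asymptotically flat, globally hyperbolic domain of outer communication: the orbit space is a half-plane whose boundary is a finite union of rods, each either a component of an axis of the spatial $U(1)^2$ or a horizon. Smoothness of the spacetime on the interior of the domain of outer communication forces $H$ to be finite and nonzero there and $K,L,M$ to be regular, so the four functions are harmonic on $\mathbb{R}^3$ away from a discrete set placed, by axisymmetry, on the axis of rotation. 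Working rod by rod — matching to $\mathbb{R}^{1,2}\times\mathbb{R}^2$ along axis rods and to the classified supersymmetric near-horizon geometries along horizon rods — one shows the harmonic functions carry no singularity along a rod itself but at worst simple poles at the finitely many isolated points where the Gibbons--Hawking fibre degenerates (the centres, some of which may be horizons), while asymptotic flatness fixes their leading behaviour at infinity ($H$ tending to a constant or to $|x|^{-1}$, $K\to0$, $L$ to a constant, $M\to0$ at appropriate rates) and forces $\mathbb{R}^3$ to be complete. This is exactly the statement that $H,K,L,M$ are of multi-centred type, and the same rod analysis simultaneously pins down the allowed horizon topologies $S^3$, $S^1\times S^2$ and $L(p,1)$.

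The hard part will be this last step: converting smoothness of the five-dimensional metric and gauge field — along the axes, across the horizon, and in the transition region where $V$ degenerates to a null field — into the assertion that $H,K,L,M$ have only isolated point singularities on a complete $\mathbb{R}^3$, finitely many of them. One must exclude, for example, line singularities of $K,L,M$ supported on a rod, or a degeneration of the Gibbons--Hawking structure on a codimension-one surface, using the rod data together with the requirement that $f$ and $\omega$ assemble into a smooth Lorentzian metric, and one must control the behaviour at the horizon, where $f\to0$, via the classification of supersymmetric near-horizon geometries. Getting this combined geometric and analytic bookkeeping right at each rod and at each junction is where the substance of the theorem lies; the reduction to the Gibbons--Hawking / harmonic-function form in the earlier steps is essentially structural.
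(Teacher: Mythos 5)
Your overall route---restricting to the timelike class by asymptotic flatness, descending the $U(1)^2$ to the hyper-K\"ahler base to obtain a Gibbons--Hawking structure whose triholomorphic circle sits inside the torus (your argument via the $SO(3)$ rotation of the parallel self-dual $2$-forms is essentially the content of the Gibbons--Ruback result the paper cites), axisymmetric harmonic functions, and then an orbit-space analysis confining singularities to isolated simple poles on the axis---is the same as the paper's. The genuine gap is precisely in the step you defer as ``the hard part'', and your placeholder for it contains a claim that is false as stated: smoothness of the DOC does \emph{not} force $H$ to be nonzero in the interior; in the known bubbling solutions $H$ (equivalently $f$) vanishes on smooth timelike ``evanescent ergosurfaces''. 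The mechanism the paper actually uses is the invariant $N=-\det$ of the Gram matrix of $V$ and $\partial_\psi$, which satisfies $N=f/H=(K^2+HL)^{-1}$. Using the global results of Chru\'sciel et al.\ that $\det G_{AB}<0$ on the DOC away from the axes and that $\mathrm{span}\{V,m_1,m_2\}$ is timelike everywhere, one shows $N>0$ in the interior \emph{and} on every axis rod; since $H,K,L,M$ are smooth spacetime/gauge invariants divided by $N$, their singularities are thereby confined to the corners (where $\partial_\psi=0$, so $N=0$ while $f\neq0$, forcing $H=f/N$ to blow up as an isolated simple pole, with the poles and coefficients of $K,L,M$ then fixed by smoothness of $\Psi$, $f^{-1}$, $\omega_\psi$) and to the horizon. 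Without identifying this invariant and the global input behind its positivity, ``smoothness forces regularity'' is not an argument, and you have not excluded that $N$ vanishes somewhere in the interior or along a rod, which is exactly the line-singularity scenario you say must be ruled out.

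Two further slips in the horizon part of the plan. First, supersymmetric horizons are degenerate, so in the orbit space they are isolated points (in fact asymptotic ends of the orbit-space metric), not boundary rods of finite length; the paper establishes this from the Gaussian-null/orbit-space expansion before matching to Reall's near-horizon classification, and your phrase ``along horizon rods'' would lead the analysis astray. Second, your criterion ``simple poles only where the Gibbons--Hawking fibre degenerates'' fails for the $S^1\times S^2$ horizon: there $H$ is smooth ($h_0=0$, $\partial_\psi$ nowhere vanishing on the horizon) yet $K,L,M$ still acquire simple poles, so the pole structure at horizons must come from the near-horizon data, not from fibre degeneration. A smaller point: axisymmetry of $K$ (hence of $L,M$) does not follow merely from the complementary circle acting as a rotation of $\mathbb{R}^3$; one needs the magnetic potential $\Psi$ to be axisymmetric, which the paper obtains from constancy of $\iota_{\partial_\phi}\iota_{\partial_\psi}F$ (Bianchi identity) together with its vanishing at the fixed-point axes guaranteed by asymptotic flatness.
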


In the absence of a black hole, the above provides a classification of
the bubbling soliton spacetimes in this symmetry class. This appears
to be the first classification theorem known for such spacetimes. In
the black hole case we find a rich moduli space of solutions,
corresponding to bubbling spacetimes containing spherical black holes,
black rings, or black lenses.

The proof consists of combining the local constraints from
supersymmetry with global constraints for stationary and
biaxisymmetric solutions. The main structure of the proof is as
follows. As noted above supersymmetry determines the local form of the
solution in terms of a set of harmonic functions on an auxiliary
$\mathbb{R}^3$~\cite{Gauntlett:2002nw}. This provides a key
simplification which is not available in vacuum gravity.  Thus, the
proof reduces to a global analysis of this class of solutions.  The
structure of the orbit space of the domain of outer communication of Weyl solutions is that of a 2d
manifold, with a boundary which corresponds to horizons (if there
is a black hole) or axes on which certain linear combinations of the
biaxial Killing fields vanish, and corners where both biaxial
Killing fields vanish~\cite{Hollands:2007aj, Hollands:2008fm}.  Using
the known classification of near-horizon
geometries~\cite{Reall:2002bh} allows us to prove that a smooth
horizon corresponds to an isolated point on the boundary of the orbit
space and furthermore shows that harmonic functions possess (at most)
a simple pole at the horizon.  Requiring smoothness of the DOC and the
axes, together with some global constraints~\cite{Chrusciel:2008rh},
also shows that the harmonic functions are non-singular in the
interior of the orbit space and everywhere else on its boundary,
except at the corners where they possess simple poles.  Thus, the
number of simple poles the harmonic functions may possess is given by
the number of horizons plus the number of corners of the orbit space.

We will also perform a detailed analysis of the possible rod
structures and show that they are constrained by supersymmetry. In
particular, this constrains the allowed horizon topologies to be one
of $S^3$, $S^1\times S^2$ or a lens space $L(p,1)$. Interestingly, this
provides a refinement of the topologies allowed by biaxial symmetry;
in particular it rules out $L(p,q)$ with $q \neq 1$ (mod $p$).
Nevertheless, an infinite number of possible rod structures
remains. We construct the explicit solution for every rod structure
and determine the set of conditions required for the solution to be
smooth and causal (on and outside a horizon). This reveals a very
large moduli space of solutions both with and without a black hole and we give a general formula for the dimension of the moduli space.
We find that for $n$-centred solutions, the number of inequivalent rod
structures grows with $n$ and provide a formula for counting these for
each horizon topology. 

The organisation of this paper is as follows. In section
\ref{sec:susy} we examine local and global constraints on
supersymmetric solutions imposed by the existence of biaxial symmetry commuting with supersymmetry.  In section \ref{sec:hor} and \ref{sec:axes} we examine the constraints imposed by the existence of a smooth event horizon and axes of symmetry, respectively.  In section \ref{sec:modspace} we present our main classification theorem and examine the moduli space of solutions. We end with a discussion in section \ref{sec:dis}.

\section{Supersymmetric solutions in five dimensions} 
\label{sec:susy}

The bosonic sector of five-dimensional minimal supergravity consists of a spacetime metric $g$ and Maxwell field $F$ and the field equations are those for Einstein--Maxwell theory coupled to a Chern--Simons term. The general form of supersymmetric solutions of ungauged minimal
supergravity is well understood~\cite{Gauntlett:2002nw}.  Given a Killing spinor one can construct a smooth function $f$ and a Killing vector $V$, each quadratic in the spinor, such that $V \cdot V = -f^2$.  Thus $V$ must be nonspacelike so the classification divides into solutions where $V$ is either null or timelike (at least in some region). The solutions where $V$ is null can be fully determined and correspond to plane wave and pp-wave spacetimes. We will be interested in asymptotically flat solutions, possibly containing a black hole, which must be in the timelike class. 

In any
region where the supersymmetric Killing field $V$ is timelike the spacetime metric takes the general form
\begin{equation}
\td s^2 = -f^2( \td t + \omega)^2 + f^{-1} h \, ,  \label{g}
\end{equation} 
where $V = \partial_t$, $h$ is a hyper-K\"ahler metric on the
orthogonal space $B$ and $\omega$ is a 1-form on
$B$.  As we explain later, under the additional assumption of biaxial
symmetry, the base must be a Gibbons--Hawking (GH) space.  It is thus convenient to first consider solutions with a GH base.

\subsection{Gibbons--Hawking base}
\label{sec:GH}

In this section we take the base metric to be a Gibbons--Hawking space,
however we will not assume biaxial symmetry, so our analysis is valid
in the general class of GH solutions. The GH metric is
\begin{equation}
h = H^{-1} (\td \psi + \chi_i \td x^i)^2 + H \td x^i \td x^i \, ,  \label{GH}
\end{equation} 
where $x^i$, $i=1,2,3$, are Cartesian coordinates on $\mathbb{R}^3$, the
function $H$ is harmonic on $\mathbb{R}^3$, $\chi$ is a 1-form on
$\mathbb{R}^3$ satisfying 
\begin{equation}
\star_3 \td \chi = \td H\label{dchi}
\end{equation}
and $\partial_\psi$ is the triholomorphic Killing field.

As is well known, the {\it local} form of the supersymmetric solution
can then be completely determined under the assumption that the full
solution is invariant under the triholomorphic Killing
field~\cite{Gauntlett:2002nw}. Such solutions are then specified by 4
harmonic functions $H,K,L,M$, in terms of which
\begin{align}
  f^{-1} &= H^{-1} K^2 + L \, , \label{Ldef} \\
  \omega &= \omega_\psi(\td \psi + \chi_i \td x^i) + \hat{\omega}_i
           \td x^i \, ,
\end{align}
 where 
\begin{align}
  \omega_\psi &= H^{-2} K^3 + \frac{3}{2}H^{-1}KL + M \,
                , \label{Mdef} \\
  \star_3 \td \hat\omega &= H \td M - M \td H + \frac{3}{2}(K \td L -
                           L \td K) \, .\label{domegahat}
\end{align} 
The Maxwell field is then
 \begin{equation}
 \label{max}
 F =dA = \frac{\sqrt{3}}{2} \td \left[ f( \td t + \omega) - K H^{-1} (\td \psi+ \chi_i \td x^i)  - \xi_i \td x^i\right]  \, ,
 \end{equation}
where the 1-form $\xi$ satisfies  
\be
\label{dxi}
\star_3 \td \xi = - \td K.
\ee

We wish to perform a global analysis of this family of local metrics. To this end, it will be useful to record the spacetime invariants
\begin{equation}\label{invariants}
  \begin{aligned}
  V \cdot V&=  g_{tt} = -f^2= - \frac{H^2}{(K^2 +HL)^2}\,, \\
  \partial_\psi \cdot \partial_\psi &=  g_{\psi\psi} = \frac{1}{fH} -
  f^2 \omega_\psi^2 = -\frac{4 H^2 M^2
    + 12 H K L M - 4 H L^3 + 8K^3 M - 3 K^2 L^2 }{4(HL +
    K^2)^2} \,, \\
  V \cdot \partial_\psi&=   g_{t\psi} = - f^2 \omega_\psi =- \frac{
    H^2 M + \tfrac{3}{2} HKL +  K^3}{(K^2 + HL)^2}\,,  \\
  A_t &= \frac{\sqrt{3}}{2} f \,, \qquad
  A_\psi = \frac{\sqrt{3}}{2}\left( f \omega_\psi - \frac{K}{H}\right)
  = \frac{\sqrt{3}}{2} \frac{HM + \tfrac{1}{2}KL}{HL + K^2}\,.
  \end{aligned}
\end{equation}
One can see that $A_t, A_\psi$ are invariants as follows.\footnote{Smoothness of $A_t$ also follows from the fact that $f$ is a spacetime invariant (a bilinear in the Killing spinor).}  First note $\mathcal{L}_V F= \mathcal{L}_{\partial_\psi} F=0$ imply $\td \iota_V F=\td \iota_{\partial_\psi} F=0$.  Therefore, for a simply connected spacetime (as we will be interested in) we deduce the existence of two globally defined functions $\Phi, \Psi$ satisfying
\begin{equation}
\iota_V F = \frac{\sqrt{3}}{2}\td \Phi, \qquad \iota_{\partial_\psi} F =  \frac{\sqrt{3}}{2}\td \Psi\,.    \label{PhiPsi}
\end{equation}
These functions $\Phi,\Psi$ are the electric potential and a magnetic potential respectively. From (\ref{max}) we can identify these potentials up to an additive constant as $A_t=  - \frac{\sqrt{3}}{2}\Phi$ and $A_\psi = - \frac{\sqrt{3}}{2}\Psi$, establishing these components of the gauge field are indeed spacetime invariants. Thus
\begin{equation}
\Phi= -f, \qquad \Psi =- f \omega_\psi  + KH^{-1}   \label{potentials}
\end{equation}
(note the former is true even without a GH base).  
In terms of these invariants the solution is
\begin{equation}\label{solution}
  \begin{aligned}
    \td s^2 &= g_{tt} (\td t +\hat{\omega}_i \td x^i)^2 + 2 g_{t\psi} (\td
    t+\hat{\omega}_i \td x^i) ( \td \psi+ \chi_i \td x^i) + g_{\psi\psi} (\td \psi+
    \chi_i \td x^i)^2+  \frac{H}{f} \td x^i \td x^i \,,\\
    A &= A_t (\td t +\hat{\omega}_i \td x^i) +A_\psi (\td \psi+\chi_i \td x^i) - \xi_i \td x^i \,.
  \end{aligned}
\end{equation}
The inverse metric can be written as
\begin{equation}
\label{invg}
  \begin{aligned}
    g^{tt} &= - \frac{H}{f} g_{\psi\psi}
    + \frac{f}{H} \hat \omega_i \hat{\omega}_i\,,  \quad
    &g^{t i} &= - \frac{f}{H} \hat{\omega}_i \,,\\
    g^{t\psi} &=  \frac{H}{f} g_{t\psi} + \frac{f}{H} \hat{\omega}_i \chi_i \,,
    &g^{\psi i} &= - \frac{f}{H} \chi_i\,, \\
    g^{\psi\psi} &= -\frac{H}{f} g_{tt}  + \frac{f}{H} \chi_i \chi_i \,,
    &g^{ij} &= \frac{f}{H} \delta_{ij}\,,
\end{aligned}
\end{equation}
and the determinant of the metric is
\begin{equation}
\sqrt{- \det g} = \frac{H}{f}  = K^2+HL\,.
\end{equation}
We now provide a spacetime interpretation of the harmonic
functions. 

First define the determinant of the matrix of inner products of the Killing fields $\partial_t, \partial_\psi$,
\begin{equation}\label{I}
  N \equiv - \begin{vmatrix} g_{tt} & g_{t\psi} \\ g_{t\psi}
    & g_{\psi\psi} \end{vmatrix}\,.
\end{equation}
This will be a key invariant in our analysis. From \eqref{invariants} it follows that
\begin{equation}\label{H}
N = \frac{f}{H}  \implies H= \frac{f}{N}\,.
\end{equation}
Next, we relate the harmonic functions $K,L,M$ to invariants as
follows. From the above,
\begin{equation}\label{K}
  K H^{-1} =  \Psi - \frac{g_{t\psi}}{f} \implies K = \frac{1}{N} \left(
  f \Psi-  g_{t\psi} \right)  \,.
\end{equation}
Using \eqref{Ldef} and \eqref{Mdef}, together with the above
expression for $K$ gives, after a little algebra,
\begin{align}
  L  &= \frac{1}{N} \left(f g_{\psi\psi} + 2 g_{t\psi} \Psi -  f
       \Psi^2 \right)\,,  \label{L} \\
  M &= \frac{1}{2 N}  \left( g_{\psi\psi} g_{t\psi} - 3 f \Psi
      g_{\psi\psi}- 3 \Psi^2 g_{t\psi} + f \Psi^3 \right)\,.  \label{M}
\end{align}
Observe that $K$ is only defined up to 
\begin{equation}
K \to K + c H \,, \label{Kshift}
\end{equation}
where $c$ is a constant corresponding to the integration constant for
$\Psi$. It follows that $L$, $M$ are defined up to the shifts
\begin{equation}
  L \to L - 2 c K - c^2 H \,, \qquad M \to  M- \tfrac{3}{2} c L +
  \tfrac{3}{2} c^2 K + \tfrac{1}{2} c^3 H \,. \label{LMshift}
\end{equation}
The following result will be useful for a global analysis of solutions
with a GH base.

\begin{lemma} \label{GHlemma} Let $(g,F)$ be a supersymmetric solution
  to minimal supergravity with a Gibbons--Hawking base, $H,K,L,M$
  the associated harmonic functions defined up to
  {\normalfont(\ref{Kshift},\,\ref{LMshift})}, and let $N$ be defined
  as in \eqref{I}.

  \begin{enumerate}
  \item If $H,K,L,M$ are smooth and
    \begin{equation}
      K^2+HL>0 \label{smooth1}\,,
    \end{equation}
    then $(g,F)$ is smooth, $g^{-1}$ exists and is smooth, and $N>0$.
  \item Conversely, if $(g,F)$ is smooth and $N>0$, then $H,K,L,M$ are
    smooth and obey \eqref{smooth1}.
  \end{enumerate}
\end{lemma}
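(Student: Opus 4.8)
The plan is to establish the two directions of the equivalence by tracking smoothness through the explicit formulas relating the harmonic functions to spacetime invariants, the key point being that $K^2+HL=\sqrt{-\det g}=H/f$ plays a dual role: it is the quantity whose positivity we assume in part 1, and it also appears as a denominator in the inverse metric and in the expressions \eqref{invariants}.

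\emph{Part 1.} Assume $H,K,L,M$ smooth with $K^2+HL>0$. First I would note $f^{-1}=H^{-1}K^2+L=(K^2+HL)/H$; wherever $H\neq 0$ this gives $f=H/(K^2+HL)$, which is smooth there, and the defining relation $N=f/H=1/(K^2+HL)$ from \eqref{H} shows $N>0$. The subtlety is the locus $H=0$: there one must check the metric \eqref{solution} and its inverse \eqref{invg} are still smooth. The natural route is to read off from \eqref{invariants} that $g_{tt},g_{t\psi},g_{\psi\psi}$ are each a ratio with denominator $(K^2+HL)^2>0$ and numerator polynomial in the (smooth) harmonic functions, hence all smooth; similarly $A_t,A_\psi$ from \eqref{invariants}, and $\hat\omega,\chi,\xi$ are smooth since they solve \eqref{domegahat}, \eqref{dchi}, \eqref{dxi} with smooth right-hand sides (these 1-forms are only locally defined, but smoothness is local). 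Then \eqref{solution} exhibits $g$ and $A$ as smooth tensors. For the inverse: from \eqref{invg} every component is a polynomial in $f/H$, $H/f$, $g_{\mu\nu}$, $\hat\omega_i$, $\chi_i$; since $H/f=K^2+HL>0$ and $f/H=1/(K^2+HL)>0$ are both smooth and nonvanishing, $g^{-1}$ is smooth, and together with $\det g=-(K^2+HL)^2\neq0$ this confirms $g$ is a genuine Lorentzian metric. Finally $N=1/(K^2+HL)>0$ as already noted.

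\emph{Part 2 (converse).} Assume $(g,F)$ smooth and $N>0$. By \eqref{H}, $H=f/N$; since $f=-\Phi$ is a spacetime invariant (bilinear in the Killing spinor, hence smooth — as recorded in the footnote) and $N$ is a smooth invariant built from the metric with $N>0$, $H$ is smooth. Then \eqref{K} gives $K=(f\Psi-g_{t\psi})/N$, and \eqref{L}, \eqref{M} give $L,M$ as rational expressions in $f,\Psi,g_{t\psi},g_{\psi\psi}$ with denominator a positive power of $N>0$; here $\Psi=-\tfrac{2}{\sqrt3}A_\psi$ is the globally defined magnetic potential from \eqref{PhiPsi}, smooth because $A_\psi$ is a spacetime invariant. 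Hence $H,K,L,M$ are all smooth. They are harmonic because the local analysis of \cite{Gauntlett:2002nw} guarantees the supersymmetric solution with GH base is built from harmonic functions — alternatively one observes the field equations force harmonicity wherever the expressions make sense, and smoothness extends this across any apparent singular locus. The inequality \eqref{smooth1} follows from $K^2+HL=H/f$ combined with $H=f/N$: indeed $K^2+HL=f^{-1}\cdot(f/N)=1/N>0$. (One should double-check the identity $K^2+HL=H/f$ holds as an off-shell algebraic consequence of \eqref{Ldef}, which it does: $HL+K^2=H(f^{-1}-H^{-1}K^2)+K^2=Hf^{-1}$.)

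\textbf{Main obstacle.} The routine algebra is harmless; the real content is the behaviour at the zero set of $H$ in part 1 (and the zero set of $N$, i.e. the Killing horizons or axes where the Killing fields degenerate, in part 2). One must be careful that ``smooth'' is meant in appropriate coordinates: the coordinates $(t,\psi,x^i)$ degenerate precisely where $N=0$ or $H=0$, so the statement should be read as asserting smoothness of the metric as a tensor field on the manifold, using the invariant formulas \eqref{invariants}–\eqref{invg} which manifestly have no bad denominators once $K^2+HL>0$. I would make this explicit rather than leaving it implicit, since it is exactly the point that makes the lemma a useful bridge between the local supergravity classification and the global/boundary analysis carried out in the later sections.
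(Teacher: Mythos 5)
Your proposal is correct and follows essentially the same route as the paper: part 1 via smoothness of the 1-forms (from their defining first-order equations) together with the invariant expressions \eqref{invariants}, \eqref{solution}, \eqref{invg} whose only denominators are positive powers of $K^2+HL=N^{-1}$, and part 2 directly from \eqref{H}--\eqref{M}. The extra detail you supply (the off-shell identity $K^2+HL=Hf^{-1}$ and the explicit treatment of the $H=0$ locus) is a harmless elaboration of the paper's argument, modulo the minor slip that the chart $(t,\psi,x^i)$ does not in fact degenerate at $H=0$ — only the timelike decomposition \eqref{g} does.
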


\begin{proof}
  Smoothness of $H,K,L,M$ implies that the 1-forms
  $\chi, \omh, \xi$ must be smooth (otherwise their exterior
  derivatives (\ref{dchi}, \ref{domegahat}, \ref{dxi}) would not be).  Then, from \eqref{invariants},
  \eqref{solution}, \eqref{invg} and \eqref{smooth1} it follows that: (i) all
  components of the metric are smooth, (ii) a smooth inverse metric
  exists, and (iii) the Maxwell field is smooth.  Furthemore, equation
  \eqref{H} shows that 
  \be
  N^{-1} = K^2+HL\,,
  \ee 
  so \eqref{smooth1}
  is equivalent to $N>0$. Therefore we have established part
  1 of the Lemma. Part 2 immediately follows from
  (\ref{H},\,\ref{K},\,\ref{L},\,\ref{M}).
\end{proof}

\noindent {\bf Remarks} 
\begin{enumerate}
\item As we show in the next section, in the context of asymptotically
  flat, supersymmetric and biaxisymmetric spacetimes with a globally
  hyperbolic domain of outer communication, the invariant $N \geq 0$
  on and outside any black hole region and vanishes only in two
  instances: (i) at fixed points of the triholomorphic Killing field
  $\partial_\psi=0$, or (ii) on an event horizon. We will analyse (i)
  and (ii) later making more detailed use of the biaxial symmetry
  together with certain global constraints. 

\item We will require the spacetime to be stably causal,
  \begin{equation}
    g^{tt}<0 \label{causality}\,,
  \end{equation}
  which means $t$ is a time function and is equivalent to the absence
  of CTC. In particular, as can be seen from \eqref{invg}, stable
  causality and $N>0$ imply that $g_{\psi\psi}>0$.

\item A priori, the metric in the chart $(t, \psi, x^i)$ is only
defined in a region where $f \neq 0$ , i.e. where $V$ is
timelike. However, Lemma \ref{GHlemma} shows that in the region $N>0$ the metric
extends to a smooth solution even if $f=0$. In fact, \eqref{H} shows
that in the region $N>0$, the zero set of $f$ is precisely the locus
$H=0$. The zero set of $f$ is a smooth hypersurface in the spacetime if $\td f \neq 0$ everywhere on the zero set;\footnote{It is possible that the zero set of $f$ is not always a hypersurface; we will not analyse this possibility here. In any case, our analysis will not assume this. 
} then, from \eqref{H}, it follows that in the region $N>0$ we may use the harmonic function $H$ as
a coordinate in a neighbourhood of the zero set of $f$. In particular, we may introduce
coordinates $(H, y^A)$ on $\mathbb{R}^3$ valid near $H=0$ so that 
\begin{equation}
  \td x^i \td x^i = \varrho ^2\td H^2 + \tilde{g}_{AB} \td y^A \td y^B \,.
\end{equation}
The fact that $H$ is harmonic implies that
$\sqrt{\det{\tilde{g}}} = \varrho F(y)$ where $F$ is an arbitrary
function.  By a coordinate change $y^A\to y'^A(y)$ we may arrange
$\sqrt{\det{\tilde{g}}}|_{H=0} = F(y)$ so that
$\varrho = \sqrt{\det{\tilde{g}}}/\sqrt{\det{\tilde{g}}}|_{H=0} =
1+O(H)$.  The 1-form $\chi$ now satisfies
$\td \chi = \varrho^{-1} \tilde{\epsilon}$, where $\tilde{\epsilon}$ is
the volume form of $\tilde{g}$, so
$\chi = (\tilde{\chi}_A(y) +O(H)) \td y^A$ where
$\td \tilde{\chi} = \tilde{\epsilon}|_{H=0}$.  The spacetime metric
and gauge field induced on $H=0$ are
\begin{equation}
  \begin{aligned}
    \td s^2|_{H=0} &= - \frac{2}{K} (\td t+ {\hat{\omega}}_A \td
    y^A) ( \td \psi+ \tilde{\chi}_A \td y^A) + \frac{ 3L^2- 8 KM}{4
      K^2} (\td \psi+ \tilde{\chi}_A \td y^A)^2
    + K^2 \tilde{g}_{AB}\td y^A \td y^B  \,,\\
    A|_{H=0} &= \frac{\sqrt{3}}{2} \Big(\frac{L}{2K} ( \td \psi+ \tilde{\chi}_A \td y^A) -
    {\xi}_A \td y^A \Big)\,.
\end{aligned}
\end{equation}
Note that in such a region \eqref{smooth1} is satisfied, so $K$ must be
non-vanishing at $H=0$. Thus we see that $H=0$ is a  smooth timelike hypersurface. Therefore $H=0$ is an `evanescent' ergosurface, i.e. a timelike
hypersurface on which $f=0$~\cite{Niehoff:2016gbi}. In fact, it has been
shown~\cite{Niehoff:2016gbi} that any supersymmetric solution to
minimal supergravity is smooth at an evanescent ergosurface if and
only if the hyper-K\"ahler base is ambipolar (according to their
definition) and $\omega$ has a particular behaviour near the
ergosurface. 
\end{enumerate}

We will be interested in asymptotically flat solutions.   For
orientation, Minkowski space is given by  
\be
H=\frac{1}{r}\,, \qquad L=1\,, \qquad K=M=0\,, \qquad \chi = \cos
\theta \td \phi\,, \qquad \omh=\xi = 0\,,
\ee
where we have written the $\mathbb{R}^3$ base in polar coordinates
$(r, \theta, \phi)$. Upon the
coordinate change $r= \rho^2/4$ the metric is then
\begin{equation}
  \td s^2_{\text{Mink}}  = -\td t^2+ \td \rho^2 + \tfrac{1}{4} \rho^2 [ (\td
  \psi + \cos\theta \td \phi)^2  +\td \theta^2+\sin^2\theta \td \phi^2 ]\,,
\end{equation}
so the spatial $\mathbb{R}^4$  is in polar coordinates with the round $S^3$
written in terms of Euler angles $(\theta, \phi, \psi)$. These are
related to independently $2\pi$-periodic angles $\phi^\pm$ in
orthogonal planes via
\begin{equation}
\phi^\pm= \tfrac{1}{2} (\phi \mp \psi)\,, \qquad v_\pm
\equiv \partial_{\phi^\pm} = \partial_\phi
\mp \partial_\psi \,.  \label{vpm}
\end{equation}
Note $v_+=0$ on $\theta=0$ and $v_-=0$ on $\theta=\pi$ represent the
two axes which extend out to infinity. In terms of Euler angles the
periodicities are generated by the identifications $(\psi, \phi) \sim
(\psi+4\pi, \phi)$ and $(\psi, \phi) \sim (\psi+2\pi, \phi+2\pi)$. 

The asymptotic expansion of an asymptotically flat spacetime is
particularly simple for this class of metrics. Requiring that the
harmonic functions $H$, $K$, $L$, $M$ asymptotically approach those of
Minkowski spacetime (up to the freedom
(\ref{Kshift},\,\ref{LMshift})) implies they can be written as a standard
multipole expansion
\be
\begin{aligned}
  H &=  \frac{1}{r}+ \sum_{l \geq 1, m}  h_{l m}Y_{l m}(\theta, \phi)
      r^{-l-1}\,,  \label{Hasympt} \\ 
  K &=\frac{k_\infty}{r}+ \sum_{l \geq 1, m}  k_{l m}Y_{l m}(\theta,
      \phi) r^{-l-1}\,,  \\
  L &= 1+\frac{\ell_\infty}{r}+ \sum_{l \geq 1, m}  \ell_{l m}Y_{l
      m}(\theta, \phi) r^{-l-1}\,, \\ 
  M&= m+\frac{m_\infty}{r}+ \sum_{l\geq 1, m}  m_{l m}Y_{l m}(\theta,
     \phi) r^{-l-1}\,,
\end{aligned}
\ee
where $k_\infty, \ell_\infty, m_\infty, m$ are constants. The constant $k_\infty$ is pure gauge and can be set to any value using \eqref{Kshift}. We have included a constant $m$ in $M$ in order not to fix the corresponding gauge freedom \eqref{LMshift}. The above form for the harmonic functions then determines the asymptotic expansion of the spacetime.  In particular, this implies $f = 1+ O(r^{-1})$ so  the supersymmetric Killing vector $V$ is timelike near infinity, i.e. $V$ is the stationary Killing field.   Furthermore, $\omega_\psi = m+\tfrac{3}{2} k_\infty + O(r^{-1})$ so we set 
\begin{equation}
m = - \tfrac{3}{2} k_\infty \,,  \label{masympt}
\end{equation}
which is indeed invariant under the gauge transformations \eqref{Kshift} and \eqref{LMshift}.
When integrating for $\chi$ and $\omh$ we will also set the integration constants
so that $\chi = \cos \theta \td \phi+ O(r^{-1})$ and $\omh= O(r^{-1})$. Without loss of generality, we may always make such choices for asymptotically flat solutions. It is now clear that the smoothness condition (\ref{smooth1}) and causality condition (\ref{causality}) are satisfied in the asymptotic region.

\subsection{Supersymmetric and biaxisymmetric spacetimes}

We now impose that  a (timelike) supersymmetric background $(M,g,F)$ also has biaxial symmetry. As we will explain, this implies the solution must be of the type studied in the previous section (i.e. Gibbons--Hawking type).  

In
particular, we assume: (i) there is a $U(1)^2$-isometry, generated by
Killing fields $m_i$, $i=1,2$, whose orbits are $2\pi$ periodic; (ii)
$[ V, m_i ]=0$ where $V$ is the supersymmetric Killing field; (iii)
$\mathcal{L}_{m_i} F=0$. Clearly, the biaxial Killing fields are
defined up to $m_i \to A_{ij} m_j$ where $A_{ij}$ is an $SL(2,
\mathbb{Z})$ matrix. We will sometimes denote the Killing fields
collectively by $K_A$, where $A=0,1,2$ and $K_0=V$ and $K_i=m_i$. 

The above conditions mean the spacetime is stationary and biaxisymmetric, where the supersymmetric Killing field $V$ is the stationary Killing field. In the context of (electro-)vacuum gravity these correspond to the well studied (generalised) Weyl solutions. Therefore we are dealing with supersymmetric Weyl solutions. The additional assumption of supersymmetry places extra local and global constraints on the solution which we will now explore.

Firstly, supersymmetry together with biaxial symmetry places strong constraints on the local form of the solution, as follows.
\begin{lemma}
  Consider an asymptotically flat, supersymmetric and biaxisymmetric
  solution to minimal supergravity with supersymmetric Killing field
  $V=\partial_t$. Then the hyper-K\"ahler base must be a
  Gibbons--Hawking metric \eqref{GH} whose triholomorphic Killing field
  $\partial_\psi$ leaves the full solution invariant. Furthermore, the
  harmonic functions $H,K,L,M$ on $\mathbb{R}^3$ are axisymmetric and
  the 1-forms can be written as\footnote{For notational simplicity we
    denote both the $1$-forms and their $\phi$-components by $\chi$,
    $\omh$, $\xi$. Distinction between these will be apparent from
    context, or clarified if necessary.}
\begin{equation}
  \chi = \chiph(\rho, z) \td \phi\,, \qquad \omh = \omhph(\rho,z) \td
  \phi\,, \qquad \xi  = \xi(\rho,z) \td \phi\,,    \label{1forms}
\end{equation} 
where $(\rho, z, \phi)$ are cylindrical coordinates on $\mathbb{R}^3$.
In particular, in the coordinates $y^A= (t, \psi, \phi)$ and  $z^a=(\rho,z)$, the spacetime metric \eqref{g} then takes the block diagonal form
\begin{align}
\td s^2 = G_{AB}(z^a)\td y^A \td y^B+  q_{ab}(z^a) \td z^a \td z^b\, ,   \label{block}
\end{align}
where
\be
G_{AB} \td y^A \td y^B  = - f^2 ( \td t + \omega_\psi (\td\psi+\chi \td\phi) + \omhph \td\phi)^2 + f^{-1} H^{-1} ( \td\psi + \chi \td\phi)^2 + \frac{H \rho^2}{f} \td\phi^2 
\ee
 is the inner product on the space spanned by the Killing fields $\{\partial_t, \partial_\psi, \partial_\phi \}$, and
\be
q_{ab} \td z^a \td z^b =  \frac{H}{f} ( \td\rho^2 + \td z^2) 
 \label{orbit}
 \ee
is a metric on surfaces orthogonal to the space of Killing fields.
The determinant of $G_{AB}$ is
\begin{equation}
\det G_{AB}  = -\rho^2  \, ,   \label{rho}
\end{equation}
so $(\rho,z)$ are in fact standard Weyl coordinates. 
\end{lemma}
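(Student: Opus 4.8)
The plan is to show that biaxial symmetry forces the hyper-K\"ahler base of \eqref{g} into Gibbons--Hawking form, which makes the analysis of Section~\ref{sec:GH} available, and then to read off the stated normal form. First I would argue that the biaxial Killing fields descend to the base. Since each $m_i$ is Killing and $[V,m_i]=0$ we have $\mathcal{L}_{m_i}(V\cdot V)=0$, hence $\mathcal{L}_{m_i}f=0$; choosing a time coordinate adapted to the symmetry ($m_i(t)=0$), the dual one-form $V^\flat=-f^2(\td t+\omega)$ gives $\mathcal{L}_{m_i}\omega=0$, and therefore $\mathcal{L}_{m_i}h=0$. Thus the $m_i$ project to a pair of commuting Killing fields of the hyper-K\"ahler four-manifold $(B,h)$, which near infinity is flat $\mathbb{R}^4$ with the $m_i$ the standard integer rotations of two orthogonal planes.

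The key step is to extract a triholomorphic Killing field. A Killing field of $(B,h)$ acts isometrically on the two-sphere of parallel complex structures, so $\mathcal{L}_{m_i}\omega^{(a)}=(\Omega_i)^{a}{}_{b}\,\omega^{(b)}$ for some $\Omega_i\in\mathfrak{so}(3)$; differentiating and using that the hyper-K\"ahler forms $\omega^{(a)}$ are closed and pointwise independent shows $\Omega_i$ is constant, and $[m_1,m_2]=0$ forces $[\Omega_1,\Omega_2]=0$, so $\vec\Omega_1$ and $\vec\Omega_2$ are parallel. Since they are parallel and constant, some nonzero constant-coefficient combination $\partial_\psi$ of the $m_i$ is triholomorphic everywhere, and asymptotic flatness (where the $m_i$ are the standard rotations, whose triholomorphic part is precisely a $U(1)$) ensures $\partial_\psi$ generates a $U(1)$; being a constant combination of the $m_i$ it also satisfies $\mathcal{L}_{\partial_\psi}F=0$ and $\mathcal{L}_{\partial_\psi}\omega=0$, hence leaves the full solution invariant. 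The classical local form of a four-dimensional hyper-K\"ahler metric with a triholomorphic Killing field is \eqref{GH} (the base $\mathbb{R}^3$ being fixed by asymptotic flatness), so the analysis of Section~\ref{sec:GH} applies and $(g,F)$ is encoded in harmonic functions $H,K,L,M$.

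It remains to get axisymmetry and the Weyl form. Let $\partial_\phi$ complete $\partial_\psi$ to a basis of the $U(1)^2$; it commutes with $V$ and $\partial_\psi$ and preserves $F$, hence it preserves $f$, $g_{t\psi}$, $g_{\psi\psi}$, $N$ and the potentials $\Phi,\Psi$, and so by \eqref{H}, \eqref{K}, \eqref{L}, \eqref{M} it preserves $H,K,L,M$; since $\partial_\phi$ descends to a rotation of the flat $\mathbb{R}^3$, in the associated cylindrical coordinates $(\rho,z,\phi)$ these functions depend on $(\rho,z)$ only. Then $\td\chi,\td\hat\omega,\td\xi$, computed from \eqref{dchi}, \eqref{domegahat}, \eqref{dxi}, are axisymmetric with no $\td\rho\wedge\td z$ component, so the $(\td\rho,\td z)$-parts of $\chi,\hat\omega,\xi$ are closed on the simply connected $(\rho,z)$ half-plane, hence exact, and can be removed using, respectively, the residual freedom in $\chi$, a shift of $t$, and a Maxwell gauge transformation; this gives \eqref{1forms}. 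With the $1$-forms along $\td\phi$ only, neither $\omega$ nor the Gibbons--Hawking metric produces $\td\rho,\td z$ cross terms, giving the block-diagonal form \eqref{block}--\eqref{orbit}; and since $\sqrt{-\det g}=H/f$ in Cartesian coordinates becomes $\rho\,H/f$ in cylindrical ones, block-diagonality with $q_{ab}=(H/f)\delta_{ab}$ yields $\det G_{AB}=\det g/\det q=-\rho^2$, which is \eqref{rho}.

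I expect the real obstacle to be the triholomorphicity step --- that a biaxial $U(1)^2$ commuting with supersymmetry necessarily contains a triholomorphic $U(1)$. The delicate ingredients are the genuine descent of the $m_i$ to Killing fields of the base (which rests on invariance of $\omega$, not merely of $f$), the constancy of their action on the sphere of complex structures (from closedness of the hyper-K\"ahler forms together with $[m_1,m_2]=0$), and the global $U(1)$-periodicity of the triholomorphic combination (from asymptotic flatness). Everything after that is essentially bookkeeping built on Section~\ref{sec:GH}.
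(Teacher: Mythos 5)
Your proposal follows essentially the same route as the paper: descend the biaxial Killing fields to the data $(f,\omega,h)$ via an adapted time function, obtain a triholomorphic $U(1)$ inside the torus so the base is Gibbons--Hawking \eqref{GH}, deduce axisymmetry of $H,K,L,M$, gauge the $(\rho,z)$-parts of $\chi,\omh,\xi$ away to reach \eqref{1forms}, and read off the block form and $\det G_{AB}=-\rho^2$. The one structural difference is that you re-derive the statement the paper simply imports from \cite{Gibbons:1987sp} (a hyper-K\"ahler $4$-metric with $U(1)^2$ isometry is Gibbons--Hawking with the triholomorphic Killing field in the torus). Your sketch of that fact is right in outline, but the constancy of the $\mathfrak{so}(3)$ matrices should be justified by parallelism (a Killing flow sends parallel self-dual $2$-forms to parallel self-dual $2$-forms, and a parallel form on a connected manifold is fixed by its value at one point), not by ``closedness plus pointwise independence'': in four dimensions the map sending a triple of $1$-forms $(\alpha_b)$ to $\sum_b\alpha_b\wedge\omega^{(b)}$ has a large kernel, so $\td\Omega^a{}_b\wedge\omega^{(b)}=0$ alone does not force $\td\Omega^a{}_b=0$.

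Two steps you assert are precisely the ones the paper has to argue. First, invariance of $\Psi$ under $\partial_\phi$: from $\mathcal{L}_{\partial_\phi}F=0$, $[\partial_\phi,\partial_\psi]=0$ and \eqref{PhiPsi} you only obtain that $\mathcal{L}_{\partial_\phi}\Psi\propto\iota_{\partial_\phi}\iota_{\partial_\psi}F$ is a \emph{constant}, not that it vanishes; without this, axisymmetry of $K$ (and hence of $L$, $M$) does not follow from \eqref{K}. The paper kills this constant by evaluating $\iota_{K_A}\iota_{K_B}F$ on the two axes reaching infinity, where the vectors $v_\pm$ of \eqref{vpm} vanish by asymptotic flatness; alternatively, single-valuedness of $\Psi$ around the closed orbits of $\partial_\phi$ would do. Second, your claim that the second $U(1)$ ``descends to a rotation of the flat $\mathbb{R}^3$'': what the quotient argument actually gives is a Killing field of flat $\mathbb{R}^3$, which a priori could generate a translation or a screw motion; the paper excludes these by noting that their orbits are unbounded, $H$ is constant along them, and $H\to 0$ at infinity by \eqref{Hasympt}, which would force $H\equiv 0$, a contradiction. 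Your parenthetical appeal to the asymptotic form of the $m_i$ gestures at this but is not an argument as stated. Both gaps are easily closed, and with them filled the remainder of your write-up (removing the exact $(\rho,z)$-parts by shifts of $\psi$, $t$ and a gauge transformation, and computing $\det G_{AB}$ from $\sqrt{-\det g}=\rho H/f$) matches the paper's proof.
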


\begin{proof} First we show that for a supersymmetric solution \eqref{g} biaxial symmetry requires  the hyper-K\"ahler base $h$ and the data $f$, $\omega$ defined on it, to be $U(1)^2$-invariant.  By assumption $\mathcal{L}_{m_i} V=0$ and therefore $f=\sqrt{- V_\mu V^\mu}$ implies $\mathcal{L}_{m_i} f=0$. Writing the hyper-K\"ahler metric as
\begin{equation}
h_{\mu\nu} = f g_{\mu\nu} + f^{-1} V_\mu V_\nu
\end{equation}
we then immediately deduce $\mathcal{L}_{m_i} h_{\mu\nu}=0$. Finally, by a shift $t \to t +\lambda$, where $\lambda$ is a function on $B$, we may choose the time function so $\mathcal{L}_{m_i}t=0$, which also implies $\mathcal{L}_{m_i}\omega=0$ (by taking the Lie derivative of $V_\mu \td x^\mu = - f^2 (\td t+\omega)$).  

We may now apply the following result~\cite{Gibbons:1987sp}:  {\it A hyper-K\"ahler metric with a local $U(1)^2$-isometry is a Gibbons--Hawking metric and the triholomorphic Killing field is a linear combination of the $U(1)^2$ Killing fields.} 
Therefore, biaxial symmetry of the full solution implies that the base must be a GH metric (\ref{GH}) and also that the full solution is invariant under the triholomorphic Killing field. Hence, the local form of the solution is determined by four harmonic functions $H,K,L,M$ on $\mathbb{R}^3$ as discussed in section \ref{sec:GH}.

The GH metrics in general only possess a $U(1)$-isometry generated by the triholomorphic Killing field. This allows us to write the harmonic function $H$ and the $\mathbb{R}^3$ base as the invariants $H^{-1} = h (\partial_\psi, \partial_\psi)$ and $\delta= H^{-1}  h - h(\partial_\psi, \cdot)^2$ respectively. Hence, the $U(1)^2$-symmetry implies that $H$ and $\delta$ are invariant under a linear combination of the $U(1)^2$ Killing fields, say $\eta$, which is linearly independent to $\partial_\psi$.   By shifting $\psi \to \psi + \lambda$, where $\lambda$ is a function on $\mathbb{R}^3$, we may set $\mathcal{L}_\eta \psi=0$. Then $\eta$ is a vector field on $\mathbb{R}^3$ which leaves $H$ and $\delta$ invariant, i.e. $\eta$ is a Killing vector of $\mathbb{R}^3$. 

Hence we have shown that the $U(1)^2$-symmetry assumption implies the harmonic function $H$ is  invariant under a 1-parameter group of isometries of $\mathbb{R}^3$.  If this 1-parameter subgroup is closed then it must be a subgroup $U(1) \in SO(3)$, i.e.~the harmonic function must be axisymmetric.  What if $H$ is invariant under a non-compact  1-parameter subgroup of the Euclidean group? Such subgroups correspond to a translation, or a combination of a translation with a rotation (corkscrew). In either case, the orbit curves of such subgroups are unbounded in $\mathbb{R}^3$. Since $H$ is invariant along such an orbit curve, and by asymptotic flatness $H \to 0$ at infinity, we deduce that $H \equiv 0$ everywhere, a contradiction.  This argument shows that the only 1-parameter subgroup of the Euclidean group which may leave $H$ invariant is the axial symmetry.  Thus we write the $\mathbb{R}^3$ in cylindrical coordinates 
\begin{equation}
\td x^i \td x^i = \td \rho^2 + \rho^2 d\phi^2 + \td z^2 \, ,
\end{equation}
where the axial Killing field is $\eta=\partial_\phi$. Then $H$ is invariant under  $\partial_\phi$ and hence is only a function of $(\rho,z)$. We will now show that the other harmonic functions $K,L,M$ must also be axisymmetric, i.e. invariant under the axial Killing field $\partial_\phi$.  

First, we recall some well known properties of Maxwell fields invariant under three commuting Killing fields $K_A$, see e.g.~\cite{Kunduri:2013vka}.  The Bianchi identity implies that the functions $\iota_{K_A} \iota_{K_B} F$ are constant in the spacetime.  Furthermore, asymptotic flatness implies that a different linear combination of $K_i=m_i$, for $i=1,2$, vanish on the two axes which intersect the $S^3$ at infinity, namely $v_+$ and $v_-$ in \eqref{vpm}. Therefore all these constants must in fact vanish, so $\iota_{K_A} \iota_{K_B} F=0$.  In particular, since the axial Killing field $\partial_\phi$ and the triholomorphic Killing field $\partial_\psi$ must be linear combinations of the $K_i=m_i$ for $i=1,2$, we must have $\iota_{\partial_\phi} \iota_{\partial_\psi} F=0$ so by \eqref{PhiPsi} the magnetic potential is axisymmetric $\mathcal{L}_{\partial_\phi} \Psi=0$. Hence \eqref{potentials} implies that $K H^{-1}$, and
thus $K$, is also axisymmetric. Axisymmetry of $L$ and $M$ then follows from invariance of $f$ and $\omega_\psi$ under $\partial_\phi$, together with \eqref{Ldef} and \eqref{Mdef}.

Furthermore, axisymmetry implies that the 1-forms $\chi$, $\omh$, $\xi$, 
are all gauge equivalent to (\ref{1forms}). To see this, first note that axisymmetry of the harmonic function $H$ and invariance of the 1-form $h(\partial_
\psi, \cdot) = H^{-1} (\td \psi+\chi)$ under $\partial_\phi$, together with our gauge choice $\mathcal{L}_{\partial_\phi} \psi=0$, implies that  $\mathcal{L}_{\partial_\phi} \chi = 0$. Also, $0=\iota_{\partial_\phi} \td H= \iota_{\partial_\phi} \star_3 \td \chi \sim \rho^2 \star_3 ( \td \phi \wedge \td \chi)$, which implies $\chi = \chiph(\rho, z) \td \phi+ \td (\lambda'(\rho, z))$ for some function $\lambda'(\rho, z)$. By a shift in $\psi \to \psi - \lambda'$ (which does not affect $\mathcal{L}_{\partial_{\phi}} \chi=0$) we may eliminate $\lambda'$. A similar argument works for the 1-forms $\omh$ and $\xi$ by shifting $t$ and the gauge field $A$ respectively, establishing the claim. 

Putting everything together we find the spacetime metric can be written as claimed.  The local form of the metric \eqref{block} shows the distribution orthogonal to $\text{span}\{ \partial_t, \partial_\psi, \partial_\phi \}$ is integrable so that at every point there exist surfaces orthogonal to the Killing fields with metric (\ref{orbit}).\footnote{This is equivalent to the Frobenius integrability condition $K_0 \wedge K_1 \wedge K_2 \wedge \td K_A=0$, which is in fact guaranteed for any
solution to the Einstein--Maxwell equation with $D-2$ commuting Killing
fields, one of which has at least one fixed point (which must be the case here due to asymptotic flatness), see e.g.~\cite{Tomizawa:2009ua}. Here it arises as a
consequence of supersymmetry which for timelike solutions implies the
Einstein--Maxwell equations~\cite{Gauntlett:2002nw}.}  
\end{proof}

We now turn to our global assumptions. We assume the spacetime $(M,g)$ is asymptotically flat  and the domain of outer communication $\llangle M \rrangle$ is globally hyperbolic so $\llangle M \rrangle \cong \mathbb{R}\times \Sigma$. 
Topological censorship implies that $\llangle M \rrangle$  is simply connected \cite{Friedman:1993ty}.  We will denote the event horizon  by $\mathcal{H}$, although we allow for the possibility of no black hole region. We will assume the stationary Killing vector $V$ is complete so the spacetime has an isometry group
$G=\mathbb{R}\times U(1)^2$, where $\mathbb{R}$ is tangent to the orbits of $V$.
The axes correspond to the set of fixed points of the biaxial symmetry
\begin{equation}
\mathcal{A} =\{ p \in M \; | \; \det \gamma_{ij}(p)=0 \}
\end{equation}
where $\gamma_{ij} = m_i \cdot m_j$ and $i,j=1,2$. Under these conditions, it has been shown that the orbit space 
\begin{equation}
\hat{M} = \llangle M \rrangle  / G \cong \Sigma /U(1)^2
\end{equation}
is a simply connected 2-dimensional manifold with boundaries and corners~\cite{Hollands:2007aj, Hollands:2008fm, Hollands:2012xy}. 
The axes corresponds to boundary segments $I \subset \partial \hat{M}$
where $\gamma_{ij}$ is rank-1 and to corners of $\hat{M}$ where
$\gamma_{ij}$ is rank-0.  Below we will show that an event horizon,
which must be degenerate, corresponds to a point on $\partial \hat{M}$
(in fact an asymptotic end, as is generic for extremal horizons, see e.g.~\cite{Figueras:2009ci}).

Now, we may identify the surfaces orthogonal to the Killing fields with the the orbit space $\hat{M}$. We deduce that the orbit space inherits the metric $q$ (\ref{orbit}), so we will refer to this as the orbit space metric. Under the above global assumptions, it has been shown that $\det G_{AB}<0$ everywhere on
$\llangle M \rrangle \setminus \mathcal{A}$ and $\det G_{AB}=0$ on
$\mathcal{H}  \cup
\mathcal{A}$~\cite{Chrusciel:2008rh}.
 From (\ref{rho}) we immediately deduce that  $\rho>0$ everywhere on $\llangle M \rrangle \setminus \mathcal{A}$ and $\rho=0$ on
$\mathcal{H} \cup \mathcal{A}$. Thus
the interior of the orbit space corresponds to $\rho>0$ and its
boundary to $\rho=0$. Therefore, $(\rho, z)$ can be used as global coordinates on the interior of $\hat{M}$, so we may
identify the interior of the orbit space with the upper-half plane 
\begin{equation}
\hat{M}  = \{ (\rho, z) \; | \; \rho > 0 \}
\end{equation}
and the boundary $\partial \hat{M}$ and corners with the $z$-axis ($\rho=0$). In the orbit space the axes divide into boundary segments $I = (z_1, z_{2})$, called axis rods (or intervals), and corners which arise as certain endpoints $z=z_i$ of the axis rods. Below we will show that in the orbit space an event horizon is a point on the $z$-axis.

Since $\det G_{AB}<0$ on $\llangle M \rrangle \setminus \mathcal{A}$, the orbit space metric \eqref{orbit} must be Riemannian on $\hat{M}$ and therefore its conformal factor must be a smooth and positive function for $\rho>0$, i.e.~
\begin{equation}
\frac{H}{f} >0  \label{orbitsmooth}\,.
\end{equation}
In fact, from \eqref{H}, we see that \eqref{orbitsmooth} is equivalent to the invariant $N>0$.  Thus the above shows that $N>0$ on $\llangle M \rrangle \setminus \mathcal{A}$. Therefore, Lemma \ref{GHlemma} may be applied to learn that the harmonic functions $H,K,L,M$ are smooth on  $\llangle M \rrangle \setminus \mathcal{A}$. 

We require that the spacetime metric is smooth at the axes. Consider an axis rod $I$ and let $v=v_i m_i$, where $(v_1, v_2)\in \mathbb{Z}^2$, denote the Killing field which vanishes on $I$. Smoothness requires that for each such axis rod $I$ 
\begin{equation}
 \frac{H}{f}  = \lim_{\rho \to 0} \frac{\gamma_{ij} v_i v_j}{\rho^2}  \label{smoothaxis}
\end{equation}
is a smooth positive function for all $z \in I$\ \footnote{This condition has been previously derived for vacuum Weyl solutions \cite{Harmark:2004rm, Hollands:2008fm, Hollands:2012xy}.}.
 To see this, consider the spacetime metric for fixed $z \in I$,
\begin{equation}
\td s^2= \frac{H}{f} \td \rho^2 + G_{tt} \td t^2+ 2 G_{t i} \td t \td \phi^i +G_{ij} \td \phi^i \td \phi^j\,.
\end{equation}
By an $SL(2,\mathbb{Z})$ transformation we may assume that $v=m_1=\partial_{\phi^1}=0$ on $I$, so $G_{A1}=0$ on $I$ for all $A$. The metric is smooth on $I$ provided $(\td |m_1 |)^2 \to 1$ as $\rho \to 0$. In terms of the proper distance $s= \int_0^\rho \sqrt{g_{\rho\rho}} \td \rho$ this implies $G_{11}= s^2+O(s^4)$ and $G_{1 t}=G_{1 i}= O(s^2)$. Therefore
\begin{equation}
 -\rho^2= \det G =  \left| \begin{array}{cc} G_{tt} & G_{t2} \\ G_{2t} &  G_{22} \end{array} \right|  s^2 + O(s^4)  \,.\label{nearI}
\end{equation}
Now, it has been shown that $\text{span} \{ K_0, K_1, K_2 \}$ is timelike everywhere in $\llangle M \rrangle$~\cite{Chrusciel:2008rh}. Therefore, $\text{span} \{ V, m_2 \}$ must be timelike on $I$ and hence the determinant on the righthand side of \eqref{nearI} is strictly negative on $I$. Thus, 
\begin{equation}
\frac{f}{H}  = \left(\frac{\td \rho}{\td s} \right)^2 = -  \left| \begin{array}{cc} G_{tt} & G_{t2} \\ G_{2t} &  G_{22} \end{array} \right|  + O(s^2)
\end{equation}
 is smooth and positive on $I$. The condition \eqref{smoothaxis} now easily follows. From (\ref{H}) we deduce that the invariant $N>0$  on all axis rods. 
 
 Thus we arrive at the following crucial result.

\begin{lemma}
Let $(M,g,F)$ be an asymptotically flat, supersymmetric and biaxisymmetric solution to minimal supergravity with a globally hyperbolic $\llangle M \rrangle$. 
\begin{enumerate}
\item The fixed points of the triholomorphic Killing field of the Gibbons--Hawking base correspond to precisely the corners of the orbit space $\hat{M}$.
\item The harmonic functions $H,K,L,M$ are smooth and obey \eqref{smooth1} everywhere in $\llangle M \rrangle$ except possibly at points corresponding to the corners of the orbit space $\hat{M}$.
\item At every corner of the orbit space $f\neq0$ and $H$ has an isolated singularity.
\end{enumerate}
\label{lemma:orbitspace}
\end{lemma}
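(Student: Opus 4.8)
The plan is to establish the three parts of Lemma \ref{lemma:orbitspace} in a slightly non-sequential order, proving part 2 as a consequence of parts 1 and the axis analysis already carried out before the lemma, and then turning to the local geometry at a corner to get parts 1 and 3 simultaneously. First I would recall what has already been set up: the preceding discussion showed $N>0$ on all axis rods and on $\llangle M\rrangle\setminus\mathcal{A}$, and Lemma \ref{GHlemma} converts $N>0$ into smoothness of $H,K,L,M$ together with \eqref{smooth1}. So the only places where the harmonic functions could fail to be smooth are the remaining points of $\mathcal{A}$, i.e. the corners of $\hat M$ (where $\gamma_{ij}$ is rank $0$) and possibly a horizon; since the horizon is being treated separately in section \ref{sec:hor}, for this lemma I would phrase part 2 as ``everywhere in $\llangle M\rrangle$ except possibly at corners,'' which then follows immediately once part 1 identifies the corners with the fixed points of $\partial_\psi$.

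Next, for part 1, the key observation is that $H^{-1}=h(\partial_\psi,\partial_\psi)=g_{\psi\psi}/ (\text{stuff})$ more precisely $\partial_\psi\cdot\partial_\psi = 1/(fH) - f^2\omega_\psi^2$ from \eqref{invariants}, and that a fixed point of the triholomorphic Killing field $\partial_\psi$ is a point where $\partial_\psi$ vanishes. On an axis rod $I$, some integer combination $v=v_im_i$ vanishes; $\partial_\psi$ and $\partial_\phi$ are both integer combinations of $m_1,m_2$, so $\partial_\psi$ vanishes at a given point on $\rho=0$ precisely when the full rank-0 condition $\det\gamma_{ij}=0$ holds there AND the particular combination that degenerates includes $\partial_\psi$ — but at a corner \emph{both} independent combinations degenerate, so $\partial_\psi=0$ there, and conversely if $\partial_\psi=0$ at a boundary point that is not a corner then the rod through it has $\partial_\psi$ as (a multiple of) its vanishing generator, which one can exclude using the rod-structure constraints or by noting $H^{-1}=h(\partial_\psi,\partial_\psi)\to 0$ would force $H\to\infty$ on an entire rod, contradicting smoothness/regularity of $H$ away from isolated points on the axis of $\mathbb{R}^3$. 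I would make this precise: $\partial_\psi\cdot\partial_\psi\to 0$ as $\rho\to 0$ on an axis rod would mean (since on that rod $\text{span}\{V,\partial_\psi\}$ or an appropriate 2-plane is timelike with determinant $\rho^2$-controlled) that $H\to+\infty$ on a one-dimensional set in $\mathbb{R}^3$, impossible for a harmonic function whose singularities are isolated; hence the only boundary points where $\partial_\psi=0$ are the corners, and conversely every corner, being the common endpoint of two rods with linearly independent vanishing generators spanning the lattice, has $\partial_\psi=0$.

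For part 3, at a corner I would argue that $H$ has an isolated singularity by the maximum-principle/harmonicity structure: away from corners $H$ is smooth on the axis of $\mathbb{R}^3$ and in $\mathbb{R}^3$, so the only candidate singularities of the harmonic function $H$ are at the points of the $z$-axis corresponding to corners, and these are isolated in $\mathbb{R}^3$; one must still rule out $H$ being smooth and nonvanishing there, which would contradict part 1 ($\partial_\psi\cdot\partial_\psi=1/(fH)-f^2\omega_\psi^2$ together with $\partial_\psi=0$ forces $1/(fH)\to 0$ i.e. $H\to\infty$ when $f$ stays finite). That $f\neq0$ at a corner is the remaining piece: I would use \eqref{H}, $H=f/N$, so a zero of $f$ at a point where $N\to 0$ is a priori possible, but near a corner one can analyze the local Gibbons--Hawking geometry — a fixed point of $\partial_\psi$ with the rod structure of two rods meeting there forces $H\sim q/|x-x_0|$ for some nonzero integer $q$ (a GH center), whence $f = N H \to 0\cdot\infty$ is controlled: using $N^{-1}=K^2+HL$ and smoothness of $K,L,M$ at such a center one computes $f^{-1}=H^{-1}K^2+L$ which, since $H\to\infty$ and $K,L$ are bounded, gives $f^{-1}\to L_0$, a finite nonzero value by \eqref{smooth1} in the limit, hence $f\to L_0^{-1}\neq 0$. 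I expect the main obstacle to be this last step: carefully showing $H$ genuinely blows up (rather than merely failing to be smooth in some milder way) at a corner and that the blow-up is exactly a simple pole compatible with $f\neq 0$, which requires combining the local regularity analysis of the Gibbons--Hawking base near a fixed point of $\partial_\psi$ with the global constraint \eqref{smooth1} and the spacetime smoothness near the corresponding corner of $\hat M$ — essentially importing the standard ``GH center'' local model and checking it is forced here.
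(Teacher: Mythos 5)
Your parts 1 and 2 are essentially recoverable, but one justification in part 1 is wrong and part 3 has a genuine gap. For part 1 the clean argument is already in your hands: you recall that $N>0$ on every axis rod, and $\partial_\psi=0$ at a point forces $N=0$ there (every entry of the $2\times 2$ matrix in \eqref{I} involving $\partial_\psi$ vanishes), so a fixed point of $\partial_\psi$ cannot lie on a rod; conversely at a corner $m_1=m_2=0$, hence $\partial_\psi=0$. Instead you argue that $\partial_\psi=0$ along a rod would make $H\to\infty$ on a segment, which you call ``impossible for a harmonic function whose singularities are isolated''. That justification is both circular and false: isolatedness of the singularities of $H$ is part of what the lemma asserts (it only emerges at the end of part 3), and harmonic functions \emph{can} blow up along a segment (line-charge potentials do exactly this). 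Drop that step and use the $N$ argument, which is the paper's route.

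The real gap is the claim $f\neq 0$ at a corner. Your argument presupposes the local Gibbons--Hawking-centre model there: that $H$ has a simple pole with nonzero integer charge and that $K,L,M$ are smooth and bounded at the corner, so that $f^{-1}=H^{-1}K^2+L$ tends to a finite nonzero limit. But in the paper's logical order that local model is a \emph{consequence} of this lemma: the axis analysis of section \ref{sec:axes} (culminating in Theorem \ref{thm:corner}) derives the pole structure of $H,K,L,M$ at a corner using $f_i\neq 0$ as an input, so importing it here is circular — and nothing you have established excludes, say, $f\to 0$ correlated with $N\to 0$, or $L$ blowing up at the corner. The paper closes this with a global fact you never invoke: by~\cite{Chrusciel:2008rh}, $\mathrm{span}\{V,m_1,m_2\}$ is timelike everywhere in $\llangle M\rrangle$; at a corner $m_1=m_2=0$, so $V$ itself must be timelike there, i.e.\ $f\neq 0$. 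Given that, $N=0$ at the corner together with \eqref{H} (i.e.\ $H=f/N$) forces $H$ to be singular there, and isolatedness of the singularity then follows from part 2. The missing idea is thus the appeal to the timelike-span result at rank-zero points, not any local GH-centre analysis.
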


\begin{proof}
The fixed points of $\partial_\psi$ are on $\mathcal{A}$ and therefore must either occur on an axis rod or on a corner of $\hat{M}$.  But $\partial_\psi=0$ implies that the invariant $N=0$ (recall (\ref{I})), which as we have shown above cannot occur on an axis rod. Therefore, $\partial_\psi=0$ can only occur at a corner of $\hat{M}$. Conversely, by definition, at any corner $m_1=m_2=0$ and hence $\partial_\psi=0$. Thus we have proved part 1.

Next, above we observed that the invariant $N>0$ on $\llangle M \rrangle \setminus \mathcal{A}$ and also on the parts of $\mathcal{A}$ corresponding to the axis rods. Thus by application of Lemma \ref{GHlemma} we deduce that $H,K,L,M$ are smooth and obey \eqref{smooth1} on $\llangle M \rrangle \setminus \mathcal{A}$ and also on the parts of $\mathcal{A}$ corresponding to the axis rods.  Thus the only potential singularities of the harmonic functions in  $\llangle M \rrangle$ are points corresponding to the corners of the orbit space, establishing 2.

Finally, we again use the fact that $\text{span} \{ K_0, K_1, K_2 \}$ is timelike everywhere in $\llangle M \rrangle$~\cite{Chrusciel:2008rh}. In particular, this implies that $K_0=V$ must be timelike at any corner of $\hat{M}$. Therefore, $f \neq 0$ at any corner. Also, as observed above $N=0$  at the corners. Therefore, the expression for $H$ in terms of these invariants \eqref{H} shows that $H$ {\it must} be singular at the corners of $\hat{M}$. Since we have already established that $H$ is smooth on the orbit space away from the corners and horizon, we deduce the singularities of $H$ are isolated. 
\end{proof}

Therefore the analysis reduces to studying the behaviour of the harmonic functions at the event horizon and the corners the of orbit space.

\section{Near-horizon geometry}
\label{sec:hor}

We will now examine the geometry near a horizon in detail.  The event horizon $\mathcal{H}$ of a black hole must be invariant under the isometries of the spacetime. Hence any Killing field must be tangent to $\mathcal{H}$, which  implies it must be null or spacelike on $\mathcal{H}$. In particular, the supersymmetric Killing field $V$ must be tangent and hence null on $\mathcal{H}$ (since it is never spacelike). Hence $V$ is also normal to the horizon and $\mathcal{H}$ is a Killing horizon with respect to $V$. Furthermore, $\td (V \cdot V) =0$ on $\mathcal{H}$, so the horizon must be degenerate (i.e.~extreme). We will refer to such horizons as supersymmetric horizons. Since for asymptotically flat solutions $V$ is also the stationary Killing field we deduce that any supersymmetric black hole is nonrotating.

In the neighbourhood of a supersymmetric horizon we may introduce Gaussian null coordinates $(v,\lambda, x^a)$~\cite{Kunduri:2013ana}
\begin{equation}
\td s^2= -\lambda^2 \Delta(\lambda, x)^2 \td v^2+ 2 \td v \td \lambda+ 2 \lambda h_a(\lambda, x) \td v \td x^a+ \gamma_{ab}(\lambda, x) \td x^a \td x^b\,,
\end{equation}
where $V= \partial_v$, $\lambda=0$ is the horizon and $f= \lambda
\Delta(\lambda, x)$. The supersymmetric near-horizon geometries of minimal supergravity have
been classified~\cite{Reall:2002bh}. Assuming cross-sections of the
horizon are compact, it can be shown that $\Delta|_{\lambda=0} =
\Delta_0$ is a constant on the horizon. If $\Delta_0 \neq 0$ the
horizon is locally $S^3$, and if $\Delta_0=0$ it is $S^1\times S^2$
(we do not consider the $T^3$ case as this is not an allowed topology
for black holes)~\footnote{The analysis of~\cite{Reall:2002bh} assumes $f>0$ (i.e. $\Delta>0$) for small $r>0$. However, this is restrictive and we should only assume $f \neq 0$ (i.e. $\Delta \neq 0$ for small $r>0$). In fact, the analysis of ~\cite{Reall:2002bh} remains valid under this weaker assumption since only $\Delta^2$ appears in the near-horizon geometry.}. In fact, an output of this analysis is that the near-horizon geometry must have biaxial symmetry.  We will use this classification below after a general analysis of the orbit space.

\subsection{Orbit space metric}

Let us now consider the orbit space metric near an extreme horizon. Our analysis in this section will be general and only assume the existence of biaxial symmetry.

The biaxial Killing fields must be tangent to the horizon and
hence on a cross-section we may introduce coordinates adapted to this
symmetry. We thus introduce Gaussian null coordinates
$(v,\lambda, \tilde{\theta}, \tilde{\phi}^i)$ where
$(\tilde{\theta}, \tilde{\phi}^i)$, $i=1,2$, are such that the three
commuting Killing fields are
$V=\partial_v, m_i=\partial_{\tilde{\phi}^i}$.  Then we write
\begin{multline}\label{gnh}
\td s^2 = -\lambda^2 F \td v^2 + 2 \td v ( \td \lambda+ \lambda
\tilde{h}_{\tilde{\theta}} \td \tilde{\theta})+ \gamma_{\tilde{\theta}
  \tilde{\theta}} \td \tilde{\theta}^2  
 + 2\gamma_{\tilde{\theta} i} \td \tilde{\theta} ( \td \tilde{\phi}^i
 + \lambda h^i \td v)  \\
 + \gamma_{ij} ( \td \tilde{\phi}^i + \lambda
 h^i \td v) ( \td \tilde{\phi}^j + \lambda h^j \td v)\,,
\end{multline}
where $F = \Delta^2+h^i h_i$, $\tilde{h}_{\tilde{\theta}}  = h_{\tilde{\theta}}- \gamma_{\tilde{\theta} i} h^i$ and $h^i = \gamma^{ij} h_j$ where $\gamma^{ij}$ is the inverse of the 2d matrix $\gamma_{ij}$.  We will now extract the orbit space metric, following~\cite{Chrusciel:2010gq}.

The inner product on the space of Killing fields is
\begin{equation}
G = -{\lambda}^2 {F} \td v^2 + \gamma_{ij} ( \td \tilde{\phi}^i + {\lambda} {h}^i \td v) ( \td \tilde{\phi}^j + {\lambda} {h}^j \td v)  
\end{equation}
with inverse
\begin{equation}
G^{AB} = \left( \begin{array}{cc} - \frac{1}{{\lambda}^2 {F}} & \frac{{h}^i}{{\lambda} {F}} \\  \frac{{h}^j}{{\lambda} {F}} & \gamma^{ij} - \frac{{h}^i {h}^j}{{F}} \end{array} \right)\,.
\end{equation}
The orbit space metric may be defined by
\begin{align}
q_{\mu\nu}  = g_{\mu\nu} - G^{AB} g_{A \mu} g_{B \nu} 
\end{align}
so $q_{A \nu}=0$. Computing we find
\begin{align}
q &= \frac{1}{F \lambda^2}( \td {\lambda} + \lambda \tilde{h}_{\tilde{\theta}} \td \tilde{\theta})^2+ (\gamma_{\tilde{\theta}\tilde{\theta}} - \gamma^{ij} \gamma_{i \tilde{\theta}} \gamma_{j\tilde{\theta}}) \td \tilde{\theta}^2 \,.  \label{orbitgnc}
\end{align}
An alternate way to derive this orbit space metric is as follows.

The spacetime has three commuting Killing fields. Hence the distribution orthogonal to these Killing fields is integrable and there exist local coordinates in which the metric takes block diagonal form as  \eqref{orbit} (see e.g.~\cite{Tomizawa:2009ua}). The general coordinate change which takes us from Gaussian null coordinates to block diagonal coordinates is
\begin{equation}
t = v+ A(\lambda, \tilde{\theta})\,, \qquad \phi^i = \tilde{\phi}^i+B^i(\lambda, \tilde{\theta})\,,
\end{equation}
where $V= \partial_t$, $m_i= \partial_{\phi^i}$ and
\begin{equation}
\partial_\lambda A  = - \frac{1}{\lambda^2 F}, \qquad \partial_{\tilde{\theta}} A = - \frac{\tilde{h}_{\tilde{\theta}}}{\lambda F}, \qquad \partial_\lambda B^i = \frac{h^i}{\lambda F}, \qquad \partial_{\tilde{\theta}} B^i = \gamma^{ij} \gamma_{\tilde{\theta} j}+ \frac{h^i \tilde{h}_{\tilde{\theta}}}{F}\,.
\end{equation}
A calculation then shows that the spacetime metric takes the form \eqref{block} with the matrix of Killing fields
\begin{equation}
G = -{\lambda}^2 {F} \td t^2 + \gamma_{ij} ( \td {\phi}^i + {\lambda} {h}^i \td t) ( \td {\phi}^j + {\lambda} {h}^j \td t) 
\end{equation}
and the orbit space metric given by \eqref{orbitgnc}.

We will now examine the orbit space metric near the horizon. We will
assume $h_{\tilde{\theta}} = O(\lambda)$ and $\gamma_{i
  \tilde{\theta}} = O(\lambda)$ which are conditions satisfied by the
near-horizon geometries in question.  We find that  near the horizon
\eqref{orbitgnc} takes the form
\begin{equation}
q = \left( \frac{1}{F \lambda^2} +O(\lambda^{-1}) \right) \td \lambda^2 +O(1) \td \lambda \td \tilde{\theta} + (\gamma_{\tilde{\theta}\tilde{\theta}}  +O(\lambda)) \td \tilde{\theta}\,,  \label{orbitnh}
\end{equation}
and the determinant of the Killing metric is
\begin{align}
\rho &= \sqrt{- \det G} = {\lambda} \sqrt{ {F} \det \gamma_{ij}}  \nonumber\\ &=   \sqrt{ {F} \det \gamma_{ij}}|_{\lambda=0} \; \lambda + O(\lambda^2)\,.   \label{rhonh}
\end{align}
The function $\rho$ is harmonic in the orbit space metric \eqref{orbit}. The harmonic conjugate $z$ is given by $\td z = \star_2 \td \rho$.  A computation gives
\begin{align}
\partial_\lambda z &= \frac{1}{\lambda \sqrt{F q_{\tilde{\theta}\tilde{\theta}}}} \left(\partial_{\tilde{\theta}} \rho -\lambda \tilde{h}_{\tilde{\theta}} \partial_\lambda \rho \right)  = \left. \frac{\partial_{\tilde{\theta}} \sqrt{ {F} \det \gamma_{ij}} }{\sqrt{F \gamma_{\tilde{\theta}\tilde{\theta}}}} \right|_{\lambda=0} + O(\lambda) \,, \label{znh}\\
\partial_{\tilde{\theta}} z &=  \frac{1}{\sqrt{F q_{\tilde{\theta}\tilde{\theta}}}} \left[ - \lambda F q_{\tilde{\theta}\tilde{\theta}}  \partial_\lambda \rho + \tilde{h}_{\tilde{\theta}} \left(\partial_{\tilde{\theta}} \rho -\lambda \tilde{h}_{\tilde{\theta}} \partial_\lambda \rho \right)\right]  = - \left. F \sqrt{\gamma_{\tilde{\theta}\tilde{\theta}} \det \gamma_{ij}} \right|_{\lambda=0} \; \lambda + O(\lambda^2) \,. \nonumber
\end{align}
Observe that by integrating for $z$ we see that it will take the form $z=z_0 + O(\lambda)$, where $z_0$ is a constant which can be set to zero. Hence, a degenerate horizon corresponds to a single point on the boundary of the orbit space. In fact, in the orbit space metric, it is easy to see that any point is an infinite proper distance to the horizon, so that a degenerate horizon corresponds to an asymptotic end (see~\cite{Figueras:2009ci} for discussion of this in the vacuum case).

To proceed further we need the specific near-horizon geometries. We will turn to this next.

\subsection{Locally $S^3$ horizon}\label{sec:S3horizon}

For a locally $S^3$ horizon, the horizon data is given by~\cite{Reall:2002bh}
\begin{equation}
  \begin{gathered}
    \Delta_0^2 = \frac{4}{\mu}  \left( 1- \frac{j^2}{\mu^3} \right) \,,\qquad 
    h|_{\lambda=0} = -j \mu^{-3/2}  \left( 1- \frac{j^2}{\mu^3}
    \right)^{1/2}( \td \tilde{\psi}+ \cos \tilde{\theta} \td
    \tilde{\phi}) \,, \\
    \gamma|_{\lambda=0} = \frac{\mu}{4} \left[  \left( 1-
        \frac{j^2}{\mu^3} \right) ( \td \tilde{\psi}+  \cos
      \tilde{\theta} \td \tilde{\phi})^2+ \td \tilde{\theta}^2+
      \sin^2\tilde{\theta}\td \tilde{\phi}^2 \right]\,,
  \end{gathered}
\end{equation}
where the constants $j^2<\mu^3$.   We deduce that near the horizon the
metric  is given by \eqref{gnh} where
\begin{equation}
  \begin{gathered}
    {F} = \frac{4}{\mu} +O(\lambda),  \qquad h^i =- \frac{4 j
    }{\mu^{5/2} \left( 1 - \frac{j^2}{\mu^3} \right)^{1/2} }
    \delta^i_{\tilde{\psi}} + O(\lambda)\,, \\
    h_{\tilde{\theta}} = O(\lambda), \qquad \gamma_{\tilde{\theta}
      \tilde{\theta}} = \frac{\mu}{4}+O(\lambda) ,  \qquad
    \gamma_{\tilde{\theta} i}=O(\lambda) \,,\\
    \gamma_{ij} \td \tilde{\phi}^i \td \tilde{\phi}^j =
    \frac{\mu}{4} \left[  \left( 1- \frac{j^2}{\mu^3} \right) ( \td
      \tilde{\psi}+  \cos \tilde{\theta} \td \tilde{\phi})^2+ \sin^2
      \tilde{\theta} \td \tilde{\phi}^2 \right] +O(\lambda)\,.
  \label{nh1data}
  \end{gathered}
\end{equation}
The above horizon geometry is locally that of a squashed $S^3$ with a
$U(1)^2$-isometry generated by the Killing fields
$(\partial_{\tilde{\psi}}, \partial_{\tilde{\phi}})$.  The topology of
the horizon is determined by the periodicity lattice of the biaxial
Killing fields. For now our analysis will be local, but we will
consider global constraints at the end of this section. In any case,
locally, biaxial symmetry implies that $(\partial_{\tilde{\psi}}, \partial_{\tilde{\phi}})$ must be related to the GH space biaxial Killing fields $(\partial_\psi, \partial_\phi)$ by a constant linear transformation. Hence, $\det \gamma_{ij}|_{i,j= \tilde{\psi},
  \tilde{\phi}} = c^2 \det \gamma_{ij}|_{i,j= {\psi}, {\phi}}$ for some constant $c>0$.
 We need to take account of this Jacobian when comparing the determinants
of the matrix of Killing fields to determine the Weyl coordinates
$(\rho, z)$.  

Using the near horizon data \eqref{nh1data}, we find that
\eqref{rhonh} and \eqref{znh} imply
\begin{equation}\label{rhoznh1}
  \rho= \sqrt{\frac{\mu}{4 c^2} \left( 1-\frac{j^2}{\mu^3}\right) }
  \lambda \sin \tilde{\theta} + O(\lambda^2)\,, \qquad 
  z = \sqrt{\frac{\mu}{4 c^2} \left( 1-\frac{j^2}{\mu^3}\right) } \lambda
  \cos \tilde{\theta}+ O(\lambda^2) \,.
\end{equation}
Thus the $\mathbb{R}^3$ polar coordinates are
\begin{equation}
  r =  \sqrt{\frac{\mu}{4 c^2} \left( 1-\frac{j^2}{\mu^3}\right) }
  \lambda   + O(\lambda^2)\,, \qquad
  \cos \theta = \cos \tilde{\theta} + O(\lambda)\,.
\end{equation}
We deduce that the horizon is a single point in the orbit space
metric, as anticipated above.  The orbit space metric in the $(\rho,z)$ coordinates must
take the form \eqref{orbit}, so using the coordinate change \eqref{rhoznh1} we find
\begin{equation}
  q = \frac{H}{f} \left[ \frac{\mu}{4 c^2} \left( 1- \frac{j^2}{\mu^3}
    \right)  ( \td \lambda^2 + \lambda^2 \td \tilde{\theta}^2)+
    O(\lambda) \td \lambda^2 + O(\lambda^2) \td \lambda \td
    \tilde{\theta}+ O(\lambda^3) \td \tilde{\theta}^2 \right]
\end{equation}
and comparing to \eqref{orbitnh} implies
\begin{equation}
  \frac{H}{f} = \frac{c^2}{1- \frac{j^2}{\mu^3}}  \left(
    \frac{1}{\lambda^2} + O(\lambda^{-1}) \right)\,.
\end{equation}
Therefore, using $f = \lambda \Delta_0+ O(\lambda^2)$ we deduce
\begin{equation}\label{NHH}
 H=   \frac{4 c^2}{\mu \Delta_0 \lambda} + O(1) = \frac{h_0}{r} + O(1) \,,
\end{equation}
where we have defined the constant $h_0 = \sgn (\Delta_0)
c$. Observe that the first term in $H$ is harmonic and hence the $O(1)$ term
is also a harmonic function.  Therefore, we deduce that a horizon
corresponds to an isolated singularity of the harmonic function $H$;
specifically the horizon is a pole of order one.

We now turn to the remaining harmonic functions. Firstly, observe that
we can write
$\partial_\psi = a \partial_{\tilde{\psi}}+ b \partial_{\tilde{\phi}}$
for some constants $a,b$. Hence the invariant
$g_{t\psi} = V \cdot \partial_\psi = \lambda h \cdot \partial_\psi=
\lambda (a h_{\tilde{\psi}}+ b h_{\tilde{\phi}})$. Then, the
near-horizon expansion of the invariants $g_{t\psi}$ and $f$, together
with \eqref{K} and smoothness of $\Psi$, imply that $K H^{-1}$ is
smooth at $\lambda=0$.  Therefore $K = O(r^{-1})$ and since it is
harmonic it must have a pole of order one, so
\begin{equation}\label{NHK}
K = \frac{k_0}{r} +O(1)\,,
\end{equation}
where $k_0$ is a constant and the $O(1)$ term is harmonic.  Due to the shift freedom in $K$ \eqref{Kshift} the constant $k_0$ can be set to any value.  

Next, using the expansion of the invariants $f$ and $- f^2 \omega_\psi = V \cdot \partial_\psi$ together with \eqref{Ldef} and \eqref{Mdef} implies
\begin{equation}\label{NHLM}
L =  \frac{\ell_0}{r}  +O(1), \qquad
M=\frac{1}{r} \left( \frac{j(a +  b  \cos {\theta})}{8 c} +
  \frac{k_0^3}{2h_0^2} - \frac{3 \mu k_0}{8 h_0^2}  \right) +O(1) \,,
\end{equation}
where $\ell_0=  h_0^{-1}(\tfrac{1}{4} \mu - k_0^2)$ and hence $L$ has a pole of order one. Also we have $M= O(r^{-1})$ and since it is harmonic this implies $M$ also has a
pole of order one so must be of the form $M= m_0/r +O(1)$ where $m_0$
is a constant. Hence $b=0$ so we deduce the triholomorphic Killing field $\partial_\psi = a \partial_{\tilde{\psi}}$.  In summary, so far we have shown that a regular horizon corresponds to a simple pole of the harmonic functions $H,K,L,M$.

We now turn to global constraints.
The
precise periodicities of $(\tilde{\psi}, \tilde{\phi})$ determine the
horizon topology which in general may be that of a lens space.  Now, asymptotic flatness fixes the identifications of the GH space angles $(\psi, \phi)$ to be standard Euler angles on $S^3$ (\ref{vpm}). This will impose identifications on the $(\tilde{\psi}, \tilde{\phi})$ angles. To analyse this, it is convenient to note that the Killing vectors on the horizon which have fixed points at the poles
$\tilde{\theta}=0, \pi$ are $\tilde{v}_\pm = \partial_{\tilde{\phi}}
\mp \partial_{\tilde{\psi}}$. For a lens space $L(p,q)$ these must be related
to the independently $2\pi$-periodic vectors (\ref{vpm}) of the asymptotically flat region by 
\begin{equation}
\left( \begin{array}{c} \tilde{v}_-  \\ \tilde{v}_+ \end{array}
\right) = 
A  \left( \begin{array}{c} v_- \\  v_+ \end{array} \right)   \label{GL2Z}
\end{equation}
where  $A\in GL(2,\mathbb{Z})$ and $\det A=p \in \mathbb{Z}$. The corresponding transformation in terms of Euler angles can be deduced from (\ref{vpm}), which implies $\det \gamma_{ij}|_{i,j= \tilde{\psi},
  \tilde{\phi}} = p^2 \det \gamma_{ij}|_{i,j= {\psi}, {\phi}}$. Thus, comparing to our local analysis above shows that $c=\pm p$ is precisely the integer which defines the lens spaces $L(p,q)$. In fact, by fixing the sign of $p$ appropriately we will identify the constant in \eqref{NHH} as
\be
  h_0 = p  \; .
\ee

Equations \eqref{NHH}--\eqref{NHLM} derived in this section are
necessary conditions for regularity at the horizon.  We will
examine sufficient conditions for a regular horizon in section \ref{sec:horizonreg}.

\subsection{$S^1\times S^2$ horizon}
\label{sec:ringhorizon}

We will now repeat the above analysis for the other type of near-horizon geometry. The horizon data is now
\begin{equation}
\begin{gathered}
\Delta_0 = 0\,, \qquad h|_{\lambda=0} = \frac{R}{\ell} \td \tilde{\psi} \,,\\
\gamma|_{\lambda=0}  = R^2 \td \tilde{\psi}^2 + \ell^2 ( \td \tilde{\theta}^2 + \sin^2 \tilde{\theta} \td \tilde{\phi}^2 )\,,
\end{gathered}
\end{equation}
where the constant $R>0$ has been introduced for later
convenience. Thus, near the horizon the metric is given by
\eqref{gnh} where
\begin{equation}\label{nh2data}
\begin{gathered}
  {F} = \frac{1}{\ell^2} +O(\lambda)\,,\qquad
  h^i =\frac{1}{R\ell}\delta^i_{\tilde{\psi}} +O(\lambda)\,,\\
  h_{\tilde{\theta}} = O(\lambda)\,, \qquad
  \gamma_{\tilde{\theta} \tilde{\theta}} = \ell^2+O(\lambda) \,, \qquad
  \gamma_{\tilde{\theta} i}=O(\lambda) \,, \\
  \gamma_{ij} \td \tilde{\phi}^i \td \tilde{\phi}^j = R^2 \td
  \tilde{\psi}^2 +\ell^2 \sin^2 \tilde{\theta} \td
  \tilde{\phi}^2+O(\lambda)  \,.
\end{gathered}
\end{equation}
The near-horizon geometry has biaxial symmetry generated by the
Killing fields $\partial_{\tilde{\psi}}, \partial_{\tilde{\phi}}$. As
before, these must be related by a constant linear transformation to
the biaxial Killing fields of the GH space, so
$\det \gamma_{ij}|_{i,j= \tilde{\psi}, \tilde{\phi}} = c^2 \det
\gamma_{ij}|_{i,j= {\psi}, {\phi}}$ for some constant $c>0$. Then,
using \eqref{rhonh} and \eqref{znh} we find
\begin{equation}
\rho = \frac{R}{c} \lambda \sin \tilde{\theta}  +O(\lambda^2)\,, \qquad z = \frac{R}{c}\lambda \cos \tilde{\theta}+O(\lambda^2)\,,  \label{rhoznh}
\end{equation}
so the $\mathbb{R}^3$ polar coordinates are
\begin{equation}
r = \frac{R}{c} \lambda + O(\lambda^2)  \,, \qquad \cos\theta =
\cos\tilde{\theta}  + O(\lambda)\,.
\end{equation}
Hence, again, the horizon corresponds to a point in the orbit space metric. 

We may now compare to the orbit space metric \eqref{orbit} and \eqref{orbitnh} near the horizon. Using \eqref{rhoznh} we find
\begin{equation}
\frac{H}{f} = \frac{c^2 \ell^2}{R^2 \lambda^2}+O(\lambda^{-1})  = \frac{\ell^2}{ r^2}+ O(r^{-1})\,.
\end{equation}
The function $f= \lambda^2 \tilde{\Delta} + O(\lambda^3)$, for some non-zero constant $\tilde{\Delta}$~\cite{Reall:2002bh}. Thus we learn that
\begin{equation}
H = \frac{\tilde{\Delta} \ell^2}{R^2} + O(\lambda)\,,
\end{equation}
so in this case $H$ is a smooth harmonic function at the horizon.  

We now determine the other harmonic functions. Near the horizon the
invariant $g_{t\psi} = V \cdot \partial_\psi = \lambda h
\cdot \partial_\psi = \lambda a h_{\tilde{\psi}} + O(\lambda^2)$,
where the final equality follows from writing $\partial_\psi =
a \partial_{\tilde{\psi}}+ b \partial_{\tilde{\phi}}$ for constants
$a$, $b$. Then, expanding \eqref{K}  near the horizon we find
\begin{equation}
KH^{-1} = - \frac{a R}{\ell \tilde{\Delta}\lambda} +O(1)\,,
\end{equation}
where we have used the near-horizon expansion of $f$ and smoothness of $\Psi$. Hence 
\begin{equation}
K = - \frac{a\ell}{R \lambda}+O(1) = -\frac{a \ell}{c r} +O(1)\,.
\end{equation}
Using the invariants $f$ and $V \cdot \partial_\psi=- f^2 \omega_\psi$, together with \eqref{Ldef} and \eqref{Mdef}, then implies
\begin{equation}
L= \frac{(1-a^2) R^2}{\tilde{\Delta} c^2 r^2} +O(r^{-1}), \qquad M= \frac{a(1-a^2) R^4}{\tilde{\Delta} \ell c^3 r^3}+ O(r^{-2})\,.
\end{equation}
Therefore $L$ has a pole of order at most two. However, a harmonic function in $\mathbb{R}^3$ with a pole of order two must be of the form $L= c_1 r^{-2} \cos \theta +c_2 r^{-1} +O(1)$. Thus, since the coefficient of the $r^{-2}$ term is a constant it must vanish and hence we have $a=1$ (choosing a sign). Hence $L=O(r^{-1})$ so harmonicity implies it has a pole of at most order one. This then also implies $M$ has a pole of at most order two. 

In fact we may show that $M=O(r^{-1})$ as follows. The explicit expression for the invariant $g_{\psi\psi}$ in \eqref{invariants}, together the above behaviour of the harmonic functions  requires $K^3 M/ (HL +K^2)^2$ to be smooth at $\lambda=0$.  It follows that
\begin{equation}
M = O(r^{-1})
\end{equation}
and hence harmonicity implies it must also have a pole of order one.  

We now turn to global constraints imposed by asymptotic flatness. In
the case of a $S^1\times S^2$ horizon, $\partial_{\tilde{\psi}}$ and
$\partial_{\tilde{\phi}}$ are independently periodic and we will choose the constant $R$ so that $\tilde{\psi}$ has
period $4\pi$. Hence we can relate $2 \partial_{\tilde{\psi}}$ and
$\partial_{\tilde{\phi}}$ to the independently $2\pi$ periodic vectors
$v_+$, $v_-$ of the asymptotically flat region \eqref{vpm} by a
$SL(2, \mathbb{Z})$ transformation. Their relation to the Euler angles of the Gibbons--Hawking base can then be
deduced from \eqref{vpm}. We find
$\det \gamma_{ij}|_{i,j= \tilde{\psi}, \tilde{\phi}} = \det
\gamma_{ij}|_{i,j= {\psi}, {\phi}}$ so we deduce the constant $c=1$.

We will examine sufficient conditions
for regularity of the horizon in section \ref{sec:horizonreg}.
  
\subsection{Horizon regularity and topology}\label{sec:horizonreg}

As we have seen above a regular horizon corresponds to an isolated
singularity in the $\mathbb{R}^3$ base of the GH space which we may
take to be the origin $r=0$. Furthermore, the harmonic functions have
at most simple poles at the horizon and so can be written as
\begin{equation}
H = \frac{h_0}{r} + H_0\,, \qquad K= \frac{k_0}{r}+ K_0\,, \qquad L =
\frac{\ell_0}{r}+L_0\,, \qquad M = \frac{m_0}{r}+M_0\,, \label{NHharmonicfunc}
\end{equation}
where $h_0, k_0, \ell_0, m_0$ are constants and $H_0, K_0, L_0, M_0$
are harmonic functions that are smooth at $r=0$.  Thus we can write
$H_0= c_0 +O(r)$, where $c_0$ is a constant and the $O(r)$ term is analytic in $r$, and similarly for the other
harmonic functions.  In particular, using \eqref{invariants} this
implies
\begin{equation}
  \begin{aligned}
    \frac{H}{f} &= \frac{\alpha_0}{r^2} + \frac{\alpha_1}{r} +O(1)\,, \\
    g_{\psi\psi}  &= \beta_0+ r \beta_1+ O(r^2)\,, \\ 
    g_{t\psi} &= r( \gamma_0+ r \gamma_1+ O(r^2))\,,
  \end{aligned}
\end{equation}
where $\alpha_i, \beta_i, \gamma_i$ are {\it constants} and comparing
to the near-horizon analysis in the previous sections implies
\begin{equation}
 \alpha_0>0, \qquad \beta_0>0\,.   \label{nhineq}
\end{equation}
Then we can write
\begin{equation}
 f = r\left(  \frac{h_0}{ \alpha_0}  + \frac{c_0\alpha_0-h_0\alpha_1}{\alpha_0^2} r +O(r^2) \right)\,.
\end{equation}
The explicit expressions for leading order coefficients are
\begin{equation}
  \begin{aligned}
    \alpha_0 & = h_0 \ell_0+ k_0^2\,,\\
    \beta_0 & =\frac{ -h_0^2m_0^2 - 3h_0k_0\ell_0m_0 + h_0\ell_0^3 -
      2k_0^3m_0 + \frac{3}{4} k_0^2\ell_0^2}{\alpha_0^2}\,,\\
    \gamma_0^2 &= \frac{\alpha_0-h_0^2\beta_0}{\alpha^2_0} =
    \left(\frac{h_0^2m_0 + \frac{3}{2}h_0k_0\ell_0 + k_0^3}{\alpha_0^2} \right)^2 \,.
  \end{aligned}
\end{equation}
Notice that the last relation shows that $\gamma_0^2\geq 0$ does not lead
to any inequalities beyond (\ref{nhineq}).  In fact, from the very
same relation we can see that $\alpha_0^2\beta_0 >0$ actually implies
$\alpha_0>0$ so \eqref{nhineq} is really equivalent to the single condition
\begin{equation}
  -h_0^2m_0^2 - 3h_0k_0\ell_0m_0 + h_0\ell_0^3 -
  2k_0^3m_0 + \frac{3}{4} k_0^2\ell_0^2 > 0 \label{nhineq_reduced}
\end{equation}
on the parameters $h_0$, $k_0$, $\ell_0$, $m_0$.   It is worth noting that
\be
K^2+HL = \frac{\alpha_0}{r^2} + O(r^{-1}), \qquad g^{tt} = - \frac{\alpha_0 \beta_0}{r^2} + O(r^{-1})
\ee
which already confirms that the above inequalities imply the solution is smooth (\ref{smooth1}) and stably causal (\ref{causality}) near (but not at) the horizon. We will now show that \eqref{nhineq_reduced} is  sufficient for regularity at the horizon. 
 
 To this end, let us perform a coordinate transformation
\begin{equation}\label{NHcoordtransf}
 \dt = \dv + \left(\frac{A_0}{r^2}+\frac{A_1}{r}\right)\dr\,, \qquad
 \dps = \dps' + \frac{B_0}{r}\dr  + C\dph' \,, \qquad \dph=\dph'  \,,
\end{equation}
where $A_0$, $A_1$, $B_0$, $C$ are constants to be determined. Using the
above expansions, it follows that $g_{rr}$ contains $1/r^2$ and $1/r$
singular terms, whereas $g_{r\psi'}$ contains $1/r$ singular
terms. Demanding that the $1/r^2$ term in $g_{rr}$ and $1/r$ term in
$g_{t\psi'}$ vanish is equivalent to setting
\begin{equation}
A_0^2 = \beta_0 \alpha_0^2, \qquad B_0 = - \frac{A_0
  \gamma_0}{\beta_0}\,.
\end{equation}
Demanding that the $1/r$ term in $g_{rr}$ vanishes fixes
\begin{equation}
A_1 = \frac{\alpha_0 \beta_0}{2 A_0} \left( B_0^2 \beta_1+ 2B_0 A_0
  \gamma_1+ \alpha_1- \frac{2 h_0(c_0\alpha_0-h_0\alpha_1)}{\alpha_0^3} A_0^2 \right)\,.
\end{equation}
Note that we have simplified $A_0$, $A_1$ using the identity for
$\gamma_0$ above. With these choices, $g_{rr}$ and $g_{r\psi'}$ are analytic at $r=0$.  

We will also need the near-horizon behaviour of the 1-forms $\chi$,
$\omh$, $\xi$. Using the behaviour of the harmonic functions
\eqref{NHharmonicfunc} near the horizon we find
\begin{align}
  \star_3\d\chi &= \big(-h_0r^{-2} + \Ord(1)\big)\dr + \Ord(r)\dth\,,\\ 
  \star_3\d\omh &= \Ord(r^{-2})\dr + \Ord(1)\dth\,, \\
  \star_3\d\xi &= \big(k_0r^{-2} + \Ord(1)\big)\dr + \Ord(r)\dth\,,
\end{align}
and writing the 1-forms as (\ref{1forms}), we may integrate to get
\begin{equation}
  \chi = \big(h_0\cos\theta + \tilde{\chi}_0 + \Ord(r^2)\big)\dph\,, \qquad 
  \omh = \Ord(1)\dph\,, \qquad
  \xi = \big(-k_0\cos\theta + \tilde{\xi}_0 + \Ord(r^2)\big)\dph
\end{equation}
for some constants $\tilde{\chi}_0$, $\tilde{\xi}_0$. For convenience we choose
$C=-\tilde{\chi}_0$ in \eqref{NHcoordtransf}.  The full metric near $r=0$ now
reads
\begin{multline}\label{fullmetricNH}
  \ds^2 = -r^2\Big( \frac{h_0^2}{\alpha_0^2} + \Ord(r) \Big) \big(\td
  v+\Ord(1)\dph\big)^2 \pm  2  \Big(
    \frac{1}{\sqrt{\beta_0}} + \Ord(r) \Big) \big(\td v+ \Ord(1)\dph \big)\td
    r  +O(1) \td r^2 \\
  + \Ord(1) \big( \td \psi'+ h_0 \cos \theta \td \phi' +\Ord(r^2)
  \big) \td r
  + 2r\big(  \gamma_0 + \Ord(r)\big)\big(\td v+ \Ord(1)\dph \big)\big( \td \psi'+ h_0 \cos
  \theta \td \phi'\big)  \\
  + \big( \beta_0 + \Ord(r)\big) \big( \td \psi'+ h_0 \cos \theta \td \phi'
  +\Ord(r^2)\big) ^2 + \big(\alpha_0 + \Ord(r)\big) \big(\dth^2+ \sin^2 \theta
  \td \phi'^2\big)\,.
\end{multline}
The metric and its inverse are now analytic at $r=0$, hence the
spacetime can be analytically extended to the region $r \leq 0$. The
surface $r=0$ is an extremal Killing horizon with respect to the
supersymmetric Killing field $V = \partial /\partial v$. The upper (lower) sign in $g_{vr}$ corresponds to a future (past) horizon. The gauge
field near $r=0$ is given by
\begin{multline}
  A = \frac{\sqrt{3}}{2}\bigg[\Big(\frac{h_0}{\alpha_0}r + \Ord(r^2)\Big)\dv 
  \pm
  \Big(\frac{\beta_0h_0-\gamma_0(h_0m_0+\frac{1}{2}k_0\ell_0)}{\sqrt{\beta_0}r
   }+ \Ord(1)\Big)\dr\\
  + \Big(\frac{h_0m_0+\frac{1}{2}k_0\ell_0}{\alpha_0} +
  \Ord(r)\Big)(\dps' + h_0\cos\theta\dph') - \Big(\tilde{\xi}_0 - k_0\cos\theta +
  \Ord(r)\Big)\dph'\bigg]\,,
\end{multline}
so we see that the only singular terms are pure gauge. Therefore the Maxwell field $F=\d A$ (and hence the full solution)
is analytic at $r=0$.

The near-horizon limit can be taken by
transforming to coordinates $(v,r)\to
(v/\epsilon,\epsilon r)$ and letting $\epsilon \to 0$, 
giving the near-horizon geometry
\begin{multline}\label{NHmetric}
    \td s^2_{\text{NH}} = -r^2 \frac{h_0^2}{\alpha_0^2} \dv^2  \pm
    \frac{2}{\sqrt{\beta_0}}  \dv \dr 
  + 2r \gamma_0 \dv( \td \psi'+ h_0 \cos \theta \td \phi')  \\
    +  \beta_0  ( \td \psi'+ h_0 \cos \theta \td \phi') ^2 +
    \alpha_0  (\d \theta^2+ \sin^2 \theta \td \phi'^2)\,,
\end{multline}
as well as the near-horizon Maxwell field
\begin{equation}
  F_{\text{NH}} =  \frac{\sqrt{3}}{2}\bigg[
\frac{h_0}{\alpha_0}\dr\wedge\dv 
  -
  \Big(\frac{h_0}{\alpha_0}(h_0m_0+\tfrac{1}{2}k_0\ell_0)+k_0\Big)\sin\theta\dth\wedge\dph'
  \bigg]\,.
\end{equation}
The second line in \eqref{NHmetric} is the metric induced on cross-sections of the horizon.
For $h_0=0$ it is simply the standard product metric on $S^1\times S^2$.
For $h_0 \neq 0$ it is a locally homogeneous metric on $S^3$.

Our analysis so far in this section has been local. We will now examine the constraints imposed by asymptotic flatness.  Recall in our near-horizon analysis in section \ref{sec:S3horizon} and \ref{sec:ringhorizon} we showed that $h_0=p \in \mathbb{Z}$ is the integer which fixes  the horizon topology to be a lens space $L(p,q)$. The precise topology is determined by the identifications on the angles. These are already fixed by asymptotic flatness which requires $(\psi, \phi)$ to be identified as the standard Euler angles on $S^3$.  For $p \neq 0$, it is convenient to define $\bar{\psi} = \psi' /p$ and $\bar{\phi}=\phi'$.  From the coordinate change \eqref{NHcoordtransf}, the Killing fields are related by 
\be
\begin{pmatrix} \partial_{\bar{\psi}}
  \\ \partial_{\bar{\phi}} \end{pmatrix} =
\begin{pmatrix} p & 0 \\ -\tilde{\chi}_0 & 1 \end{pmatrix}
\begin{pmatrix} \partial_{{\psi}} \\ \partial_{{\phi}} \end{pmatrix}
\ee
and hence the matrix $A$ in (\ref{GL2Z}) is determined using (\ref{vpm}). Requiring the entries of $A$ to be integer is then equivalent to $\tilde{
\chi}_0=p+2n -1$ for some integer $n$. The matrix $A$ then simplifies to
\be
A =  \begin{pmatrix} 1-n & n \\ -p-n+1 & p+n \end{pmatrix} \,.
\ee
By a basis change $A\to A'= A B$ where $B\in SL(2, \mathbb{Z})$ we can bring the matrix into triangular form
\be
A' =  \begin{pmatrix} 1 & 0  \\ q & p \end{pmatrix} \,,
\ee
where 
\be
B = \begin{pmatrix} \alpha & -n \\ \beta & 1-n \end{pmatrix}
\ee
and $(1-n) \alpha+ n \beta=1$ and $q = 1 +p (\beta- \alpha)$. We deduce the important result
\be
q \equiv 1 \mod p \,.
\ee
Therefore, we have shown that the identifications that arise from asymptotic flatness, together with regularity, imply the only possible horizon topology is $L(p, 1)$.  With these global identifications we find the area of cross-sections of the horizon is
\begin{equation}
  A = 16\pi^2 \sqrt{-h_0^2m_0^2 - 3h_0k_0\ell_0m_0 + h_0\ell_0^3 -
  2k_0^3m_0 + \frac{3}{4} k_0^2\ell_0^2 } \; ,
\end{equation}
where the expression under the square root is positive (\ref{nhineq_reduced}).
This completes our analysis of the horizon.

\subsection{Summary}
To summarise, we have established the following results.

\begin{theorem} \label{thm:horizon} Consider a supersymmetric and
  biaxisymmetric solution to minimal supergravity containing a smooth
  supersymmetric horizon with compact cross-sections of topology
  $S^1 \times S^2$ or locally $S^3$. In the orbit space metric the
  horizon is an isolated singular point on the boundary $\rho=0$,
  which we may take to be the origin $\rho=z=0$. Equivalently, in the
  Gibbons--Hawking metric, the horizon is an isolated singular point on
  the $z-$axis, which we may take to be the origin of
  $\mathbb{R}^3$. Furthermore, the harmonic functions can be written
  as
\begin{equation}
H = \frac{h_0}{r}+ H_0  \,,\qquad
K = \frac{k_0}{r} + K_0  \,,\qquad
L = \frac{\ell_0}{r} + L_0  \,,\qquad
M = \frac{m_0}{r} +M_0 \,,
\end{equation}
where $r= \sqrt{\rho^2+ z^2}$, $H_0$, $K_0$, $L_0$, $M_0$ are harmonic
functions which are smooth at $r=0$ and $h_0$, $k_0$, $\ell_0$, $m_0$
are constants, where $h_0 \neq 0 $
for a locally $S^3$ horizon and $h_0 =0$ for a $S^1\times S^2$
horizon. In addition, the parameters satisfy
\begin{equation}
  -h_0^2m_0^2 - 3h_0k_0\ell_0m_0 + h_0\ell_0^3 -
  2k_0^3m_0 + \frac{3}{4} k_0^2\ell_0^2 > 0\,,
\end{equation}
which in particular also implies that $h_0\ell_0 + k_0^2
>0$.
\end{theorem}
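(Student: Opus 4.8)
The claims have been established piecemeal in the preceding subsections; the plan is to assemble them in logical order. Since the event horizon $\mathcal{H}$ is invariant under every isometry, the supersymmetric Killing field $V$ is tangent to it, and as $V$ is never spacelike it is null, and hence normal, on $\mathcal{H}$; thus $\mathcal{H}$ is a Killing horizon of $V$, and $\td(V\cdot V)=0$ there forces it to be degenerate. Introducing Gaussian null coordinates adapted to the biaxial symmetry puts the metric in the form \eqref{gnh}. First I would extract the orbit space metric \eqref{orbitgnc} from this (following~\cite{Chrusciel:2010gq}) and compute the Weyl coordinate $\rho=\sqrt{-\det G}$ and its harmonic conjugate $z$ near $\lambda=0$, obtaining $\rho$ proportional to $\lambda$ and $z$ constant to leading order, as in \eqref{rhonh}--\eqref{znh}. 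This shows that in the orbit space $\mathcal{H}$ is a single boundary point, which we may place at $\rho=z=0$, i.e.\ at the origin of the $\mathbb{R}^3$ base of the Gibbons--Hawking space; since the harmonic functions are smooth on the rest of a punctured neighbourhood, this point is isolated. One also checks from \eqref{orbitgnc} that every point of the orbit space interior lies at infinite proper distance from $\mathcal{H}$, consistent with a degenerate horizon being an asymptotic end.

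Next I would invoke the classification of supersymmetric near-horizon geometries~\cite{Reall:2002bh}: for compact cross-sections there are exactly two possibilities, a locally $S^3$ horizon with $\Delta_0\neq0$ and an $S^1\times S^2$ horizon with $\Delta_0=0$. In each case I would compare the near-horizon form of the orbit space metric \eqref{orbitgnc} with the Weyl form \eqref{orbit} to read off $H/f$ and combine this with the known near-horizon behaviour of $f$ from~\cite{Reall:2002bh}: this gives $H=h_0/r+O(1)$ with $h_0=\sgn(\Delta_0)\,c\neq0$ in the locally $S^3$ case, where $c>0$ is the constant Jacobian relating the biaxial Killing fields on the horizon to those of the base, whereas in the $S^1\times S^2$ case $H$ is bounded as $r\to0$, so $h_0=0$. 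Then, using the spacetime invariants $g_{t\psi}$, $g_{\psi\psi}$, $f$ from \eqref{invariants} together with \eqref{K}, \eqref{Ldef}, \eqref{Mdef} and the smoothness of the magnetic potential $\Psi$, I would extract the behaviour of $K$, $L$, $M$, making repeated use of harmonicity on $\mathbb{R}^3$, which constrains the coefficient of $r^{-k}$ in any harmonic function near an isolated singularity to be a degree-$(k-1)$ spherical harmonic (a constant for $k=1$, proportional to $\cos\tilde\theta$ for $k=2$). In the locally $S^3$ case $K$, $L$, $M$ come out $O(r^{-1})$ directly, and this constraint with $k=1$ forces the $\tilde\theta$-dependent coefficient of $1/r$ in $M$ to be constant, i.e.\ the constant $b$ in $\partial_\psi=a\partial_{\tilde\psi}+b\partial_{\tilde\phi}$ to vanish. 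In the $S^1\times S^2$ case the invariants a priori allow $L$ a pole of order two; the constraint with $k=2$ kills its constant $r^{-2}$ coefficient, forcing $a=1$, hence $\partial_\psi=\partial_{\tilde\psi}$, which simultaneously removes the would-be $r^{-3}$ term in $M$, and then smoothness of $g_{\psi\psi}$ (equivalently of $K^3M/(HL+K^2)^2$) forces $M=O(r^{-1})$. In both cases every pole is therefore simple; subtracting it off and applying the removable-singularity theorem for bounded harmonic functions yields the decomposition \eqref{NHharmonicfunc} with $H_0,K_0,L_0,M_0$ harmonic and smooth at $r=0$.

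Finally, for the inequality I would expand the invariants $H/f$, $g_{\psi\psi}$, $g_{t\psi}$ near $r=0$ using \eqref{NHharmonicfunc}; their leading coefficients are $\alpha_0=h_0\ell_0+k_0^2$, a coefficient $\beta_0$ equal to $(-h_0^2m_0^2-3h_0k_0\ell_0m_0+h_0\ell_0^3-2k_0^3m_0+\tfrac34 k_0^2\ell_0^2)/\alpha_0^2$, and $\gamma_0$. Matching these against the near-horizon expansions of Sections~\ref{sec:S3horizon} and~\ref{sec:ringhorizon}, in particular against the positivity of the horizon cross-section metric coefficients appearing there, gives $\alpha_0>0$ and $\beta_0>0$. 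The algebraic point is the identity $\gamma_0^2=(\alpha_0-h_0^2\beta_0)/\alpha_0^2$: it shows both that $\gamma_0^2\geq0$ imposes nothing new and that $\alpha_0^2\beta_0>0$ already forces $\alpha_0>0$, so the pair of inequalities collapses to the single condition $-h_0^2m_0^2-3h_0k_0\ell_0m_0+h_0\ell_0^3-2k_0^3m_0+\tfrac34 k_0^2\ell_0^2>0$, whence $h_0\ell_0+k_0^2=\alpha_0>0$. I expect the most delicate step to be this pole-order reduction for $L$ and $M$, that is, tracking which coefficients in the invariants are fixed directly by the near-horizon data and which only by harmonicity, rather than the largely mechanical coordinate computations. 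The sufficiency of \eqref{nhineq_reduced} for a genuinely regular horizon, and the refinement of the horizon topology to $L(p,1)$, would then follow from the coordinate change \eqref{NHcoordtransf} and the identifications imposed by asymptotic flatness, but these are not needed for the statement as given.
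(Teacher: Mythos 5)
Your proposal is correct and follows essentially the same route as the paper: Gaussian null coordinates adapted to the biaxial symmetry, extraction of the orbit space metric to show the horizon is a single (isolated) boundary point, the Reall near-horizon classification split into the $\Delta_0\neq 0$ and $\Delta_0=0$ cases, pole-order reduction for $K,L,M$ via the invariants, smoothness of $\Psi$, and the spherical-harmonic constraint on coefficients of harmonic functions (forcing $b=0$, respectively $a=\pm1$ and $M=O(r^{-1})$), and finally the identity $\gamma_0^2=(\alpha_0-h_0^2\beta_0)/\alpha_0^2$ to collapse $\alpha_0>0,\ \beta_0>0$ to the single stated inequality. No substantive differences or gaps relative to the paper's argument in sections \ref{sec:hor}--\ref{sec:horizonreg}.
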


\begin{theorem}\label{thm:horizon+AF} Consider an asymptotically flat,
  supersymmetric and biaxisymmetric black hole solution to
  five-dimensional minimal supergravity.
  \begin{enumerate}
  \item Cross-sections of any connected
  component of the horizon must be homeomorphic to $S^3$, $S^1\times
  S^2$ or a lens space $L(p,1)$.  
  \item The coefficient of the
  singular term in the harmonic function $H$ is $h_0=\pm 1$ for an
  $S^3$ horizon, $h_0=0$ for $S^1\times S^2$ and more generally $h_0=p \in \mathbb{Z}$ for $L(p,1)$.
  \end{enumerate}
\end{theorem}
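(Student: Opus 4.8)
The statement is a synthesis of the near-horizon analysis of Sections~\ref{sec:S3horizon}--\ref{sec:horizonreg} together with the classification of supersymmetric near-horizon geometries in \cite{Reall:2002bh}; I would assemble it as follows. Since each connected component of the horizon is an isolated point in the orbit space (Theorem~\ref{thm:horizon}), the components may be treated one at a time, so it suffices to analyse a single component $\mathcal{H}$. For an asymptotically flat solution $V$ is the stationary Killing field, and as observed at the start of Section~\ref{sec:hor} any Killing field is tangent to $\mathcal{H}$, forcing $V$ to be null there; hence $\mathcal{H}$ is a degenerate Killing horizon with respect to $V$ with compact cross-sections. The classification of \cite{Reall:2002bh} then leaves exactly two local possibilities for the near-horizon geometry: $\Delta_0\neq 0$, with cross-sections locally $S^3$, and $\Delta_0=0$, with cross-sections $S^1\times S^2$ (the $T^3$ case being excluded since it cannot bound a black hole region). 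That same classification guarantees that the near-horizon geometry is biaxially symmetric, so the orbit-space machinery of Section~\ref{sec:hor} applies.

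Next I would feed in the local near-horizon computations. By Theorem~\ref{thm:horizon}, placing $\mathcal{H}$ at the origin of the Gibbons--Hawking $\mathbb{R}^3$, the harmonic functions have at most simple poles there, $H=h_0/r+H_0$ and so on, with $h_0\neq0$ for a locally $S^3$ horizon and $h_0=0$ for an $S^1\times S^2$ horizon. From the matching of the orbit space metric to Weyl coordinates carried out in Sections~\ref{sec:S3horizon} and \ref{sec:ringhorizon}, one has more precisely $h_0=\sgn(\Delta_0)\,c$ in the locally $S^3$ case and $h_0=0$ in the $S^1\times S^2$ case, where $c>0$ is the constant relating $\det\gamma_{ij}$ computed in the near-horizon biaxial basis $(\partial_{\tilde{\psi}},\partial_{\tilde{\phi}})$ to the same determinant computed in the Gibbons--Hawking basis $(\partial_\psi,\partial_\phi)$.

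The global statement then comes from asymptotic flatness, which pins the identifications on the Gibbons--Hawking angles $(\psi,\phi)$ to be the standard Euler angles on the $S^3$ at infinity, with the independently $2\pi$-periodic generators $v_\pm$ of \eqref{vpm}. The Killing fields $\tilde{v}_\pm=\partial_{\tilde{\phi}}\mp\partial_{\tilde{\psi}}$ that vanish at the poles $\tilde{\theta}=0,\pi$ of a horizon cross-section must be related to $v_\pm$ by some $A\in GL(2,\mathbb{Z})$ with $\det A=p$; comparing the two determinants identifies $c=|p|$, so that after fixing the sign $h_0=p\in\mathbb{Z}$, with $p=\pm1$ giving $S^3$ and $p=0$ giving $S^1\times S^2$. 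Finally, using the explicit coordinate change \eqref{NHcoordtransf} relating the near-horizon angles $(\psi',\phi')$ to the global ones, integrality of the entries of $A$ forces $\tilde{\chi}_0=p+2n-1$ for an integer $n$, and a residual $SL(2,\mathbb{Z})$ change of basis brings $A$ to lower-triangular form with diagonal entries $1$ and $p$, whose lower-left entry $q$ necessarily satisfies $q\equiv1\pmod p$. Thus the only admissible lens-space topology is $L(p,1)$, establishing part~1, while the values of $h_0$ recorded above establish part~2.

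The one genuinely delicate step is the lens-space identification of Section~\ref{sec:horizonreg}: one must carefully track which integral combinations of $(\partial_\psi,\partial_\phi)$ degenerate at the poles of the horizon cross-section as opposed to on the semi-infinite axes reaching infinity, impose that the change-of-basis matrix between the near-horizon and asymptotic $U(1)^2$ generators lies in $GL(2,\mathbb{Z})$, and then use the leftover $SL(2,\mathbb{Z})$ freedom to extract $q\bmod p$. Everything else reduces to collecting the local near-horizon expansions already established and matching determinants of the Killing-field Gram matrix across the relevant bases.
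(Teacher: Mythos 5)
Your proposal is correct and follows essentially the same route as the paper: the near-horizon classification of \cite{Reall:2002bh}, the determinant matching that gives $c=|p|$ and hence $h_0=p$ (with $h_0=0$ in the $S^1\times S^2$ case), and the $GL(2,\mathbb{Z})$ compatibility with the asymptotic Euler-angle identifications followed by an $SL(2,\mathbb{Z})$ reduction forcing $q\equiv 1 \pmod p$, exactly as in sections \ref{sec:S3horizon}--\ref{sec:horizonreg}. The only thing you do not reproduce is the paper's separate, more elementary alternative proof of part 1 via the rod-structure determinants of section \ref{sec:rodstructure}, but that is not needed.
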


\noindent {\bf Remarks}
\begin{enumerate}
\item Theorem \ref{thm:horizon} is a five-dimensional analogue of Theorem 3.2 in~\cite{Chrusciel:2005ve} which is a crucial ingredient for the classification of supersymmetric four-dimensional black hole spacetimes.  
\item The horizon is an isolated singular point of the orbit space metric also in the vacuum case. In fact this is a consequence of the $SO(2,1)$-symmetry of near-horizon geometries~\cite{Figueras:2009ci}.
\item We will offer an alternative proof of part 1 of Theorem \ref{thm:horizon+AF} by analysing the rod structure of the general solution, see section \ref{sec:rodstructure}.
\end{enumerate}

\section{Geometry and topology of the axes}
\label{sec:axes}

\subsection{Rod structure}
\label{sec:rodstructure}

We now analyse the axes $\mathcal{A}$ in more detail. Recall that the
axes is the part of the boundary of the orbit space $\hat{M}$ where
$\det \gamma_{ij} =0$, i.e.~the $U(1)^2$-symmetry has fixed points. In
Weyl coordinates $(\rho,z)$ the boundary $\partial \hat{M}$ is the
$z$-axis. As shown above, a horizon corresponds to an isolated
singular point on the $z$-axis. The
remaining part of the $z$-axis corresponds to $\mathcal{A}$, which
splits into intervals along which $\gamma_{ij}$ is of rank $1$ with
endpoints on which $\gamma_{ij}$ is of rank $0$. The intervals where
$\gamma_{ij}$ is rank-$1$ correspond to the axis rods, and the
endpoints where $\gamma_{ij}=0$ to the corners of the orbit space
$\hat{M}$.

\begin{lemma}\label{lemma:axis}
  Smoothness of the spacetime near an axis rod $I$ implies
  $\omhph=O(\rho^2)$ and $\chi = \chi|_I + O(\rho^2)$ where
  $\chi|_I$ is an odd integer.  The Killing field which vanishes on
  $I$ is
  \begin{equation}\label{v}
    v=\partial_\phi - \chi|_I \partial_\psi\,,
  \end{equation}
  where the normalisation has been fixed so the orbits of $v$ (away
  from $I$) are $2\pi$-periodic.
\end{lemma}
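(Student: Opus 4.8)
The plan is to combine the regularity of the harmonic functions at an interior axis rod (Lemma~\ref{lemma:orbitspace}) with a short local analysis of the metric at $\rho=0$, and then feed in the global identification lattice fixed by asymptotic flatness; I would proceed in three steps. Step 1 (behaviour of the $1$-forms near $I$): an axis rod $I$ sits on the $z$-axis of the $\mathbb{R}^3$ base, away from the corners and the horizon, so by Lemma~\ref{lemma:orbitspace} the harmonic functions $H,K,L,M$ are smooth there and obey \eqref{smooth1}; hence, integrating \eqref{dchi}, \eqref{domegahat}, \eqref{dxi}, the $1$-forms $\chi,\omh,\xi$ are smooth and of the axisymmetric form \eqref{1forms}. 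Since a smooth axisymmetric function is a smooth function of $(\rho^2,z)$, the defining relations reduce on the axis to $\partial_\rho\chi=\rho\,\partial_z H$, $\partial_z\chi=-\rho\,\partial_\rho H$ (and analogues for $\omh,\xi$), which integrate to $\chi=\chi|_I+O(\rho^2)$ and $\omh=\omh|_I+O(\rho^2)$ with $\chi|_I,\omh|_I$ constant along $I$ (using $\partial_z\chi,\partial_z\omh=O(\rho^2)$). This already gives the stated form of $\chi$.

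Step 2 (the degenerate direction and $\omh|_I=0$): by \eqref{rho}, $\det G_{AB}=-\rho^2\to0$ as $\rho\to0$, while the $(t,\psi)$ sub-block of $G_{AB}$ has determinant $-N<0$ on $I$ (recall $N>0$ on axis rods), so $G_{AB}|_{\rho=0}$ has rank exactly $2$ and a one-dimensional kernel. Substituting the Step~1 expansions into the block form \eqref{block}, one checks that at $\rho=0$ the $\phi$-row of $G_{AB}$ equals $\chi|_I$ times its $\psi$-row plus $\omh|_I$ times its $t$-row, so the kernel is spanned by $\partial_\phi-\chi|_I\partial_\psi-\omh|_I\,\partial_t$. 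On the other hand, the Killing field $v=v_im_i$, $(v_1,v_2)\in\mathbb{Z}^2$, that vanishes on $I$ lies in $\mathrm{span}\{m_1,m_2\}=\mathrm{span}\{\partial_\psi,\partial_\phi\}$ (established when deriving the Gibbons--Hawking form), and, being zero along $I$, it is orthogonal to everything there and hence lies in this kernel. Comparing, $\omh|_I=0$ — i.e. $\omh=O(\rho^2)$ — and $v\propto\partial_\phi-\chi|_I\partial_\psi$.

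Step 3 (integrality and normalisation): with $\omh|_I=0$, put $\tilde\psi=\psi+\chi|_I\phi$; then \eqref{block} near $\rho=0$ splits as $\tfrac{H}{f}(\td\rho^2+\rho^2\,\td\phi^2)$ plus a non-degenerate Lorentzian block in $(\td t,\td\tilde\psi)$ plus $\tfrac{H}{f}\td z^2$, with only $O(\rho^2)$ cross terms. The first piece is a smooth $\mathbb{R}^2$ in polar coordinates precisely when $H/f$ is smooth and positive at $\rho=0$, which holds by \eqref{smoothaxis}, and $\phi$ is a genuine $2\pi$-periodic angle in this chart, i.e. $(\psi,\phi)\sim(\psi-2\pi\chi|_I,\phi+2\pi)$ is among the identifications. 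Asymptotic flatness fixes the identification lattice to be generated by $(\psi,\phi)\mapsto(\psi+4\pi,\phi)$ and $(\psi,\phi)\mapsto(\psi+2\pi,\phi+2\pi)$, and requiring $(-2\pi\chi|_I,2\pi)$ to lie in it forces $1+\chi|_I\in2\mathbb{Z}$, i.e. $\chi|_I$ is an odd integer. Finally, $\tilde\phi:=\phi$ gives $\partial_{\tilde\phi}=\partial_\phi-\chi|_I\partial_\psi$, so this is the vanishing Killing field, and since $(-\chi|_I,1)$ is primitive in the lattice its orbits have minimal period $2\pi$; this is the normalisation \eqref{v} (up to an irrelevant overall sign).

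The main obstacle is Step~3: Steps~1 and~2 are routine expansions and linear algebra, but one must use \emph{smoothness} of the spacetime at $I$ in full strength — not mere continuity, but the requirement that the collapsing $2$-plane be an honest smooth disc, equivalently that the $U(1)$ stabiliser of $I$ act as a standard rotation. It is exactly this which upgrades the a priori only rational $\chi|_I=(p-q)/(p+q)$ — obtained by writing $v=p v_+ + q v_-$ in the asymptotic $2\pi$-periodic basis \eqref{vpm} with $\gcd(p,q)=1$ and matching $v\propto\partial_\phi-\chi|_I\partial_\psi$ — to an \emph{odd integer}, forcing $p+q=\pm1$, and simultaneously fixes the unit $\partial_\phi$-coefficient of $v$. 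Care is also needed that the $O(\rho^2)$ corrections to $\chi,\omh,\xi$ are smooth functions of $\rho^2$, so they do not spoil smoothness on the disc; this follows from the harmonic functions being smooth and even in $\rho$.
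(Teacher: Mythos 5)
Your proposal is correct and follows essentially the same route as the paper: smoothness of the harmonic functions on $I$ (Lemma \ref{lemma:orbitspace}) gives $\chi=\chi|_I+O(\rho^2)$ and $\omh=\omh|_I+O(\rho^2)$ with constants along $I$; the vanishing Killing field is then identified on $I$; and compatibility of the resulting polar-angle identification with the Euler-angle lattice fixed by asymptotic flatness forces $\chi|_I$ to be odd and fixes the $2\pi$-normalisation. The only (minor) difference is how you obtain $\omh|_I=0$ and $v\propto\partial_\phi-\chi|_I\partial_\psi$: you read them off from the one-dimensional kernel of $G_{AB}|_{\rho=0}$ (rank $2$ because the $(t,\psi)$ block has determinant $-N<0$), whereas the paper evaluates the explicit invariant expressions $\omh_\phi=N^{-1}\det(\cdots)$ and $\chi_\phi=-N^{-1}\det(\cdots)$ using $\partial_\phi=c\,\partial_\psi$ on $I$ --- two equivalent pieces of linear algebra.
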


\begin{proof}
  First note that $\omh$ and $\chi$ may be
  related to invariants of the metric as
  \begin{equation}
    \omhph = \frac{1}{N}\begin{vmatrix} g_{t\psi} &
      g_{t \phi} \\ g_{\psi\psi} & g_{\psi \phi} \end{vmatrix}\,,
    \qquad
    \chiph = - \frac{1}{N} \begin{vmatrix} g_{t t} &
      g_{t \psi} \\ g_{t \phi} & g_{\psi \phi} \end{vmatrix} \,.
  \end{equation}
  These are smooth wherever $N>0$, which, by Lemma \ref{lemma:orbitspace}, includes
  $I$. Now, the $2\pi$-periodic Killing field which vanishes on $I$
  must be of the form
  \begin{equation}
    v = a v_+ + b v_- = (b-a) \partial_\psi+ (a+b) \partial_\phi\,,
  \end{equation}
  where the coefficients $(a,b)$ relative to the $2\pi$-periodic basis
  $\{v_+,v_-\}$ are coprime integers (not both vanishing), and where
  in the second equality we have used \eqref{vpm}. Furthermore,
  $a+b \neq 0$, as otherwise $\partial_\psi=0$ which we know cannot occur
  on $I$, again by Lemma \ref{lemma:orbitspace}. Therefore, on $I$ we can write
  $\partial_\phi = c \partial_\psi$, where $c = (a-b)/(a+b)$, and
  hence $g_{t\phi}=c g_{t\psi}$ and $g_{\psi \phi} = c g_{\psi \psi}$.
  It follows that on $I$ we have $\omhph=0$ and
  $\chiph = c$ and therefore the Killing field vanishing on $I$ is
  $v = (a+b) (\partial_\phi - \chi|_I \partial_\psi)$, where
  $\chi|_I=c$.

  To determine the behaviour of $\omhph$ and $\chiph$ near $I$
  we argue as follows. First, from \eqref{1forms}, equation
  \eqref{dchi} is equivalent to
  $\partial_z\chiph = - \rho \partial_\rho H$ and
  $\partial_\rho \chiph = \rho \partial_z H$.  We know the
  axisymmetric harmonic function $H$ is smooth at $I$, so near $I$ we
  can write $H= H_0(z)+ O(\rho^2)$ for some smooth function
  $H_0(z)$. Therefore, integrating it follows that
  $\chiph = \chi|_{I} + O(\rho^2)$ where $\chi|_{I}$ is a
  constant. Similarly, integrating equation \eqref{domegahat} for
  $\omh$ and using the fact that the harmonic functions
  $K,L,M$ are also smooth at $I$, implies
  $\omhph= \hat{\omega}|_I+ O(\rho^2)$ where
  $\hat{\omega}|_I$ is a constant. Comparing to the above we deduce
  the constant $\hat{\omega}|_I=0$.

  Putting things together we find the spacetime metric for $z\in I$
  and $\rho \to 0$ is
  \begin{multline}
    \d s^2|_{\text{near } I} = - f^2 \big( \dt + O(\rho^2) \dph_I\big)^2 +
    \frac{H}{f} \big( \d\rho^2 + \rho^2 \dph_I^2+ \d z^2\big) +
    g_{\psi\psi}\big( \dps_I + O(\rho^2) \dph_I \big)^2 \\
    +2 g_{t\psi} \big( \dt + O(\rho^2) \dph_I\big)\big(\dps_I+
    O(\rho^2) \dph_I\big)\,,
  \end{multline}
  where we have defined new coordinates
  $(\psi_I, \phi_I) =( \psi+\chi|_I \phi, \phi)$. Now, using
  smoothness of the harmonic functions, (\ref{invariants},\ref{I}) implies the
  invariants  $f$, $g_{\psi\psi}$, $g_{\psi t}, N$ must also be smooth on
  $I$, and near $I$ the corrections must be $O(\rho^2)$. Furthermore,
  by (\ref{H}) and Lemma \ref{lemma:orbitspace}, $H/f>0$ and $g_{\psi\psi}>0$ on $I$. Hence, the above
  is a smooth Lorentzian metric iff the angles are identified as
  $(\psi_I, \phi_I+2\pi)\sim (\psi_I,\phi_I)$ (this can be seen by
  converting to cartesian coordinates in the $(\rho, \phi_I)$ plane).
  On the other hand, the identifications on the Euler angles
  $(\psi, \phi)$ from asymptotic flatness (\ref{vpm}) imply the new angles are identified as
  $(\psi_I+4\pi, \phi_I)\sim (\psi_I,\phi_I)$ and
  $(\psi_I+ 2\pi(1+\chi|_I), \phi_I+2\pi)\sim (\psi_I,
  \phi_I)$. Compatibility of these periodicity lattices requires that
  $\chi|_I$ is an odd integer.

  Now, in terms of the new coordinates $v =
  (a+b) \partial_{\phi_I}$. We have just seen, however, that
  smoothness at the axis requires $\partial_{\phi_I}$ to have
  $2\pi$-periodic orbits, so that we must have $a+b=1$. Hence in the
  original coordinates the Killing field which vanishes on $I$ is
  indeed given by \eqref{v} and $\chi|_I = 2a-1$ is an odd integer.
\end{proof}

Let us now denote the axis rods by $I_i = (z_i, z_{i+1})$ for
$i=1, \dots, n-1$, where $z_1<z_2< \dots < z_{n}$, and
$I_- = (-\infty, z_1)$ and $I_+ = (z_{n}, \infty)$.  As we have just
established, the $2\pi$-normalised Killing fields vanishing on the
respective axis rods are
\begin{equation}\label{rodvecs_euler}
  v_i = \partial_\phi- \chi_i \partial_\psi, \qquad
  v_\pm = \partial_\phi  - \chi_\pm \partial_\psi  \,.
\end{equation}
where $\chi_i \equiv \chi|_{I_i}$.  The data
$\{ (I_i, v_i) \; | \; i,j=+, -, 1,..,n-1 \}$ defines the rod structure
of the spacetime~\cite{Hollands:2007aj}.

There are certain compatibility requirements between adjacent rods
that have been derived for stationary and biaxisymmetric
spacetimes~\cite{Hollands:2007aj}: If $v_i$ and $v_j$ are the
$2\pi$-normalised rod vectors of adjacent axis rods,
$\det ( v^T_i v^T_j)=\pm 1$. If two axis rods, with vectors $v_i$ and $v_j$, are separated only by a horizon, then
$\det ( v^T_i v^T_j)=p \in \mathbb{Z}$ and the topology of the
horizon is $S^1\times S^2$ for $p=0$, $S^3$ for $p=\pm 1$, and in
general a lens space $L(p,q)$ where $q\in \mathbb{Z}$ is only defined modulo $p$.

For asymptotically flat solutions
$\chi_\pm =\pm 1$ and so $v_\pm$ coincide with \eqref{vpm}, thus
defining a natural $2\pi$-normalised basis. In this basis the rod
vectors \eqref{rodvecs_euler} are
\begin{equation}\label{rodvecs}
  v_-=(1,0)\,,\qquad
  v_i = (1-a_i, a_i)\,,\qquad
  v_+ = (0,1)\,,
\end{equation}
where by Lemma \ref{lemma:axis} we have defined
$a_i \equiv \tfrac{1}{2}(1+\chi_i) \in \mathbb{Z}$ for
$i=1, \dots, n-1$.  The determinants of adjacent rod vectors are then
given by ($i=1, \dots, n-1$)
\begin{equation}\label{detrodvecs}
  \det ( v^T_- \; v^T_1)  =a_1\,, \qquad
  \det ( v^T_i  \; v^T_{i+1} ) = a_{i+1}-a_i\,,\qquad
  \det ( v^T_n \;  v^T_+) =1-a_n\,.
\end{equation}
Evidently, the rod structure is somewhat restricted. In particular,
this provides extra constraints on the horizon topology, thus
providing an alternative proof of Theorem \ref{thm:horizon+AF} (part 1) as
follows.

\begin{proof}[Proof of Theorem \ref{thm:horizon+AF}, part 1] The proof is elementary.  If $z=z_1$ is a horizon, the
  rod vectors which vanish on either side of the horizon are $v_-$ and
  $v_1$, therefore \eqref{detrodvecs} implies the horizon topology is
  $L(a_1, 1)$. Similarly, if $z=z_{n}$ is a horizon its topology is
  $L(1-a_n,1)$.  If $z=z_i$ is a horizon for some $i=2, \dots, n-1$ the
  rod vectors which vanish on either side are $v_{i-1}$ and
  $v_i$. Thus defining
 \begin{equation}
   P\equiv \left( v_{i-1}^T v_i^T \right) = 
   \begin{pmatrix} 1-a_{i-1} & 1- a_i \\ a_{i-1} & a_i \end{pmatrix}
 \end{equation}
 we find $\det P =a_i-a_{i-1}= p$ where $p \in \mathbb{Z}$ and the
 horizon topology is $L(p,q)$. We can obtain $q$ by a performing a
 basis change $P \to P'=AP$ where $A \in SL(2, \mathbb{Z})$ to put
 this into standard form
 \begin{equation}
   P' = \begin{pmatrix} 1 & q \\ 0 &   p \end{pmatrix}\,.
 \end{equation}
 We find
 \begin{equation}
   A = \begin{pmatrix} a & b \\ -a_{i-1} & 1-a_{i-1} \end{pmatrix}
 \end{equation}
 where the unit
 determinant condition is $1= a(1-a_{i-1})+b a_{i-1}$.  Therefore
 \begin{equation}
   q= a(1-a_i)+b a_i = 1+(b- a)(a_i-a_{i-1})  \equiv 1 \mod p \,,
 \end{equation}
 where in the second equality we used the unit determinant condition.
 This shows the horizon topology is $L(p, 1)$ as claimed. If $p=\pm 1$
 this is just $S^3$, and if $p=0$ this is $S^1\times S^2$.
\end{proof}

\subsection{Geometry of the axes}

We now turn to the analysis of the metric on the axes. Consider an axis rod $I_i=(z_i, z_{i+1})$ where $z=z_i$ is a
corner of the orbit space. The induced metric on $I_i$ is
\begin{equation}
  \ds^2|_{I_i} = -f^2 \dt^2  + \frac{H}{f} \d z^2+ g_{\psi\psi}
  \dps_i^2 + 2g_{t \psi} \dt \dps_i\,,
\end{equation}
where $\psi_i=\psi+\chi_i \phi$. This is a 3-dimensional timelike
submanifold with a circle action generated by
$\partial_\psi = \partial_{\psi_i}$ which has a fixed point at the
endpoint $z=z_i$. At this fixed point we must have
$g_{\psi\psi}=g_{t\psi}=0$.  We will now analyse the conditions
required by smoothness of the geometry near such a fixed point
$z=z_i$.

It is convenient to use as a coordinate the proper distance from the
fixed point
\begin{equation}
  s = \int_{z_i}^z \sqrt{g_{zz}} \d z.
\end{equation}
Now, smoothness at $z=z_i$
requires that $4 (\td | \partial_{\psi_i} |)^2 \to 1$ as $z\to
z_i^+$ (recall that $\Delta \psi_i = 4\pi$). In terms of the proper distance this is equivalent to
\begin{equation}
  g_{\psi\psi} = \tfrac{1}{4} s^2 (1+ O(s^2))\,,
\end{equation}
where the subleading terms are fixed by smoothness at $s=0$ (and
converting to cartesian coordinates in the $(s,\psi_i)$-plane).
Smoothness of the metric and its inverse on the axis thus also
requires that
\begin{equation}
g_{t\psi}= O(s^2), \qquad f^2 = f_i^2+O(s^2)\,,
\end{equation}
where $f_i = f|_{z=z_i}\neq 0$ by Lemma \ref{lemma:orbitspace}.  This
behaviour of the metric near $s=0$, together with
\eqref{H}, gives $f/H = \tfrac{1}{4}f_i^2 s^2 +O(s^4)$, hence using
$g_{zz}= H/f$ we find
\begin{equation}
  z-z_i = \int_0^s \sqrt{\frac{f}{H} }\ds 
  = \tfrac{1}{4} |f_i| s^2 (1+O(s^2))\,.
\end{equation}
We deduce that
\begin{equation}
  H = \frac{\text{sgn}( f_i)}{z-z_i} + O(1)   \label{Hcorner}
\end{equation}
as $z \to z_i^+$.  But, by Lemma \ref{lemma:orbitspace}, $H$ is a harmonic function on $\mathbb{R}^3$
with an isolated singularity at $(\rho,z)=(0,z_i)$. Thus, \eqref{Hcorner}
implies that the singularity of $H$ is a pole of order one. Therefore,
for $\rho \geq 0$, we must have
\begin{equation}
  H = \frac{h_i}{\sqrt{\rho^2+(z-z_i)^2}} + \Ht_i\,,
\end{equation}
where $h_i=\text{sgn}( f_i)$ and $\Ht_i$ is a harmonic function on
$\mathbb{R}^3$ which is smooth at $(\rho,z)=(0,z_i)$.

We will now determine the behaviour of the other harmonic functions.
Since $f_i \neq 0$, equation \eqref{K} implies that $K H^{-1}$ is also
smooth at any corner.  We deduce that
\begin{equation}
K = \frac{k_i}{\sqrt{ \rho^2+ (z-z_i)^2}} +\Kt_i\,,
\end{equation}
where $k_i$ is a constant (possibly vanishing) and $\Kt_i$ is a
harmonic function smooth at the corner  $(\rho,z)=(0,z_i)$. Next,
smoothness of $f^{-1}$ at the corner and \eqref{Ldef} then implies $L$
may also have a pole of order one at the corner, so
\begin{equation}
  L = \frac{\ell_i}{\sqrt{ \rho^2+ (z-z_i)^2}} + \Lt_i\,,
\end{equation}
where $\ell_i= -h_i^{-1} k_i^2$  and $\Lt_i$ a harmonic function smooth
at the corner. Finally, the invariant $-f^{-2}V \cdot \partial_\psi =
\omega_\psi$ must be smooth at any fixed point of $\partial_\psi$
(since $f_i \neq 0$), and thus \eqref{Mdef} implies
\begin{equation}
  M = \frac{m_i}{\sqrt{ \rho^2+ (z-z_i)^2}} + \Mt_i\,,
\end{equation}
where 
\begin{equation}
m_i= - k_i^3 - \tfrac{3}{2} h_i^{-1}k_i \ell_i = \tfrac{1}{2} k_i^3
\end{equation}
and $\Mt_i$ is a harmonic function smooth at the corner. There are
further conditions arising from the fact $\omega_\psi$ must also
vanish at the corner which we will explore in more detail below.

Therefore, we have found that the boundary conditions arising from
smoothness of the solution on the axes are sufficient to
determine its functional form near any corner of the orbit space.  
The obtained conditions on the solution are \emph{necessary}
conditions for smoothness at a corner of the orbit space. In the following section we will show that in fact they are also sufficient.

\subsection{Smoothness at corners of orbit space}

In this section we complete the smoothness analysis at the corners of
the orbit space.  To do so, let us introduce $\mathbb{R}^3$-polar
coordinates $(r, \theta, \phi)$ centred at the corner
$(\rho,z)=(0,z_i)$, where for notational simplicity we have dropped
the label $i$ in the new coordinates. Then, as just shown,
we can write 
\begin{equation}\label{HKLM_at_centre}
  H = \frac{h}{r} + \Ht\,,\qquad
  K = \frac{k}{r} + \Kt\,,\qquad
  L = \frac{\ell}{r} + \Lt\,,\qquad
  M = \frac{m}{r} + \Mt\,,
\end{equation}
where $h= \pm 1$, $\ell = -h^{-1} k^2$, $m = k^3/2$ and $\Ht$, $\Kt$, $\Lt$,
$\Mt$ are axisymmetric harmonic functions that are smooth at the
centre. Thus we can write
\begin{equation}\label{legendre}
  \Ht = \sum_{l=0}^\infty h_l r^l P_{l}(\cos \theta)
\end{equation}
where $P_l$ are the Legendre polynomials and $h_l$ are constants, and similarly for $\Kt$, $\Lt$,
$\Mt$, where furthermore the constants $h_l$, $k_l$, $\ell_l$, $m_l$ are
such that $\omega_\psi|_{r=0} =0$. We will now show that for asymptotically flat solutions the above conditions are also sufficient for
smoothness at $r=0$, provided that on the adjacent rods $\chi$ is an odd integer and $\omh=0$ (as required by Lemma~\ref{lemma:axis}).

Given \eqref{HKLM_at_centre}, \eqref{legendre} and (\ref{1forms}), we may solve
(\ref{dchi}) for the 1-form $\chi$, giving
\begin{equation}
  \chi  = (h \cos \theta + \chi_0) \dph +\tilde{\chi}\,,
\end{equation}
where $\chi_0$ is a constant and, using the fact that $P_l$ is a
Legendre polynomial,
\begin{equation}
  \tilde{\chi} = r^2 \sin^2 \theta \sum_{l=1}^\infty \frac{h_l
  r^{l-1}}{l+1} P'_{l}(\cos\theta) \dph\,.
\end{equation}
Now, define new  coordinates $(R, \psi', \phi')$ by 
\begin{equation}
  r = \tfrac{1}{4} R^2, \qquad \psi' = \psi+ \chi_0 \phi, \qquad \phi' =
  h \phi\,,
\end{equation}
so that the GH base is
\begin{equation}
  \ds^2_{GH} =  F \left( \d R^2 + \tfrac{1}{4} R^2 \left[ (\dth^2
      + \sin^2\theta \dph^2) +\frac{1}{F^2} (\dps'  +
      \cos \theta \dph' +\tilde{\chi})^2 \right] \right)
\end{equation}
where we have defined $F \equiv r H = h+ r \tilde{H}$.  In terms of the new
coordinates
\begin{equation}
  F = \pm 1 + O(R^2)\,, \qquad \tilde{\chi}  = O(R^4) \td \phi'\,,
\end{equation}
so we see that as $R\to 0$, the GH base approaches the origin
$\pm \mathbb{R}^4$, \emph{if} the angles $(\psi', \phi')$ are
identified as Euler angles on $S^3$.  Since the original angles
$(\psi, \phi)$ are required to be Euler angles on $S^3$ by asymptotic
flatness, it is easy to see that $(\psi', \phi')$ are also Euler
angles on $S^3$ if and only if $\chi_0= h - 2n -1$ for some
$n \in \mathbb{Z}$.  In fact, this condition follows from Lemma
\ref{lemma:axis}: On the axis $\theta = 0, \pi$ it is clear that
$\tilde{\chi}=0$ and thus on any axis rod $I$ we have
$\chi|_{I} = \chi_0\pm h$.  But by Lemma \ref{lemma:axis},  we know that
$\chi|_{I}$ is an odd integer and hence $\chi_0$ is an even integer as
required.

In order to verify the GH base at the centre is actually smooth
requires us to control the higher order terms more carefully. To this
end introduce coordinates\footnote{The coordinates $\phi^\pm$ in this section are different to those in (\ref{vpm}). We will only use these in this section, so there should be no confusion.}
\begin{equation}
  \phi^\pm = \tfrac{1}{2} (\psi' \pm \phi'), \qquad X_+  = R \cos
  (\tfrac{1}{2} \theta), \qquad X_- = R\sin (\tfrac{1}{2} \theta)
\end{equation}
so 
\begin{equation}
  \ds^2(\mathbb{R}^4)  = \d X_+^2+ X_+^2 (\dph^+)^2 + \d X_-^2+ X_-^2 (\dph^-)^2 
\end{equation}
and $\phi^\pm$ are independently $2\pi$ periodic. In these coordinates
any smooth biaxisymmetric function on $\mathbb{R}^4$ is a smooth
function of $(X_+^2, X_-^2)$.  Noting that
\begin{equation}
  r = \tfrac{1}{4}(X_+^2+X_-^2), \qquad r \cos \theta = \tfrac{1}{4} (X_+^2-X_-^2)
\end{equation}
and using the fact that $P_l$ are polynomials of order $l$, it is easy
to see that $\Ht$ and hence $F$ are analytic functions of
$(X_+^2, X_-^2)$. Similarly we find that
\begin{equation}\label{chitilde}
  \tilde{\chi} = \tfrac{1}{4} X_+^2 X_-^2 ( h_1 + \dots ) h \td \phi' =
  \tfrac{1}{4} X_+^2 X_-^2 ( h_1 + \dots ) h (\td \phi^+ - \td \phi^-)  \,
  ,
\end{equation}
where the higher order terms are analytic in $(X_+^2, X_-^2)$, so the
1-form $\tilde{\chi}$ is analytic at the origin of $\mathbb{R}^4$.
Putting everything together, we can write the GH base as
\begin{multline}
\td s^2_{GH} = F \td s^2 (\mathbb{R}^4) - \frac{\tilde{H} ( h +F) }{4F} (X_+^2 \td \phi^+ + X_-^2 \td \phi^-)^2 \\
 + \frac{1}{F} (X_+^2 \td \phi^+ + X_-^2 \td \phi^-) \tilde{\chi} + \frac{F (X_+^2+X_-^2)}{4} \tilde{\chi}^2
\end{multline}
which is now manifestly analytic at the origin of $\mathbb{R}^4$.
Therefore, the GH base is indeed smooth, in fact analytic, at any
centre corresponding to a corner of the orbit space.

We now turn to the other components of the spacetime metric, namely
the function $f$ and 1-form $\omega$. Expanding the regular parts
$\tilde{K}, \tilde{L}, \tilde{M}$ of the harmonic functions $K, L, M$
as above for $\tilde{H}$ it is easy to see that $f$ is an analytic
function of $(X_+^2, X_-^2)$. Recall by Lemma \ref{lemma:orbitspace} we must have $f \neq 0$ at any centre corresponding to a corner of the orbit space.

 It remains to be checked that also the
$1$-form $\omega$ is smooth at the centre. In the above coordinates we
can write
\begin{equation}\label{omegaXYcoord}
  \omega = \omega_\psi (\td \psi'+ \cos \theta \td \phi')+ \omega_\psi
  \tilde{\chi}+ \omh = \frac{2 \omega_{\psi}}{R^2} (X_+^2 \td
  \phi^+ + X_-^2 \td \phi^-) + \omega_\psi \tilde{\chi}+ \omh \,.
\end{equation}
Using \eqref{Mdef} and expanding
\begin{equation}\label{1/H_centre}
  \frac{1}{h+r\Ht} = h - r\Ht + r^2G_1\,,
\end{equation}
where $G_1= \Ht^2/(h+r\Ht)$ is analytic in  $(X_+^2, X_-^2)$, as well as
making use of the identities $h^2=1$, $\ell = -hk^2$, $m = k^3/2$, one
finds
\begin{equation}
  \omega_\psi = \sum_{l=0}^{\infty} \left(m_l - hmh_l  + \frac{3}{2}( hk\ell_l -
   h\ell k_l) \right)r^l P_l(\cos\theta) + r \tilde{G_1}\,,
\end{equation}
where $\tilde{G_1}$ is some analytic function in $(X_+^2, X_-^2)$.  Thus
$\omega_\psi$, and hence also $\omega_\psi \tilde{\chi}$ are analytic
in $(X_+^2, X_-^2)$ and for smoothness of \eqref{omegaXYcoord} we
therefore only need to check that
\begin{equation}\label{cond_smooth}
  \left( \frac{2 \omega_{\psi} X_+^2 }{R^2} +h \omhph_\phi
  \right)\td \phi^+ + \left( \frac{2 \omega_{\psi} X_-^2 }{R^2} - h
    \omhph_\phi \right)\td \phi^-
\end{equation}
is smooth at the origin, or equivalently that
\begin{equation}\label{cond_final}
 \frac{2\omega_\psi X_\pm^2}{R^2} \pm h \omhph_\phi =  X_\pm^2 G_\pm,
\end{equation}
for some smooth functions $G_\pm$ of  $(X_+^2, X_-^2)$.

In fact one can solve equation \eqref{domegahat} for the 1-form $\omh$, of the form (\ref{1forms}), as
\begin{equation}
  \omhph_\phi = \omh_0 + r \sin^2\theta
  \sum_{l=1}^{\infty} 
         \frac{(hm_l-h_lm)+\frac{3}{2}(k\ell_l-k_l\ell)}{l}r^{l-1}
         P'_l(\cos\theta) + hr^2\sin^2\theta G_2\,,
\end{equation}
where we have used that $\omega_\psi|_{r=0} = 0$ and defined
\begin{equation}
  G_2 = h\sum_{j=0}^{\infty}\sum_{l=1}^{\infty} 
        \frac{h_jm_l-h_lm_j +
        \frac{3}{2}(k_j\ell_l-k_l\ell_j)}{l+j+1}r^{l+j-1}
        P'_l(\cos\theta)P_j(\cos\theta)
\end{equation}
which is an analytic function of  $(X_+^2, X_-^2)$.
From Lemma \ref{lemma:axis}, we must have $\omh_0 = 0$.  Then
\begin{multline}\label{cond_smoothX}
  \frac{2\omega_\psi X_\pm^2}{R^2} \pm h \omhph_\phi  =  \frac{X_\pm^2}{2}\tilde{G_1} \pm \frac{X_+^2X_-^2}{4} G_2+\\
  h \sum_{l=1}^{\infty} \left(hm_l - mh_l  + \frac{3}{2}( k\ell_l -
   \ell k_l) \right) r^l \left( (1 \pm \cos\theta)P_l(\cos\theta) \pm
    \frac{\sin^2\theta}{l}P'_l(\cos\theta) \right), 
\end{multline}
and it is obvious that the first two terms are of the required
form. Using basic properties of Legendre polynomials we can rewrite
\begin{align}
  (1\pm\cos\theta)P_l(\cos\theta) \pm
  \frac{\sin^2\theta}{l}P'_l(\cos\theta) =
  P_l(\cos\theta) \pm P_{l-1}(\cos\theta)\,.
\end{align}
Furthermore, from the recursion formula for Legendre polynomials, it
follows that%
\footnote{This follows easily by induction from writing the recursion
  formula in the form
  $(l+1)(P_{l+1}\pm P_l) = \mp l(P_l\pm P_{l-1}) \pm ( 1 \pm
  \cos\theta) (2l+1)P_l$ and noting that
  $P_1 \pm P_0 = \cos\theta \pm 1$.}
\begin{equation}
  r^l[(P_l(\cos\theta) \pm P_{l-1}(\cos\theta)] = r (1\pm \cos\theta)
  \tilde G_{\pm} = \frac{X_\pm^2}{2} \tilde G_{\pm}
\end{equation}
for some analytic $\tilde{G}_{\pm}$, so indeed \eqref{cond_final} is
satisfied. This establishes that the 1-form $\omega$ is smooth, in fact analytic, at any centre corresponding to a corner of the orbit space.

Putting things together, we have shown that the spacetime metric is analytic at any point corresponding to a corner of the orbit space. Furthermore, near such points the spacetime is diffeomorphic to $\mathbb{R}^{1,4}$.

The gauge field in the new coordinates takes the form
\begin{equation}
  A = \frac{\sqrt{3}}{2} \big(  f\dt + A_+\dph^+ + A_-\dph^-\big)\,,
\end{equation}
where
\begin{equation}
  \label{gaugefield_centre}
  A_\pm = f \Big( \frac{2\omega_\psi X_\pm^2}{R^2} \pm h \omh_\phi \Big) \pm
  h \big(f\omega_\psi -\frac{K}{H}\big)\tilde{\chi}_\phi -
  \frac{2X_\pm^2}{R^2}\frac{K}{H} \mp h \xi_\phi\,.
\end{equation}
Clearly $A_t$ is analytic at $R=0$. We have already shown \eqref{cond_final}, so the first term in
\eqref{gaugefield_centre} is analytic and proportional to
$X_\pm^2$. As $f$, $\omega_\psi$, $K/H$ are analytic at
the centre and $\tilde{\chi}$ is of the form \eqref{chitilde}, the same
is true for the second term. Lastly, integrating \eqref{dxi}, for $\xi$ of the form (\ref{1forms}), gives
\begin{equation}
  \xi = \left( \xi_0 - k\cos\theta - \frac{X_+^2X_-^2}{4}\sum_{l=1}^{\infty}
  \frac{k_lr^{l-1}}{l+1}P'_l(\cos\theta) \right) \td \phi\,,
\end{equation}
and hence, using \eqref{1/H_centre},
\begin{equation}
  -\frac{2X_\pm^2}{R^2}\frac{K}{H}\mp h \xi_\phi = 
 - hk \mp h{\xi}_0 +
 X_\pm^2 (\ldots) + X_+^2X_-^2(\ldots)\,,
\end{equation}
where $\dots$ are analytic functions of $(X_+^2,X_-^2)$. Thus $A$ is
gauge-equivalent to an analytic 1-form. Therefore the Maxwell field (hence the
full solution) is analytic at the centre.

Finally, we emphasise that the above analysis shows that the solution is smooth and stably causal at and near any centre corresponding to a corner of the orbit space. Indeed we have,
\be
K^2+ HL =  \frac{1}{|f| r} + O(1), \qquad g^{tt}= -\frac{1}{f^2} + O(r)  \; ,
\ee
where recall that $f\neq 0$ at the centre, thus confirming the solution is smooth (\ref{smooth1}) and causal (\ref{causality}) near the centre.

\subsection{Summary}

To summarise, we have shown the following.

\begin{theorem} \label{thm:corner} Let $(M,g,F)$ be an asymptotically
  flat, supersymmetric and biaxisymmetric solution to minimal
  supergravity with a globally hyperbolic domain of outer
  communication $ \llangle M \rrangle$. Let $(\rho,z)=(0,z_i)$ be a point corresponding to a
  corner of the orbit space. Then the solution is
  smooth (in fact analytic) at the corner if and only if
  $f_i \equiv f|_{(\rho,z)=(0,z_i)}\neq 0$,
  $\omega_\psi |_{(\rho,z)=(0,z_i)} = 0$, $\chi|_{I}$ is an odd integer and $\omh|_{I} = 0$ on the adjacent axis rods $I$,
  and the harmonic functions $H$, $K$, $L$, $M$ are given by
  \begin{equation}
    H = \frac{h_i}{r_i} + \Ht_i\,,\qquad
    K = \frac{k_i}{r_i} + \Kt_i \,,\qquad
    L =- \frac{h_i^{-1} k_i^2}{r_i} + \Lt_i \,,\qquad
    M = \frac{\tfrac{1}{2} k_i^3}{r_i} + \Mt_i\,,
  \end{equation}
  where $r_i= \sqrt{ \rho^2+ (z-z_i)^2}$, $h_i= \text{sgn}(f_i) $,
  $k_i$ are constants and $\Ht_i$, $\Kt_i$, $\Lt_i$, $\Mt_i$ are
  harmonic functions on $\mathbb{R}^3$ which are smooth at
  $(\rho,z)=(0,z_i)$.  Furthemore, the spacetime near such a corner is diffeomorphic to $\mathbb{R}^{1,4}$.
\end{theorem}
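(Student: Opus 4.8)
This statement is a compilation of the results obtained in the two preceding subsections, so the plan is to assemble them in both directions of the ``if and only if''. For the necessity direction, suppose $(g,F)$ is smooth at the corner $(\rho,z)=(0,z_i)$. By Lemma~\ref{lemma:orbitspace} we already know $f_i\neq 0$ and that $H$ has only an isolated singularity there, and since a corner is a fixed point of $\partial_\psi$ we have $g_{t\psi}=g_{\psi\psi}=0$ at the corner, whence $\omega_\psi=-f^{-2}V\cdot\partial_\psi$ vanishes there. To pin down the form of the harmonic functions I would work on an adjacent axis rod using the proper-distance coordinate $s$ from the fixed point of $\partial_\psi$: smoothness of the induced $3$-metric gives $g_{\psi\psi}=\tfrac14 s^2(1+O(s^2))$, $g_{t\psi}=O(s^2)$, $f^2=f_i^2+O(s^2)$, and integrating $g_{zz}=H/f$ yields $z-z_i=\tfrac14|f_i|s^2(1+O(s^2))$, so $H=\sgn(f_i)/(z-z_i)+O(1)$ as $z\to z_i^+$. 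Being harmonic with only an isolated singularity, $H$ must then have a simple pole with residue $h_i=\sgn(f_i)$; smoothness of $KH^{-1}$ via \eqref{K}, of $f^{-1}$ via \eqref{Ldef}, and of $\omega_\psi$ via \eqref{Mdef} forces $K,L,M$ to have at most simple poles at the corner, and matching leading coefficients gives $\ell_i=-h_i^{-1}k_i^2$, $m_i=\tfrac12 k_i^3$. The conditions that $\chi$ be an odd integer and $\omh=0$ on the adjacent rods are exactly Lemma~\ref{lemma:axis}.

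For the sufficiency direction I would assume the stated form of $H,K,L,M$ (with the constraints on the residues, $\chi|_I$ odd, $\omh|_I=0$, $\omega_\psi|_{(0,z_i)}=0$), expand the regular parts $\Ht_i,\Kt_i,\Lt_i,\Mt_i$ in Legendre series about the corner, and change to coordinates $r=\tfrac14 R^2$, $\psi'=\psi+\chi_0\phi$, $\phi'=h_i\phi$, then $\phi^\pm=\tfrac12(\psi'\pm\phi')$, $X_+=R\cos(\tfrac12\theta)$, $X_-=R\sin(\tfrac12\theta)$, in which a biaxisymmetric smooth function on $\mathbb{R}^4$ is a smooth function of $(X_+^2,X_-^2)$. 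Since $r=\tfrac14(X_+^2+X_-^2)$, $r\cos\theta=\tfrac14(X_+^2-X_-^2)$ and the $P_l$ are polynomials, $F\equiv rH$ and the residual $1$-form $\tilde\chi$ obtained by integrating \eqref{dchi} are analytic in $(X_+^2,X_-^2)$, so the Gibbons--Hawking base is analytic, and because $F=\pm1+O(R^2)$ and $\tilde\chi=O(R^4)$ it limits to $\pm\mathbb{R}^4$ precisely when $(\psi',\phi')$ are Euler angles, i.e.\ when $\chi_0$ is even, which follows from $\chi|_I$ being odd. The same Legendre bookkeeping shows $f$ is analytic and nonzero at the corner. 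For $\omega=\omega_\psi(\d\psi'+\cos\theta\,\d\phi')+\omega_\psi\tilde\chi+\omh$, using \eqref{Mdef} and $h_i^2=1$, $\ell_i=-h_ik_i^2$, $m_i=\tfrac12 k_i^3$ shows $\omega_\psi$ is analytic in $(X_+^2,X_-^2)$, and it remains only to check that $\tfrac{2\omega_\psi X_\pm^2}{R^2}\pm h_i\omh_\phi=X_\pm^2 G_\pm$ for smooth $G_\pm$; solving \eqref{domegahat} for $\omh$ (with $\omh_0=0$ by Lemma~\ref{lemma:axis}) and using the Legendre recursion to rewrite $r^l[P_l\pm P_{l-1}]$ in terms of $X_\pm^2$ closes this. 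Finally $A_t\propto f$ is analytic and each term of $A_\pm$ is handled by the same facts together with the explicit integral of \eqref{dxi} for $\xi$, so up to a pure-gauge constant $A$ is analytic and $F=\d A$ is analytic; the $\pm\mathbb{R}^4$ limit of the base together with analyticity of $f$ and $\omega$ then yields the diffeomorphism to $\mathbb{R}^{1,4}$ near the corner.

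The step I expect to be the main obstacle is establishing smoothness of the $1$-form $\omega$: the contributions $2\omega_\psi X_\pm^2/R^2$ and $\pm h_i\omh_\phi$ are individually singular at $R=0$, and one must show their singular parts cancel. This is precisely where the Legendre-polynomial recursion identities are indispensable; everything else reduces to routine bookkeeping once all quantities are rewritten as functions of $(X_+^2,X_-^2)$.
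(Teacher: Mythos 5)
Your proposal is correct and follows essentially the same route as the paper: the necessity direction reproduces the proper-distance argument on an adjacent axis rod giving $H=\sgn(f_i)/(z-z_i)+O(1)$ and the propagation to $K,L,M$ via the invariants, while the sufficiency direction uses the same $(X_+^2,X_-^2)$ coordinates, Legendre expansions, and the recursion identity $r^l[P_l\pm P_{l-1}]=\tfrac{1}{2}X_\pm^2\tilde G_\pm$ to cancel the individually singular pieces of $\omega$. You have also correctly identified the smoothness of $\omega$ as the genuinely delicate step.
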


\section{Moduli space of soliton and black hole solutions}
\label{sec:modspace}

\subsection{Classification theorem}

We will now combine the constraints obtained from the existence of a smooth horizon in section \ref{sec:hor} and smooth axes in section \ref{sec:axes} and give our main classification theorem.

\begin{theorem}\label{classthm} Consider an asymptotically flat,
  supersymmetric and biaxisymmetric solution to minimal supergravity
  with a smooth globally hyperbolic domain of outer communication and
  a smooth event horizon with compact cross-sections (if there is a
  black hole). Suppose the orbit space $\hat{M}$ has $k$ corners and
  the horizon has $l$ connected components
  ($l=0$ corresponds to no black hole), and let $n=k+l$. Then, the
  harmonic functions are
  \begin{equation}\label{harmonicfunc_classthm}
    H = \sum_{i=1}^{n} \frac{h_i}{r_i}\,,\qquad
    K = \sum_{i=1}^{n} \frac{k_i}{r_i}\,,\qquad
    L = 1 + \sum_{i=1}^{n} \frac{\ell_i}{r_i}\,,\qquad
    M = m + \sum_{i=1}^{n} \frac{m_i}{r_i}\,,
  \end{equation}
  where $r_i=\sqrt{\rho^2 + (z-z_i)^2}$, $(\rho,z)=(0,z_i)$ are
  corners of the orbit space or horizons, and
  \begin{equation}
    \sum_{i=1}^{n} h_i=1\,, \qquad m = -\tfrac{3}{2} \sum_{i=1}^n
    k_i\,.   \label{asympt}
  \end{equation}
  The corresponding 1-forms can be written as
  \begin{equation}
    \begin{aligned}\label{1forms_classthm}
      \chi &= \sum_{i=1}^n \frac{ h_i (z-z_i)}{r_i}  \td \phi \,, \qquad
      \xi = - \sum_{i=1}^n \frac{ k_i (z-z_i)}{r_i}  \td \phi \,,  \\
      \omh &= \left[- \sum_{i=1}^n \frac{(m h_i + \tfrac{3}{2}
          k_i) (z-z_i)}{r_i} - \sum_{i=1}^n \sum_{j \neq i} \left( \frac{h_i m_j+
            \tfrac{3}{2} k_i \ell_j}{z_i-z_j} \right)\left(
          \frac{\rho^2 + (z-z_i)(z-z_j)}{r_i r_j}-1 \right) \right]
      \td \phi\,,
    \end{aligned}
  \end{equation}
  and the parameters have to satisfy for each $i=1, \dots, n,$
  \begin{equation}
    h_i m + \tfrac{3}{2}k_i  
    + \sum_{\stackrel{j=1}{j\neq i}}^{n} \frac{h_im_j- m_i
      h_j-\frac{3}{2}(\ell_ik_j- k_i \ell_j) }{|z_i-z_j|}=
    0  \,. \label{conds2}
  \end{equation}
Furthermore, if $(0,z_i)$ is a corner, $h_i = \pm 1$  
and the parameters must satisfy
  \begin{align}
    & \ell_i= -h_i^{-1} k_i^2\,, \qquad m_i=
      \tfrac{1}{2}  k_i^3   \,,\label{conds1} \\
        &  h_i + \sum_{\stackrel{j=1}{j\neq i}}^{n}
      \frac{2k_ik_j-h_i(h_jk_i^2-\ell_j)}{|z_i-z_j|} >
      0 \,. \label{conds3}
  \end{align}
  On the other hand, if $(0,z_i)$ is a horizon the parameters must
  satisfy $h_i \in \mathbb{Z}$,
  \begin{equation}\label{conds_horizon}
    -h_i^2m_i^2 - 3h_ik_i\ell_im_i + h_i\ell_i^3 -
    2k_i^3m_i + \frac{3}{4} k_i^2\ell_i^2 > 0\, ,
  \end{equation}
 (which also implies $h_i\ell_i + k_i^2
>0$) and cross-sections of the horizon are of topology $S^3$ if
  $h_i=\pm 1$, $S^2\times S^1$ if $h_i=0$ and the lens space
  $L(h_i, 1)$ otherwise.
\end{theorem}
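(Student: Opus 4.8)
The plan is to globalise the local results of Sections~\ref{sec:hor} and~\ref{sec:axes} by a Liouville-type argument on the auxiliary $\mathbb{R}^3$, and then simply to read off the parameter constraints from the theorems already established. The first step is to pin down the form of the harmonic functions. By Lemma~\ref{lemma:orbitspace} the functions $H,K,L,M$ are smooth on $\llangle M\rrangle\setminus\mathcal{A}$ and on every axis rod, so their only possible singular points in $\mathbb{R}^3$ are the $n=k+l$ points $(\rho,z)=(0,z_i)$ associated with the corners of $\hat M$ and the horizon components. At each such point Theorem~\ref{thm:corner} (corner) or Theorem~\ref{thm:horizon} (horizon) gives $H=h_i/r_i+\Ht_i$ with $\Ht_i$ harmonic and smooth there, and similarly for $K,L,M$. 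Hence $H-\sum_i h_i/r_i$ extends to a harmonic function on all of $\mathbb{R}^3$; by asymptotic flatness \eqref{Hasympt} it tends to $0$ at infinity, so it vanishes identically, and matching the $1/r$ coefficient gives $\sum_i h_i=1$. Repeating the argument for $K$ (which must tend to $0$ at infinity, since $f^{-1}=H^{-1}K^2+L\to1$ forces $H^{-1}K^2\to0$), for $L-1$ and for $M-m$ produces \eqref{harmonicfunc_classthm}; then $k_\infty=\sum_i k_i$, so \eqref{masympt} gives $m=-\tfrac32\sum_i k_i$, completing \eqref{asympt}.

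Next I would integrate \eqref{dchi}, \eqref{dxi} and \eqref{domegahat} for the axisymmetric $1$-forms, each proportional to $\td\phi$. For $\chi$ and $\xi$ the general axisymmetric solution is a superposition of single-centre ``monopole'' potentials $h_i(z-z_i)/r_i$, respectively $-k_i(z-z_i)/r_i$, plus an arbitrary constant; the constant is fixed by the asymptotic gauge $\chi=\cos\theta\,\td\phi+O(r^{-1})$ (which is consistent precisely because $\sum_i h_i=1$), giving the first line of \eqref{1forms_classthm}. For $\omh$ one checks that the source in \eqref{domegahat} is co-closed since $H,K,L,M$ are harmonic, so a potential exists; the pieces built from the constants $1\subset L$ and $m\subset M$ integrate to the single-centre terms $-(mh_i+\tfrac32 k_i)(z-z_i)/r_i$, the diagonal ($i=j$) terms cancel, and the off-diagonal terms integrate to the standard two-centre potentials $\propto(z_i-z_j)^{-1}\big((\rho^2+(z-z_i)(z-z_j))/(r_ir_j)-1\big)$. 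The integration constant is killed by demanding $\omh=O(r^{-1})$, which again uses $\sum_i h_i=1$ and $m=-\tfrac32\sum_i k_i$ so that both sums vanish as $r\to\infty$; this reproduces \eqref{1forms_classthm}.

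Finally I would collect the constraints. The horizon statements — $h_i\in\mathbb{Z}$, the inequality \eqref{conds_horizon} (which implies $h_i\ell_i+k_i^2>0$), and the topology $S^3$/$S^1\times S^2$/$L(h_i,1)$ — are exactly Theorems~\ref{thm:horizon} and~\ref{thm:horizon+AF}. The corner relations \eqref{conds1}, namely $h_i=\pm1$, $\ell_i=-h_i^{-1}k_i^2$, $m_i=\tfrac12 k_i^3$, are Theorem~\ref{thm:corner}, which also requires $f_i:=f|_{(0,z_i)}\neq0$ with $h_i=\sgn(f_i)$; evaluating $f^{-1}=H^{-1}K^2+L$ at $(0,z_i)$ on the explicit harmonic functions (the simple poles cancel thanks to \eqref{conds1}) yields $h_i/f_i=h_i+\sum_{j\neq i}\big(2k_ik_j-h_i(h_jk_i^2-\ell_j)\big)/|z_i-z_j|$, so the nondegeneracy $f_i\neq0$ becomes precisely the positivity condition \eqref{conds3}. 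The remaining equations \eqref{conds2} come from regularity on the axes: by Lemma~\ref{lemma:axis} one needs $\omhph=O(\rho^2)$ on every axis rod; since the $\td\phi$-component of $\omh$ is constant on each rod and, by the choice of integration constant above, already vanishes on the two semi-infinite rods, this amounts to the vanishing of its jump across each $(0,z_i)$, which a direct computation from \eqref{1forms_classthm} identifies with the left-hand side of \eqref{conds2} (only $n-1$ of these are independent, since their sum equals $m+\tfrac32\sum_i k_i=0$). This is consistent with the conditions $\omega_\psi|_{(0,z_i)}=0$ and $\omh|_I=0$ that Theorem~\ref{thm:corner} imposes at corners.

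The hard part will be this last step: turning ``the metric is smooth on the axes'' into the explicit algebraic system \eqref{conds2}--\eqref{conds3}. Concretely this requires (i) carrying the integration of \eqref{domegahat} far enough to control the $\td\phi$-component of $\omh$ along each rod and in a neighbourhood of each centre, and (ii) the somewhat delicate bookkeeping showing that its jump across $(0,z_i)$ reproduces \eqref{conds2} while the finite part of $f^{-1}$ at $(0,z_i)$ reproduces \eqref{conds3}. By contrast, the Liouville argument for the harmonic functions and the horizon/corner parameter conditions follow almost immediately by quoting Theorems~\ref{thm:horizon}, \ref{thm:horizon+AF} and~\ref{thm:corner} together with Lemma~\ref{lemma:orbitspace}.
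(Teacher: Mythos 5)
Your proposal is correct and follows essentially the same route as the paper: a Liouville argument on $\mathbb{R}^3$ using Lemma~\ref{lemma:orbitspace} together with Theorems~\ref{thm:horizon}, \ref{thm:horizon+AF} and~\ref{thm:corner} to fix the harmonic functions, explicit integration of the $1$-forms with constants fixed by asymptotic flatness, and identification of \eqref{conds2} with the vanishing of $\omh$ on each axis rod via its jump across the centres (and of \eqref{conds3} with $h_if_i>0$ at each corner). The only cosmetic difference is that you rederive the decay of $K$ from $f^{-1}\to 1$ rather than quoting the multipole expansion \eqref{Hasympt} directly.
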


\begin{proof} We have shown that a horizon corresponds to at most a
  simple pole of the harmonic functions $H,K,L,M$, see Theorem \ref{thm:horizon}. Similarly, a corner
  corresponds to a simple pole of $H$ and at most a simple pole of
  $K,L,M$, see Theorem \ref{thm:corner}. Hence, with the stated assumptions we can write
  \begin{align}
    H =  \tilde{H}+ \sum_{i=1}^{n} \frac{h_i}{r_i}\,,\qquad
    K = \tilde{K}+ \sum_{i=1}^{n} \frac{k_i}{r_i}\,,\qquad
    L = \tilde{L}+  \sum_{i=1}^{n} \frac{\ell_i}{r_i}\,,\qquad
    M = \tilde{M} + \sum_{i=1}^{n} \frac{m_i}{r_i}\,,
  \end{align}
  where $\tilde{H}, \tilde{K}, \tilde{L}, \tilde{M}$ are harmonic
  functions smooth at $(\rho, z)=(0, z_i)$ for all $i=1, \dots, n$.
  By Lemma \ref{lemma:orbitspace}, the only singularities of $H, K, L, M$ in the DOC are at points corresponding to the corners of the
  orbit space, or at the horizon. Therefore,
  $\tilde{H}, \tilde{K}, \tilde{L}, \tilde{M}$ must be smooth on all
  of $\mathbb{R}^3$. Asymptotic flatness \eqref{Hasympt} implies these harmonic
  functions are bounded. Therefore,
  $\tilde{H}, \tilde{K}, \tilde{L}, \tilde{M}$ are smooth and bounded
  harmonic functions on $\mathbb{R}^3$.  Therefore, they must be
  constants which coincide with their asymptotic values, so
  $\tilde{H}= 0$, $\tilde{L}=1$, $\tilde{K}=0$ and $\tilde{M}=m$.
  The asymptotic flatness conditions \eqref{Hasympt} and
  \eqref{masympt} then reduce to \eqref{asympt}. This establishes the
  form of the harmonic functions. 
  
  Given the harmonic functions the
  1-forms are easily integrated using (\ref{dchi}, \ref{domegahat},
  \ref{dxi}). The integration constants in $\chi$ and $\omh$ have been fixed so
  that $\chi = \cos\theta +O(r^{-1})$ and $\omh = O(r^{-1})$ as $r \to \infty$, as required by asymptotic
  flatness.

  The constraints on the parameters at each corner \eqref{conds1} are
  given in Theorem \ref{thm:corner}. The additional constraint
  \eqref{conds3} is equivalent to the condition $h_i f_i >0$, which
  also follows from Theorem \ref{thm:corner}. The constraints on the
  parameters at a horizon \eqref{conds_horizon} are given in Theorems
  \ref{thm:horizon} and \ref{thm:horizon+AF}.
  
  The constraints
  \eqref{conds2} are equivalent to $\omh = 0$ on each of the axis rods $I_i=(z_i, z_{i+1})$,
  which is required by smoothness at the axes, see Lemma \ref{lemma:axis}. This can be seen as
  follows. From \eqref{1forms_classthm} it is obvious that $\omh$ is
  constant on any axis rod $I_i$. It can be shown that the difference between $\omh$ evaluated on two
  adjacent rods separated by the centre $(0,z_i)$ is  given
  by $-2$ times the lefthand side of \eqref{conds2}. Furthermore, by asymptotic
  flatness $\omh$ vanishes on the  rods $I_+=(z_n, \infty)$ and $I_-=(-\infty, z_1)$. We deduce that $\omh|_{I_i}=0$ for all $i=1, \dots n-1$
  precisely if \eqref{conds2} is
  satisfied, as claimed. It is worth noting that for any corner $(0,z_i)$ the condition (\ref{conds2}) is in fact equivalent to $\omega_\psi|_{r_i=0}=0$, as is also required by Theorem \ref{thm:corner}.
  
  Finally,  we note that the other
  condition required for smooth axes, given in Lemma \ref{lemma:axis}, is that
  $\chi$ evaluated on each axis rod has to be an odd integer.  Evaluating  \eqref{1forms_classthm} on each axis rod, we find this is automatically satisfied since $h_i$ are integers for all $i=1, \dots, n$ (see equation \eqref{chiIi}).
  
  \end{proof}
\vspace{1.0cm}
\noindent {\bf Remarks}
\begin{enumerate}
\item This shows that supersymmetric black holes and
solitons, {\it must} be multi-centred solutions with a Gibbons--Hawking base. This is a five-dimensional analogue of Corollary 4.2
in~\cite{Chrusciel:2005ve}.  
\item To confirm that the solution is smooth and stably causal everywhere in the DOC one must check the condition (\ref{smooth1}) in Lemma \ref{GHlemma} and \eqref{causality}. Our analysis shows that these are indeed satisfied near infinity, near the horizon and near any point corresponding to a corner of the orbit space. We have been unable to check that the conditions listed in Theorem \ref{classthm} are sufficient to ensure smoothness and causality are obeyed everywhere else in the DOC. However, based on the known examples (discussed below) we believe that no further conditions on the parameters arise. Nevertheless, it is possible that (\ref{smooth1}) and (\ref{causality}) may impose
additional constraints on the parameters of the solution.
\end{enumerate}

We can see from Theorem \ref{classthm} that a general solution
with $n = k + l$ centres will be determined by the (discrete)
$n$-dimensional vector $h = (h_1,\ldots,h_n)$, as well as $(4n -1)$
real parameters,
\begin{equation}
\Big\{ \{z_{i+1}-z_i\}_{i=1,\ldots,n-1},\,\{k_i,\ell_i,m_i\}_{i=1,\ldots,n}\Big\}\,,
\end{equation}
subject to $3k+l$ constraint equations
\eqref{conds2}--\eqref{conds1}, of which \eqref{conds1} can be solved
algebraically. Furthermore, there is a remaining one-parameter gauge freedom
\eqref{Kshift}--\eqref{LMshift} in the harmonic functions under which
\begin{equation}
  \label{shift}
  k_i \to k_i + ch_i\,,\qquad
  \ell_i \to \ell_i-2ck_i-2c^2h_i\,,\qquad
  m_i \to m_i -\tfrac{3}{2}c\ell_i + \tfrac{3}{2}c^2k_i +
  \tfrac{1}{2}c^3h_i\,.
\end{equation}
Summing up, we find that the moduli space of $(k+l)$-centred solutions,
$\mathcal{M}^{k,l}$, is given by the subset  of the $(2k+4l-1)$-dimensional
parameter space
\begin{equation}
\Big\{ \{z_{i+1}-z_i\}_{i=1,\ldots,n-1},\,\{k_i\}_{i=1,\ldots,n},\,
\{\ell_j,m_j\}_{\text{if $z_j$ is a horizon}}\Big\}\,,
\end{equation}
defined by the set of $k+l$
polynomial equations \eqref{conds2} subject to the inequalities \eqref{conds3} and equivalence relations
\eqref{shift}.  By a general count of degrees of
freedom,
\begin{equation}
  \dim \mathcal{M}^{k,l} = k + 3l - 2 - \tilde{\Delta} + \Delta(k,l) \,,
\end{equation}
where $\tilde{\Delta}$ has been introduced to correct for any
potential restrictions on the parameters coming from (\ref{smooth1}, \ref{causality}) (see Remark 2 above),
and the second correction term $\Delta(k,l)$ to accomodate for a potential
redundancy in equations \eqref{conds2}. One can easily see that there
is at least one
such redundancy as summing 
\eqref{conds2} over all $i$ gives
\begin{equation}
  m\sum_{i=1}^{n}h_i +\frac{3}{2}\sum_{i=1}^n k_i + \sum_{i=1}^n
  \sum_{\stackrel{j=1}{j\neq i}}^{n}
   \frac{h_im_j- m_i
    h_j-\frac{3}{2}(\ell_ik_j- k_i \ell_j) }{|z_i-z_j|} = 0\,
\end{equation}
where we have made use of \eqref{asympt} and \eqref{conds1}, and the
double sum vanishes for reasons of symmetry.
We thus know that 
\begin{equation}
  1 \leq \Delta(k,l)  \leq k+l.
\end{equation}
In fact on the basis of known examples, we will conjecture that
$\tilde{\Delta} = 0$, $\Delta(k,l) = 1$, so
\begin{equation}
  \dim \mathcal{M}^{k,l} = k + 3l - 1 \,.   \label{dimmoduli}
\end{equation}
Indeed, this agrees with the known solutions which are discussed below.

When counting the number of solutions it is important to realise there is a redundancy in our parameterisation corresponding to a discrete global isometry,
\begin{equation}\label{reflection}
  z\to -z\,,\qquad
  z_i \to -z_{n-i+1}\,,\qquad
  \phi \to - \phi\,,\qquad 
  i \to n-i+1\,.
\end{equation}
Now, each separate choice of $h$ will define a component of the moduli
space. The number of connected components of $\mathcal{M}^{k,l}$ is
thus given by the number of possible choices of $h$, taking into
account the remaining reflection symmetry \eqref{reflection} of
the axis. As we will show next, the choice of $h$ is precisely equivalent to the rod structure of
the solution, so the number of components of the moduli space is also the number of inequivalent rod structures. 

As we have seen earlier, the centres $z=z_i$ split the $z$-axis into
$n+1$ intervals, $I_\pm, I_i$, on each of which the respective
Killing field \eqref{rodvecs_euler} vanishes. Having the full solution
at hand, we can now explicitly evaluate $\chi$ on each of
these intervals as
\begin{equation} \label{chiIi}
   \chi_\pm = \pm 1, \qquad
  \chi_i \equiv \chi|_{I_i} = \sum_{j=1}^i h_j - \sum_{j=i+1}^n
    h_j
  =  2\sum_{j=1}^i h_j - 1 \, ,
\end{equation}
where the final equality follows from the asymptotic condition (\ref{asympt}).
Therefore, in the basis defined by \eqref{vpm}, one finds the rod
vectors are given by \eqref{rodvecs} with
\begin{equation}
a_i = \sum_{j=1}^{i}h_j\,.
\end{equation}
The determinants of adjacent rod vectors \eqref{detrodvecs} are then
precisely given by the value of $h_i$ at the respective centre,
\begin{equation}
  \det ( v^T_- \; v^T_1)  =h_1\,, \qquad
  \det ( v^T_i  \; v^T_{i+1} ) = h_{i+1}\,,\qquad
  \det ( v^T_n \;  v^T_+) = h_n\,.
\end{equation}
Therefore, our horizon and axes analysis, which showed that $h_i=p$ for a centre corresponding to an $L(p,q)$ horizon, and $h_i=\pm 1$ for a centre corresponding to a corner of the orbit space, precisely agree with the compatibility conditions for adjacent rod vectors previously derived for stationary and biaxisymmetric spacetimes~\cite{Hollands:2007aj}. Therefore, these compatibility conditions impose no extra constraints.

The topology of the domain of outer communication is nontrivial and determined by the rod structure.
The internal axis rods
$I_i$ ($i = 1,\ldots,n-1$), or indeed any simple curve in the $\mathbb{R}^3$ GH base between the endpoints of $I_i$, together with the $U(1)$ $\psi$-fibre over the GH base, correspond to noncontractible 2-cycles $C_i$.  If the endpoints of $I_i$ are both corners of the orbit space the $\psi$-fibre collapses smoothly at the endpoints, so $C_i$ is a surface of $S^2$ topology.  If one endpoint of $I_i$ is a corner and one a horizon, then $C_i$ is a surface of 2-disc topology, with the boundary of the disc attached to the horizon. Finally, if both endpoints are horizons then $C_i$ is a 2-tube with each of its boundaries attached to one horizon.

\subsection{Soliton solutions}

Let us first consider the moduli space of $n$-centred soliton
solutions, $\mathcal{M}^{n,0}$. Since every centre corresponds to a corner of the orbit space we must have $h_i = \pm 1$ for all
$i = 1,\ldots n$. On the other hand, asymptotic flatness requires $\sum_{i=1}^n h_i=1$. It follows that soliton solutions will necessarily have an odd
number of centres, $n = 2m + 1$, where $m$ is the number of $h_i=-1$.  We can now easily determine the number of distinct rod structures this
allows for. There are $\binom{n}{m}$ possible ways of
choosing $h\equiv(h_1,\ldots,h_n)$. Some of these, however, will be
related by the discrete reflection symmetry (\ref{reflection}) (which implies
$h_i\to h_{n-i+1}$) and thus correspond to isometric
solutions. Correcting for this overcounting, one finds the number of
connected components of the moduli space to be given by
\begin{equation}
N(\mathcal{M}^{n,0})=\frac{1}{2}\left[\binom{n}{m} +
  \binom{m}{[m/2]}\right]\,,
\end{equation}
where the latter term arises as a correction for solutions which are
themselves symmetric under reflection (and thus had \emph{not} falsely
been overcounted before). 

For $n=1$, the only possible
solution without a black hole is Minkowski space. The allowed inequivalent rod structures for
$n=3$ are defined by $h =(1,1,-1)$ and $h=(1,-1,1)$ and are depicted
in figure \ref{fig:solitons}.  
\begin{figure}[h!]
\centering
\subfloat[$h=(1,1,-1)$]{
\begin{tikzpicture}[scale=1, every node/.style={scale=0.7}]
  \draw[very thick](-4,0)--(-1.9,0)node[midway,above=.2cm]{$(1,0)$};
  \draw[very thick](-1.7,0)--(0.0,0)node[midway,above=0.2cm]{$(0,1)$};
  \draw[very thick](0.2,0)--(2.0,0)node[midway,above= 0.2cm]{$(-1,2)$};
  \draw[very thick](2.2,0)--(4.2,0)node[midway,above= 0.2cm]{$(0,1)$};
  \draw[fill=black] (-1.8,0) circle [radius=.1];
  \draw[fill=black] (0.1,0) circle [radius=.1]; 
  \draw[fill=black] (2.1,0) circle [radius=.1];
\end{tikzpicture}}
\hspace{2em}
\subfloat[$h=(1,-1,1)$]{
\begin{tikzpicture}[scale=1, every node/.style={scale=0.7}]
  \draw[very thick](-4,0)--(-1.9,0)node[midway,above=.2cm]{$(1,0)$};
  \draw[very thick](-1.7,0)--(0.0,0)node[midway,above=.2cm]{$(0,1)$};
  \draw[very thick](0.2,0)--(2.0,0)node[midway,above=.2cm]{$(1,0)$};
  \draw[very thick](2.2,0)--(4.2,0)node[midway,above=.2cm]{$(0,1)$};
  \draw[fill=black] (-1.8,0) circle [radius=.1];
  \draw[fill=black] (0.1,0) circle [radius=.1];
  \draw[fill=black] (2.1,0) circle [radius=.1];
\end{tikzpicture}}
\caption{Inequivalent rod structures for 3-centred solitons.}
\label{fig:solitons}
\end{figure}
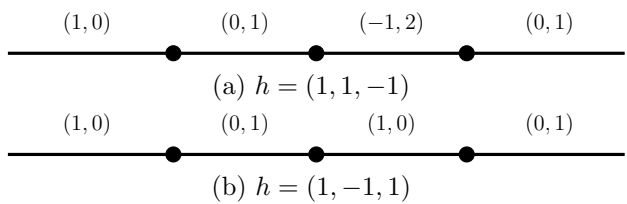
In particular, we see that there are two inequivalent soliton solutions in this case. The above counting formula shows that the number of inequivalent soliton solutions increases with $n$.

The $n$-centred soliton solutions correspond to asymptotically flat, globally hyperbolic regular spacetimes containing $n-1$ noncontracible 2-cycles, or `bubbles'.  Such bubbling spacetimes were first constructed in~\cite{Bena:2005va} and some their global properties elucidated in~\cite{Gibbons:2013tqa}.

\subsection{Single black hole solutions}

We now consider the moduli space for $n$-centred solutions with a single black hole, $\mathcal{M}^{n-1,1}$.  Thus, for one centre, say $z_j$,
the determinant of the matrix of adjacent rod vectors
$h_j = p \in \mathbb{Z}$ while the other centres correspond to corners, so 
$h_i = \pm 1$ for all $i\neq j$. As we have seen, this means that the
centre $z=z_j$ corresponds to a horizon of topology
$L(p,1)$. Denote the number of corners with $h_i=\pm 1$ by $n_\pm$ so $n_++n_-+1=n$. Asymptotic flatness \eqref{asympt} also implies that $n_+-n_-+p=1$. It follows that 
\be
p= n - 2 n_+
\ee
where $0\leq n_+ \leq n-1$. Hence $p$  is even for an even number of centres and odd otherwise and the possible values of $p$ are $-n+2, -n+4, \dots, n-2, n$.  

For a given $p$, there are $n \binom{n-1}{n_+}$ ways of choosing $h$. However, some of these configurations will be related by the reflection symmetry (\ref{reflection}) and hence they are double counted. To determine this number, we first must identify the number of configurations which are symmetric under the reflection. Symmetric rod structures can only occur for odd $n$ and even $n_+$, with the middle centre corresponding to the horizon, in which case there are $\binom{(n-1)/2}{n_+/2}$ such symmetric configurations. Putting all this together, we find that the number of components of the moduli space of single black hole solutions with $L(p,1)$ topology, $\mathcal{M}^{n-1,1}_{p}$, is
\begin{equation}\label{singleLp}
N(\mathcal{M}_p^{n-1,1}) = 
\begin{cases}
\frac{n}{2} \binom{n-1}{n_+}+ \frac{1}{2}\binom{(n-1)/2}{n_+/2} & \text{if $n$ odd and $n_+$ even}\\
\frac{n}{2} \binom{n-1}{n_+} & \text{otherwise}
\end{cases}
\end{equation}
Summing over the possible $p$ we find that the total number is
\begin{equation}\label{singleBH}
  N(\mathcal{M}^{n-1,1}) = \sum_{n_+=0}^{n-1}N(\mathcal{M}^{n-1,1}_{p}) = 
  \begin{cases}
  n 2^{n-2} & \text{if $n$ is even} \\
  n 2^{n-2} + 2^{\frac{n-3}{2}}  & \text{if $n$ is odd} 
  \end{cases}
\end{equation}
Let us consider a few examples. 

The simplest possibility is $n=1$, which implies $n_+=0 $ and $p=1$ and hence the horizon topology is $S^3$. This of course corresponds to the BMPV black hole~\cite{Breckenridge:1996is}.

Now let us consider the $n=2$ case. From \eqref{singleBH} we find that there are 2 classes of
two-centred single black hole solutions, whose rod structures are
shown in figure \ref{fig:2centres}. The first of these, figure
\ref{2lens} is the recently constructed $L(2,1)$ black lens~\cite{Kunduri:2014kja}. Figure \ref{ring}
corresponds to the known supersymmetric black ring solution~\cite{Elvang:2004rt}.
\begin{figure}[h!]
\centering
\subfloat[$p=2$, $h=(2,-1)$]{
\begin{tikzpicture}[scale=1, every node/.style={scale=0.7}]
  \draw[very thick](-3,0)--(-1.1,0)node[midway,above=.2cm]{$(1,0)$};
  \draw[very thick](-0.9,0)--(0.9,0)node[midway,above=0.2cm]{$(-1,2)$}; 
  \draw[very thick](1.1,0)--(3.0,0)node[midway, above= 0.2cm]{$(0,1)$};
  \draw[fill=white] (-1,0) circle [radius=.1] node[above=.2cm]{$H$};
  \draw[fill=black] (1,0) circle [radius=.1]; 
\end{tikzpicture}\label{2lens}}
\hspace{2em}
\subfloat[$p=0$, $h=(0,1)$.]{
\begin{tikzpicture}[scale=1, every node/.style={scale=0.7}]
  \draw[very thick](-3,0)--(-1.1,0)node[midway,above=.2cm]{$(1,0)$};
  \draw[very thick](-0.9,0)--(0.9,0)node[midway,above=0.2cm]{$(1,0)$}; 
  \draw[very thick](1.1,0)--(3.0,0)node[midway, above= 0.2cm]{$(0,1)$};
  \draw[fill=white] (-1,0) circle [radius=.1] node[above=.2cm]{$H$};
  \draw[fill=black] (1,0) circle [radius=.1]; 
\end{tikzpicture}\label{ring}}
\caption{Rod structures for 2-centred single black hole solutions.}
\label{fig:2centres}
\end{figure}
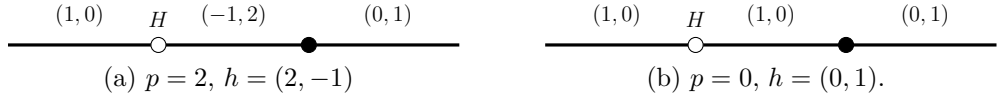

Next, for $n=3$ we see there are seven distinct rod structures. These are depicted in figures
\ref{fig:3centres1} and \ref{fig:3centres2}. There are two inequivalent black holes with a horizon of topology
$L(3,1)$, of which only figure \ref{3lens1} corresponds to the solution
constructed in \cite{Tomizawa:2016kjh}. There are five inequivalent $S^3$ black holes, of which only figure \ref{bubble}
corresponds to the known $S^3$ black hole with bubble~\cite{Kunduri:2014iga}. The other solutions had not previously been constructed.

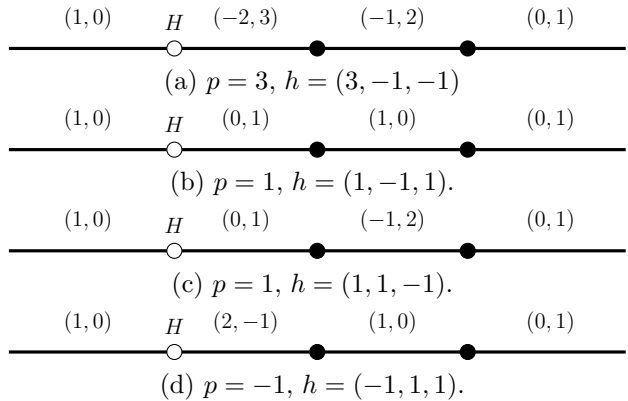
\begin{figure}[h!]
\centering
\subfloat[$p=3$, $h=(3,-1,-1)$]{
\begin{tikzpicture}[scale=1, every node/.style={scale=0.7}]
  \draw[very thick](-4,0)--(-1.9,0)node[midway,above=.2cm]{$(1,0)$};
  \draw[very thick](-1.7,0)--(0.0,0)node[midway,above=0.2cm]{$(-2,3)$}; 
  \draw[very thick](0.2,0)--(2.0,0)node[midway, above= 0.2cm]{$(-1,2)$};
  \draw[very thick](2.2,0)--(4.2,0)node[midway, above= 0.2cm]{$(0,1)$};
  \draw[fill=white] (-1.8,0) circle [radius=.1] node[above=.2cm]{$H$};
  \draw[fill=black] (0.1,0) circle [radius=.1]; 
  \draw[fill=black] (2.1,0) circle [radius=.1];
\end{tikzpicture}\label{3lens1}}
\hspace{2em}
\subfloat[$p=1$, $h=(1,-1,1)$.]{
\begin{tikzpicture}[scale=1, every node/.style={scale=0.7}]
  \draw[very thick](-4,0)--(-1.9,0)node[midway,above=.2cm]{$(1,0)$};
  \draw[very thick](-1.7,0)--(0.0,0)node[midway,above=.2cm]{$(0,1)$};
  \draw[very thick](0.2,0)--(2.0,0)node[midway,above=.2cm]{$(1,0)$};
  \draw[very thick](2.2,0)--(4.2,0)node[midway,above=.2cm]{$(0,1)$};
  \draw[fill=white] (-1.8,0) circle [radius=.1] node[above=.2cm]{$H$};
  \draw[fill=black] (0.1,0) circle [radius=.1];
  \draw[fill=black] (2.1,0) circle [radius=.1];
\end{tikzpicture}\label{bubble}}
\hspace{2em}
\subfloat[$p=1$, $h=(1,1,-1)$.]{
\begin{tikzpicture}[scale=1, every node/.style={scale=0.7}]
  \draw[very thick](-4,0)--(-1.9,0)node[midway,above=.2cm]{$(1,0)$};
  \draw[very thick](-1.7,0)--(0.0,0)node[midway,above=.2cm]{$(0,1)$};
  \draw[very thick](0.2,0)--(2.0,0)node[midway,above=.2cm]{$(-1,2)$};
  \draw[very thick](2.2,0)--(4.2,0)node[midway,above=.2cm]{$(0,1)$};
  \draw[fill=white] (-1.8,0) circle [radius=.1] node[above=.2cm]{$H$};
  \draw[fill=black] (0.1,0) circle [radius=.1];
  \draw[fill=black] (2.1,0) circle [radius=.1];
\end{tikzpicture}}
\hspace{2em}
\subfloat[$p=-1$, $h=(-1,1,1)$.]{
\begin{tikzpicture}[scale=1, every node/.style={scale=0.7}]
  \draw[very thick](-4,0)--(-1.9,0)node[midway,above=.2cm]{$(1,0)$};
  \draw[very thick](-1.7,0)--(0.0,0)node[midway,above=.2cm]{$(2,-1)$};
  \draw[very thick](0.2,0)--(2.0,0)node[midway,above=.2cm]{$(1,0)$};
  \draw[very thick](2.2,0)--(4.2,0)node[midway,above=.2cm]{$(0,1)$};
  \draw[fill=white] (-1.8,0) circle [radius=.1] node[above=.2cm]{$H$};
  \draw[fill=black] (0.1,0) circle [radius=.1];
  \draw[fill=black] (2.1,0) circle [radius=.1];
\end{tikzpicture}}
\caption{Rod structures for 3-centred single black holes
  with the horizon at the first centre.}
\label{fig:3centres1}
\end{figure}

\begin{figure}[h!]
\centering
\subfloat[$p=3$, $h=(-1,3,-1)$]{
  \begin{tikzpicture}[scale=1, every node/.style={scale=0.7}]
    \draw[very thick](-4,0)--(-1.9,0)node[midway,above=.2cm]{$(1,0)$};
    \draw[very thick](-1.7,0)--(0.0,0)node[midway,above=.2cm]{$(2,-1)$};
    \draw[very thick](0.2,0)--(2.0,0)node[midway,above=.2cm]{$(-1,2)$};
    \draw[very thick](2.2,0)--(4.2,0)node[midway,above=.2cm]{$(0,1)$};
    \draw[fill=black] (-1.8,0) circle [radius=.1] ;
    \draw[fill=white] (0.1,0) circle [radius=.1] node[above=.2cm]{$H$};
    \draw[fill=black] (2.1,0) circle [radius=.1];
\end{tikzpicture}}
\hspace{2em}
\subfloat[$p=1$, $h=(-1,1,1)$.]{
\begin{tikzpicture}[scale=1, every node/.style={scale=0.7}]
  \draw[very thick](-4,0)--(-1.9,0)node[midway,above=.2cm]{$(1,0)$};
  \draw[very thick](-1.7,0)--(0.0,0)node[midway,above=.2cm]{$(2,-1)$};
  \draw[very thick](0.2,0)--(2.0,0)node[midway,above=.2cm]{$(1,0)$};
  \draw[very thick](2.2,0)--(4.2,0)node[midway,above=.2cm]{$(0,1)$};
  \draw[fill=black] (-1.8,0) circle [radius=.1];
  \draw[fill=white] (0.1,0) circle [radius=.1] node[above=.2cm]{$H$};
  \draw[fill=black] (2.1,0) circle [radius=.1];
\end{tikzpicture}}
\hspace{2em}
\subfloat[$p=-1$, $h=(1,-1,1)$.]{
\begin{tikzpicture}[scale=1, every node/.style={scale=0.7}]
  \draw[very thick](-4,0)--(-1.9,0)node[midway,above=.2cm]{$(1,0)$};
  \draw[very thick](-1.7,0)--(0.0,0)node[midway,above=.2cm]{$(0,1)$};
  \draw[very thick](0.2,0)--(2.0,0)node[midway,above=.2cm]{$(1,0)$};
  \draw[very thick](2.2,0)--(4.2,0)node[midway,above=.2cm]{$(0,1)$};
  \draw[fill=black] (-1.8,0) circle [radius=.1];
  \draw[fill=white] (0.1,0) circle [radius=.1] node[above=.2cm]{$H$};
  \draw[fill=black] (2.1,0) circle [radius=.1] ;
\end{tikzpicture}}
\caption{Rod structures for 3-centred single black holes
  with a central horizon.}
\label{fig:3centres2}
\end{figure}
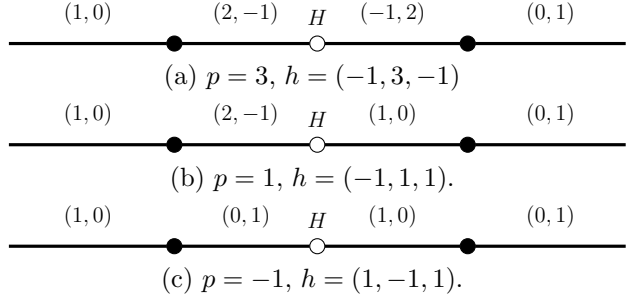

 More generally, we see that a single $S^3$ black hole, so $p=\pm 1$, requires an odd number of centres. Such solutions correspond to a spherical black hole in a bubbling spacetime with  $n-2$ bubbles (and $1$ disc), or $n-3$ bubbles (and $2$ discs), depending on which centre corresponds to the horizon, and have not been previously constructed.
 
Now consider single black hole solutions with $S^1\times S^2$ horizon topology,  so $p=0$. These must have an even number of centres $n$ and from the above we see that there are $\frac{n}{2} \binom{n-1}{n/2}$ inequivalent $n$-centred solutions with a single black ring. For even $n>2$ we find there are an increasing number of inequivalent black ring in bubbling spacetime solutions which have not previously been discussed. For example, in figure \ref{fig:4centres1}, we list the six possible rod structures for 4-centred single black ring solutions. 

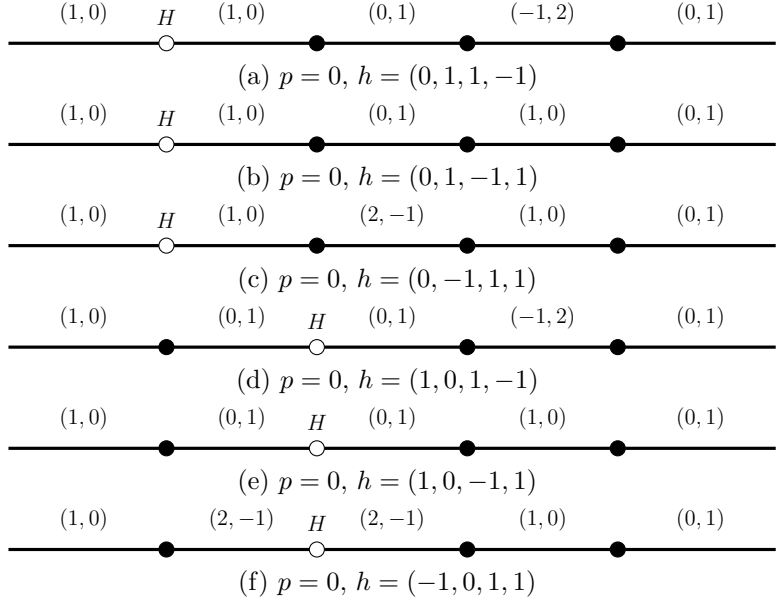
\begin{figure}[h!]
\centering
\subfloat[$p=0$, $h=(0,1,1,-1)$]{
  \begin{tikzpicture}[scale=1, every node/.style={scale=0.7}]
    \draw[very thick](-5.1,0)--(-3.1,0)node[midway,above=.2cm]{$(1,0)$};
    \draw[very thick](-2.9,0)--(-1.1,0)node[midway,above=.2cm]{$(1,0)$};
    \draw[very thick](-0.9,0)--(0.9,0)node[midway,above=.2cm]{$(0,1)$};
    \draw[very thick](1.1,0)--(2.9,0)node[midway,above=.2cm]{$(-1,2)$};
    \draw[very thick](3.1,0)--(5.1,0)node[midway,above=.2cm]{$(0,1)$};
    \draw[fill=white] (-3,0) circle [radius=.1] node[above=.2cm]{$H$};
    \draw[fill=black] (-1,0) circle [radius=.1] ;
    \draw[fill=black] (1,0) circle [radius=.1] ;
    \draw[fill=black] (3,0) circle [radius=.1];
\end{tikzpicture}}
\hspace{2em}
\subfloat[$p=0$, $h=(0,1,-1,1)$]{
  \begin{tikzpicture}[scale=1, every node/.style={scale=0.7}]
    \draw[very thick](-5.1,0)--(-3.1,0)node[midway,above=.2cm]{$(1,0)$};
    \draw[very thick](-2.9,0)--(-1.1,0)node[midway,above=.2cm]{$(1,0)$};
    \draw[very thick](-0.9,0)--(0.9,0)node[midway,above=.2cm]{$(0,1)$};
    \draw[very thick](1.1,0)--(2.9,0)node[midway,above=.2cm]{$(1,0)$};
    \draw[very thick](3.1,0)--(5.1,0)node[midway,above=.2cm]{$(0,1)$};
    \draw[fill=white] (-3,0) circle [radius=.1] node[above=.2cm]{$H$};
    \draw[fill=black] (-1,0) circle [radius=.1] ;
    \draw[fill=black] (1,0) circle [radius=.1] ;
    \draw[fill=black] (3,0) circle [radius=.1];
\end{tikzpicture}}
\hspace{2em}
\subfloat[$p=0$, $h=(0,-1,1,1)$]{
  \begin{tikzpicture}[scale=1, every node/.style={scale=0.7}]
    \draw[very thick](-5.1,0)--(-3.1,0)node[midway,above=.2cm]{$(1,0)$};
    \draw[very thick](-2.9,0)--(-1.1,0)node[midway,above=.2cm]{$(1,0)$};
    \draw[very thick](-0.9,0)--(0.9,0)node[midway,above=.2cm]{$(2,-1)$};
    \draw[very thick](1.1,0)--(2.9,0)node[midway,above=.2cm]{$(1,0)$};
    \draw[very thick](3.1,0)--(5.1,0)node[midway,above=.2cm]{$(0,1)$};
    \draw[fill=white] (-3,0) circle [radius=.1] node[above=.2cm]{$H$};
    \draw[fill=black] (-1,0) circle [radius=.1] ;
    \draw[fill=black] (1,0) circle [radius=.1] ;
    \draw[fill=black] (3,0) circle [radius=.1];
\end{tikzpicture}}
\hspace{2em}
\subfloat[$p=0$, $h=(1,0,1,-1)$]{
 \begin{tikzpicture}[scale=1, every node/.style={scale=0.7}]
   \draw[very thick](-5.1,0)--(-3.1,0)node[midway,above=.2cm]{$(1,0)$};
   \draw[very thick](-2.9,0)--(-1.1,0)node[midway,above=.2cm]{$(0,1)$};
   \draw[very thick](-0.9,0)--(0.9,0)node[midway,above=.2cm]{$(0,1)$};
   \draw[very thick](1.1,0)--(2.9,0)node[midway,above=.2cm]{$(-1,2)$};
   \draw[very thick](3.1,0)--(5.1,0)node[midway,above=.2cm]{$(0,1)$};
   \draw[fill=black] (-3,0) circle [radius=.1];
   \draw[fill=white] (-1,0) circle [radius=.1] node[above=.2cm]{$H$} ;
   \draw[fill=black] (1,0) circle [radius=.1] ;
   \draw[fill=black] (3,0) circle [radius=.1];
\end{tikzpicture}}
\hspace{2em}
\subfloat[$p=0$, $h=(1,0,-1,1)$]{
 \begin{tikzpicture}[scale=1, every node/.style={scale=0.7}]
   \draw[very thick](-5.1,0)--(-3.1,0)node[midway,above=.2cm]{$(1,0)$};
   \draw[very thick](-2.9,0)--(-1.1,0)node[midway,above=.2cm]{$(0,1)$};
   \draw[very thick](-0.9,0)--(0.9,0)node[midway,above=.2cm]{$(0,1)$};
   \draw[very thick](1.1,0)--(2.9,0)node[midway,above=.2cm]{$(1,0)$};
   \draw[very thick](3.1,0)--(5.1,0)node[midway,above=.2cm]{$(0,1)$};
   \draw[fill=black] (-3,0) circle [radius=.1];
   \draw[fill=white] (-1,0) circle [radius=.1] node[above=.2cm]{$H$} ;
   \draw[fill=black] (1,0) circle [radius=.1] ;
   \draw[fill=black] (3,0) circle [radius=.1];
\end{tikzpicture}}
\hspace{2em}
\subfloat[$p=0$, $h=(-1,0,1,1)$]{
 \begin{tikzpicture}[scale=1, every node/.style={scale=0.7}]
   \draw[very thick](-5.1,0)--(-3.1,0)node[midway,above=.2cm]{$(1,0)$};
   \draw[very thick](-2.9,0)--(-1.1,0)node[midway,above=.2cm]{$(2,-1)$};
   \draw[very thick](-0.9,0)--(0.9,0)node[midway,above=.2cm]{$(2,-1)$};
   \draw[very thick](1.1,0)--(2.9,0)node[midway,above=.2cm]{$(1,0)$};
   \draw[very thick](3.1,0)--(5.1,0)node[midway,above=.2cm]{$(0,1)$};
   \draw[fill=black] (-3,0) circle [radius=.1] ;
   \draw[fill=white] (-1,0) circle [radius=.1] node[above=.2cm]{$H$} ;
   \draw[fill=black] (1,0) circle [radius=.1] ;
   \draw[fill=black] (3,0) circle [radius=.1];
\end{tikzpicture}}
\caption{Rod structures for 4-centred single black ring solutions}
\label{fig:4centres1}
\end{figure}

\subsection{Multi black hole solutions}

We will not consider the case of multi black holes in
detail. Previously constructed examples in this class are the multi
black rings~\cite{Gauntlett:2004wh}, a double $S^3$ black
hole~\cite{Crichigno:2016lac} and more generally multi black
lenses~\cite{Tomizawa:2017suc}.  We emphasise that the multi extreme
Reissner-Nordstrom and multi BMPV black hole solutions
\cite{Gauntlett:1998fz} do not fit into our classification as they are not biaxisymmetric (they preserve at most $SO(3)$ rotational symmetry).

\subsection{Physical properties}

The mass and angular momenta for the general solution described in
Theorem \ref{classthm} are given by (using \eqref{asympt})
\begin{align}
  M &= 3\pi\Big(\sum_{i=1}^{n}\ell_i + \frac{4}{9}m^2 \Big)\,,\label{mass}\\
  J_\psi &= 2\pi \sum_{i=1}^{n}\Big(\frac{4}{9}m^2k_i -m\ell_i + m_i
           \Big)\,,\label{Jpsi}\\
  J_\phi &= 2\pi \sum_{i=1}^{n}z_i\Big(mh_i + \frac{3}{2}k_i \Big)\,,\label{Jphi}
\end{align}
and the electric charge
\begin{equation}\label{charge}
   Q = \frac{1}{4\pi} \int_{S^3}\star F = 2\sqrt{3}
   \pi\Big(\sum_{i=1}^{n}\ell_i + \frac{4}{9}m^2 \Big)
\end{equation}
satisfies the BPS bound $Q = \frac{2}{\sqrt{3}} M$. 

As noted above, the general solution possesses nontrivial topology in the form of 2-cycles (bubbles, discs, tubes) corresponding to the finite axis rods $I_i$, where $i=1, \dots, n-1$. The fluxes through these noncontractible
2-cycles $C_i$ are given by
\begin{equation}
  \Pi[C_i] = \frac{1}{4\pi}\int_{C_i}F =\frac{\sqrt{3}}{2}
  \frac{h_jm_j+\frac{1}{2}k_j\ell_j}{h_j\ell_j+k_j^2}\Big|_{j=i}^{j=i+1}\,.
\end{equation}
Note that for a corner $z_j$, the expression on the right hand
side simplifies to
$(h_jm_j+\frac{1}{2}k_j\ell_j)/(h_k\ell_j+k_j^2) =
-k_j/h_j$.  The nontrivial topology also allows us to define constant magnetic potentials $\Phi_i$
associated with each axis rod $I_i$ by~\cite{Kunduri:2013vka}
\begin{equation}
  \iota_{v_i}F = \d \Phi_i
\end{equation}
where we fix $\Phi_i \to 0$ asymptotically. We find that, for $i=1, \dots, n-1$, the $\Phi_i$ evaluated on the corresponding axis rods are 
\begin{equation}\label{potentials}
q_i\equiv   \Phi_i|_{I_i} = \frac{\sqrt{3}}{2}\Big((\chiph|_{I_i}-1)\sum_{j=1}^i k_j +
   (\chiph|_{I_i}+1)\sum_{j=i+1}^nk_j\Big)\,,
\end{equation}
 which are indeed constants.

Thus a solution with $n$-centres carries the global charges
$Q,J_\psi, J_\phi$ (with $M$ fixed by $Q$) and also $n-1$ local
magnetic potentials $q_i$ (or  magnetic fluxes
$\Pi[C_i]$), leading to a total of $n+2$ physical charges. On the other
hand, the dimension of the moduli space (\ref{dimmoduli}) for a
solution with a single black hole is $n+1$ and for a soliton is
$n-1$. Therefore, for a single black hole there the must be a single
constraint on the $n+2$ physical parameters, whereas for a soliton
there must be three such constraints. 

The constraints on the physical parameters can be seen more explicitly
as follows. Using the constraints on the parameters  \eqref{conds2} one can show that
\begin{align}
  J_\phi = -2\pi \sum_{i=1}^{n} \sum_{j<i}
  \left(h_im_j-m_ih_j-\tfrac{3}{2}(\ell_ik_j-k_i\ell_j)\right)\,.\label{Jphi_new}
\end{align}
The gauge freedom  \eqref{shift} implies that we can always set $k_j=0$ for at least one $j \in \{1,\cdots,n\}$ (since at least one $h_i$ must be nonvanishing due to (\ref{asympt})). Therefore, we may invert  \eqref{potentials} to express the remaining $n-1$ parameters $k_{i\neq j}$ as linear combinations of the $n-1$ magnetic potentials $q_i$ (one can check the matrix relating the two sets of quantities is indeed invertible). This gives a direct physical interpretation to the parameters $k_i$.  At every corner $z_i$ the parameters $\ell_i, m_i$ are determined in terms of $k_i$ and hence can also be expressed solely in terms of the magnetic potentials. In the case of
a single black hole at position $z_h$, we can then invert \eqref{charge} and \eqref{Jpsi} to express the parameters $\ell_h$ and $m_h$ purely in terms of the physical parameters, 
\begin{equation}
\ell_h = \ell_h(Q, q_i)\,,\qquad m_h = m_h(Q, J_\psi, q_i)\,,
\end{equation}
and using these it is then clear that \eqref{Jphi_new} implies the single constraint amongst the physical parameters is of the form
\begin{equation}
J_\phi = J_\phi (Q,J_\psi, q_i)\,.
\end{equation}
In the case of a soliton
solution, all the parameters $\ell_i$ and $m_i$ are completely determined by
the $k_i$ (and hence the $q_i$), which then implies the charge \eqref{charge}
and angular momenta (\ref{Jpsi},\ref{Jphi_new}) can be expressed solely in
terms of the magnetic potentials,
\begin{equation}
  Q = Q(q_i)\,, \qquad J_\psi = J_\psi(q_i)\,, \qquad J_\phi = J_\phi(q_i)\,,
\end{equation}
thus giving three constraints on the physical parameters as anticipated above.
\section{Discussion}
\label{sec:dis}

In this work we have presented a complete classification of asymptotically flat, supersymmetric and biaxisymmetric solutions to five-dimensional minimal supergravity, which are regular on and outside an event horizon. Our analysis also covers the case of spacetimes containing no black hole, in which case we obtain a complete classification of soliton spacetimes in this class. The essential local result is that such solutions {\it must} be in the class of multi-centred Gibbons--Hawking solutions. Although these have been extensively studied over the last decade or so, a global analysis of these solutions has not been previously presented and therefore our work also fills this important gap  \footnote{A global analysis of a subclass of supersymmetric solutions with a Gibbons--Hawking base which reduce to four-dimensional Euclidean Einstein--Maxwell solutions was performed in~\cite{Dunajski:2006vs}.}.  We reveal a rich moduli space of $n$-centred solutions both with and without a black hole.

One of the main global results is that we find a refinement of the allowed horizon topologies in this class. That is, horizon cross-sections must be $S^3$, $S^1\times S^2$ or a lens space $L(p,1)$, in particular ruling out $L(p,q)$ and $q \neq 1$ (mod $p$). Although examples of black hole solutions have been previously constructed  for each possible type, we find that there are an infinite number of distinct black hole solutions for each of the horizon topologies. More precisely, the number of distinct $n$-centred solutions containing a single black hole grows rapidly with $n$ (see equations (\ref{singleLp}) and (\ref{singleBH})).

An important technical problem which we were unable to solve is whether the constraints on the parameters of the solution required by smoothness of the horizon and the axes (given in Theorem \ref{classthm}) are in fact sufficient for smoothness and stably causality everywhere in the DOC. Based on numerical checks performed for the known examples we believe this is indeed the case, although this issue requires further investigation. Recently, progress in this direction for the bubbling soliton solutions was made~\cite{Avila:2017pwi}; it would be interesting to see if a similar method could be applied to the black hole case.

It is interesting to compare our results to vacuum gravity.  Here the classification of asymptotically flat, stationary and biaxisymmetric spacetimes is an open problem. It is known that black holes in this class must have horizons of $S^3$, $S^1\times S^2$ or lens space $L(p,q)$ topology, however, it is not known whether a smooth solution exists for every possible rod structure. Indeed, the only known explicit solutions are the $S^3$ Myers--Perry black hole and the $S^1\times S^2$ black ring, both of which have the simplest possible rod structure. Given that we have found a refinement of the allowed horizon topologies for supersymmetric black holes, it is interesting to consider if this also happens for vacuum black holes.   In fact we have shown that supersymmetry restricts the possible rod structures in such a way to constrain the horizon topology. In contrast, in the vacuum case, rod structures for black holes with $L(p,q)$ and $q \neq 1$ (mod $p$) are possible, although it is not known whether there exist corresponding {\it smooth} spacetimes~\cite{Khuri:2017xsc}. It is therefore still possible that such topologies are also not realised for vacuum black holes, although this remains an open problem.

On the other hand, if regular vacuum black holes with $L(p,q)$ and $q \neq 1$ (mod $p$) do exist, one then expects to be able to construct charged non-extremal versions of these in minimal supergravity and our results then show that such solutions would {\it not} have a supersymmetric limit.  This is not what occurs for the known families of spherical black holes and black rings, where the supersymmetric case always arises as a limit case of a larger non-extremal family.

There are a number of possible directions in which our work could be extended. Clearly, a similar classification in the more general minimal supergravity coupled to an arbitrary number of vector multiplets could be carried out, where one anticipates analogous results.  It would also be interesting to adapt our analysis to spacetimes with other relevant asymptotics such as Kaluza--Klein or Taub-NUT. Indeed, the local version of our horizon analysis, Theorem \ref{thm:horizon}, could be applied directly in these cases.  

It would be interesting to investigate the implications of our results for black hole non-uniqueness and the related problem of counting of black hole microstates in string theory.  Recently it was shown that a black hole in a spacetime with a single bubble in the DOC may have the same conserved charges as the standard spherical BMPV black hole (thereby demonstrating continuous violation of uniqueness even for spherical black holes)~\cite{Kunduri:2014iga}. Furthermore, it was also shown that this solution has higher entropy than the BMPV black hole as one approaches the BMPV upper spin limit~\cite{Horowitz:2017fyg}. Our classification presents the opportunity to analyse the {\it full} space of solutions with the same charges (and symmetry) as the standard solutions. We leave this interesting question to future work. \\

\noindent {\bf Acknowledgements}. VB is supported by an EPSRC studentship.  JL is supported by STFC [ST/L000458/1]. 

\bibliographystyle{jhep}
\bibliography{paper}

\providecommand{\href}[2]{#2}\begingroup\raggedright\begin{thebibliography}{10}

\bibitem{Chrusciel:2012jk}
P.~T. Chrusciel, J.~Lopes~Costa and M.~Heusler, \emph{{Stationary Black Holes:
  Uniqueness and Beyond}},
  \href{https://doi.org/10.12942/lrr-2012-7}{\emph{Living Rev. Rel.} {\bfseries
  15} (2012) 7} [\href{https://arxiv.org/abs/1205.6112}{{\ttfamily
  1205.6112}}].

\bibitem{Emparan:2008eg}
R.~Emparan and H.~S. Reall, \emph{{Black Holes in Higher Dimensions}},
  \href{https://doi.org/10.12942/lrr-2008-6}{\emph{Living Rev. Rel.} {\bfseries
  11} (2008) 6} [\href{https://arxiv.org/abs/0801.3471}{{\ttfamily
  0801.3471}}].

\bibitem{Bena:2007kg}
I.~Bena and N.~P. Warner, \emph{{Black holes, black rings and their
  microstates}}, \href{https://doi.org/10.1007/978-3-540-79523-0_1}{\emph{Lect.
  Notes Phys.} {\bfseries 755} (2008) 1}
  [\href{https://arxiv.org/abs/hep-th/0701216}{{\ttfamily hep-th/0701216}}].

\bibitem{Gibbons:2013tqa}
G.~W. Gibbons and N.~P. Warner, \emph{{Global structure of five-dimensional
  fuzzballs}},
  \href{https://doi.org/10.1088/0264-9381/31/2/025016}{\emph{Class. Quant.
  Grav.} {\bfseries 31} (2014) 025016}
  [\href{https://arxiv.org/abs/1305.0957}{{\ttfamily 1305.0957}}].

\bibitem{Kunduri:2013vka}
H.~K. Kunduri and J.~Lucietti, \emph{{The first law of soliton and black hole
  mechanics in five dimensions}},
  \href{https://doi.org/10.1088/0264-9381/31/3/032001}{\emph{Class. Quant.
  Grav.} {\bfseries 31} (2014) 032001}
  [\href{https://arxiv.org/abs/1310.4810}{{\ttfamily 1310.4810}}].

\bibitem{Kunduri:2014iga}
H.~K. Kunduri and J.~Lucietti, \emph{{Black hole non-uniqueness via spacetime
  topology in five dimensions}},
  \href{https://doi.org/10.1007/JHEP10(2014)082}{\emph{JHEP} {\bfseries 10}
  (2014) 082} [\href{https://arxiv.org/abs/1407.8002}{{\ttfamily 1407.8002}}].

\bibitem{Horowitz:2017fyg}
G.~T. Horowitz, H.~K. Kunduri and J.~Lucietti, \emph{{Comments on Black Holes
  in Bubbling Spacetimes}},
  \href{https://doi.org/10.1007/JHEP06(2017)048}{\emph{JHEP} {\bfseries 06}
  (2017) 048} [\href{https://arxiv.org/abs/1704.04071}{{\ttfamily
  1704.04071}}].

\bibitem{Kunduri:2014kja}
H.~K. Kunduri and J.~Lucietti, \emph{{Supersymmetric Black Holes with
  Lens-Space Topology}},
  \href{https://doi.org/10.1103/PhysRevLett.113.211101}{\emph{Phys. Rev. Lett.}
  {\bfseries 113} (2014) 211101}
  [\href{https://arxiv.org/abs/1408.6083}{{\ttfamily 1408.6083}}].

\bibitem{Tomizawa:2016kjh}
S.~Tomizawa and M.~Nozawa, \emph{{Supersymmetric black lenses in five
  dimensions}},  \href{https://arxiv.org/abs/1606.06643}{{\ttfamily
  1606.06643}}.

\bibitem{Hollands:2012xy}
S.~Hollands and A.~Ishibashi, \emph{{Black hole uniqueness theorems in higher
  dimensional spacetimes}},
  \href{https://doi.org/10.1088/0264-9381/29/16/163001}{\emph{Class. Quant.
  Grav.} {\bfseries 29} (2012) 163001}
  [\href{https://arxiv.org/abs/1206.1164}{{\ttfamily 1206.1164}}].

\bibitem{Friedman:1993ty}
J.~L. Friedman, K.~Schleich and D.~M. Witt, \emph{{Topological censorship}},
  \href{https://doi.org/10.1103/PhysRevLett.71.1486}{\emph{Phys. Rev. Lett.}
  {\bfseries 71} (1993) 1486}
  [\href{https://arxiv.org/abs/gr-qc/9305017}{{\ttfamily gr-qc/9305017}}].

\bibitem{Galloway:2005mf}
G.~J. Galloway and R.~Schoen, \emph{{A Generalization of Hawking's black hole
  topology theorem to higher dimensions}},
  \href{https://doi.org/10.1007/s00220-006-0019-z}{\emph{Commun. Math. Phys.}
  {\bfseries 266} (2006) 571}
  [\href{https://arxiv.org/abs/gr-qc/0509107}{{\ttfamily gr-qc/0509107}}].

\bibitem{Hollands:2006rj}
S.~Hollands, A.~Ishibashi and R.~M. Wald, \emph{{A Higher dimensional
  stationary rotating black hole must be axisymmetric}},
  \href{https://doi.org/10.1007/s00220-007-0216-4}{\emph{Commun. Math. Phys.}
  {\bfseries 271} (2007) 699}
  [\href{https://arxiv.org/abs/gr-qc/0605106}{{\ttfamily gr-qc/0605106}}].

\bibitem{Gibbons:2002bh}
G.~W. Gibbons, D.~Ida and T.~Shiromizu, \emph{{Uniqueness and nonuniqueness of
  static vacuum black holes in higher dimensions}},
  \href{https://doi.org/10.1143/PTPS.148.284}{\emph{Prog. Theor. Phys. Suppl.}
  {\bfseries 148} (2003) 284}
  [\href{https://arxiv.org/abs/gr-qc/0203004}{{\ttfamily gr-qc/0203004}}].

\bibitem{Gibbons:2002ju}
G.~W. Gibbons, D.~Ida and T.~Shiromizu, \emph{{Uniqueness of (dilatonic)
  charged black holes and black p-branes in higher dimensions}},
  \href{https://doi.org/10.1103/PhysRevD.66.044010}{\emph{Phys. Rev.}
  {\bfseries D66} (2002) 044010}
  [\href{https://arxiv.org/abs/hep-th/0206136}{{\ttfamily hep-th/0206136}}].

\bibitem{Gibbons:2002av}
G.~W. Gibbons, D.~Ida and T.~Shiromizu, \emph{{Uniqueness and nonuniqueness of
  static black holes in higher dimensions}},
  \href{https://doi.org/10.1103/PhysRevLett.89.041101}{\emph{Phys. Rev. Lett.}
  {\bfseries 89} (2002) 041101}
  [\href{https://arxiv.org/abs/hep-th/0206049}{{\ttfamily hep-th/0206049}}].

\bibitem{Kunduri:2017htl}
H.~K. Kunduri and J.~Lucietti, \emph{{No static bubbling spacetimes in higher
  dimensional Einstein-Maxwell theory}},
  \href{https://arxiv.org/abs/1712.02668}{{\ttfamily 1712.02668}}.

\bibitem{Emparan:2001wk}
R.~Emparan and H.~S. Reall, \emph{{Generalized Weyl solutions}},
  \href{https://doi.org/10.1103/PhysRevD.65.084025}{\emph{Phys. Rev.}
  {\bfseries D65} (2002) 084025}
  [\href{https://arxiv.org/abs/hep-th/0110258}{{\ttfamily hep-th/0110258}}].

\bibitem{Harmark:2004rm}
T.~Harmark, \emph{{Stationary and axisymmetric solutions of higher-dimensional
  general relativity}},
  \href{https://doi.org/10.1103/PhysRevD.70.124002}{\emph{Phys. Rev.}
  {\bfseries D70} (2004) 124002}
  [\href{https://arxiv.org/abs/hep-th/0408141}{{\ttfamily hep-th/0408141}}].

\bibitem{Hollands:2007aj}
S.~Hollands and S.~Yazadjiev, \emph{{Uniqueness theorem for 5-dimensional black
  holes with two axial Killing fields}},
  \href{https://doi.org/10.1007/s00220-008-0516-3}{\emph{Commun. Math. Phys.}
  {\bfseries 283} (2008) 749}
  [\href{https://arxiv.org/abs/0707.2775}{{\ttfamily 0707.2775}}].

\bibitem{Hollands:2008fm}
S.~Hollands and S.~Yazadjiev, \emph{{A Uniqueness theorem for stationary
  Kaluza-Klein black holes}},
  \href{https://doi.org/10.1007/s00220-010-1176-7}{\emph{Commun. Math. Phys.}
  {\bfseries 302} (2011) 631}
  [\href{https://arxiv.org/abs/0812.3036}{{\ttfamily 0812.3036}}].

\bibitem{Hollands:2007qf}
S.~Hollands and S.~Yazadjiev, \emph{{A Uniqueness theorem for 5-dimensional
  Einstein-Maxwell black holes}},
  \href{https://doi.org/10.1088/0264-9381/25/9/095010}{\emph{Class. Quant.
  Grav.} {\bfseries 25} (2008) 095010}
  [\href{https://arxiv.org/abs/0711.1722}{{\ttfamily 0711.1722}}].

\bibitem{Tomizawa:2009tb}
S.~Tomizawa, Y.~Yasui and A.~Ishibashi, \emph{{Uniqueness theorem for charged
  dipole rings in five-dimensional minimal supergravity}},
  \href{https://doi.org/10.1103/PhysRevD.81.084037}{\emph{Phys. Rev.}
  {\bfseries D81} (2010) 084037}
  [\href{https://arxiv.org/abs/0911.4309}{{\ttfamily 0911.4309}}].

\bibitem{Armas:2009dd}
J.~Armas and T.~Harmark, \emph{{Uniqueness Theorem for Black Hole Space-Times
  with Multiple Disconnected Horizons}},
  \href{https://doi.org/10.1007/JHEP05(2010)093}{\emph{JHEP} {\bfseries 05}
  (2010) 093} [\href{https://arxiv.org/abs/0911.4654}{{\ttfamily 0911.4654}}].

\bibitem{Yazadjiev:2011fg}
S.~Yazadjiev, \emph{{Uniqueness and nonuniqueness of the stationary black holes
  in 5D Einstein-Maxwell and Einstein-Maxwell-dilaton gravity}},
  \href{https://doi.org/10.1007/JHEP06(2011)083}{\emph{JHEP} {\bfseries 06}
  (2011) 083} [\href{https://arxiv.org/abs/1104.0378}{{\ttfamily 1104.0378}}].

\bibitem{Armas:2014gga}
J.~Armas, \emph{{Uniqueness of Black Holes with Bubbles in Minimal
  Supergravity}},
  \href{https://doi.org/10.1088/0264-9381/32/4/045001}{\emph{Class. Quant.
  Grav.} {\bfseries 32} (2015) 045001}
  [\href{https://arxiv.org/abs/1408.4567}{{\ttfamily 1408.4567}}].

\bibitem{Yazadjiev:2009nm}
S.~S. Yazadjiev and P.~G. Nedkova, \emph{{Magnetized configurations with black
  holes and Kaluza-Klein bubbles: Smarr-like relations and first law}},
  \href{https://doi.org/10.1103/PhysRevD.80.024005}{\emph{Phys. Rev.}
  {\bfseries D80} (2009) 024005}
  [\href{https://arxiv.org/abs/0904.3605}{{\ttfamily 0904.3605}}].

\bibitem{Yazadjiev:2010uu}
S.~S. Yazadjiev, \emph{{A Uniqueness theorem for black holes with Kaluza-Klein
  asymptotic in 5D Einstein-Maxwell gravity}},
  \href{https://doi.org/10.1103/PhysRevD.82.024015}{\emph{Phys. Rev.}
  {\bfseries D82} (2010) 024015}
  [\href{https://arxiv.org/abs/1002.3954}{{\ttfamily 1002.3954}}].

\bibitem{Haas:2015xmc}
P.~A. Haas, \emph{{Mass Formula of a Five-dimensional Almost-BPS Supergravity
  Soliton with a Magnetic "Bolt"}},
  \href{https://arxiv.org/abs/1511.02005}{{\ttfamily 1511.02005}}.

\bibitem{Haas:2017kdp}
P.~Haas, \emph{{Topological Sources of Soliton Mass and Supersymmetry
  Breaking}},  \href{https://arxiv.org/abs/1705.03992}{{\ttfamily 1705.03992}}.

\bibitem{Khuri:2017xsc}
M.~Khuri, G.~Weinstein and S.~Yamada, \emph{{Stationary Vacuum Black Holes in 5
  Dimensions}},  \href{https://arxiv.org/abs/1711.05229}{{\ttfamily
  1711.05229}}.

\bibitem{Reall:2002bh}
H.~S. Reall, \emph{{Higher dimensional black holes and supersymmetry}},
  \href{https://doi.org/10.1103/PhysRevD.70.089902,
  10.1103/PhysRevD.68.024024}{\emph{Phys. Rev.} {\bfseries D68} (2003) 024024}
  [\href{https://arxiv.org/abs/hep-th/0211290}{{\ttfamily hep-th/0211290}}].

\bibitem{Gutowski:2004bj}
J.~B. Gutowski, \emph{{Uniqueness of five-dimensional supersymmetric black
  holes}}, \href{https://doi.org/10.1088/1126-6708/2004/08/049}{\emph{JHEP}
  {\bfseries 08} (2004) 049}
  [\href{https://arxiv.org/abs/hep-th/0404079}{{\ttfamily hep-th/0404079}}].

\bibitem{Breckenridge:1996is}
J.~C. Breckenridge, R.~C. Myers, A.~W. Peet and C.~Vafa, \emph{{D-branes and
  spinning black holes}},
  \href{https://doi.org/10.1016/S0370-2693(96)01460-8}{\emph{Phys. Lett.}
  {\bfseries B391} (1997) 93}
  [\href{https://arxiv.org/abs/hep-th/9602065}{{\ttfamily hep-th/9602065}}].

\bibitem{Chrusciel:2005ve}
P.~T. Chrusciel, H.~S. Reall and P.~Tod, \emph{{On Israel-Wilson-Perjes black
  holes}}, \href{https://doi.org/10.1088/0264-9381/23/7/018}{\emph{Class.
  Quant. Grav.} {\bfseries 23} (2006) 2519}
  [\href{https://arxiv.org/abs/gr-qc/0512116}{{\ttfamily gr-qc/0512116}}].

\bibitem{Gauntlett:2002nw}
J.~P. Gauntlett, J.~B. Gutowski, C.~M. Hull, S.~Pakis and H.~S. Reall,
  \emph{{All supersymmetric solutions of minimal supergravity in five-
  dimensions}}, \href{https://doi.org/10.1088/0264-9381/20/21/005}{\emph{Class.
  Quant. Grav.} {\bfseries 20} (2003) 4587}
  [\href{https://arxiv.org/abs/hep-th/0209114}{{\ttfamily hep-th/0209114}}].

\bibitem{Elvang:2004rt}
H.~Elvang, R.~Emparan, D.~Mateos and H.~S. Reall, \emph{{A Supersymmetric black
  ring}}, \href{https://doi.org/10.1103/PhysRevLett.93.211302}{\emph{Phys. Rev.
  Lett.} {\bfseries 93} (2004) 211302}
  [\href{https://arxiv.org/abs/hep-th/0407065}{{\ttfamily hep-th/0407065}}].

\bibitem{Gauntlett:2004wh}
J.~P. Gauntlett and J.~B. Gutowski, \emph{{Concentric black rings}},
  \href{https://doi.org/10.1103/PhysRevD.71.025013}{\emph{Phys. Rev.}
  {\bfseries D71} (2005) 025013}
  [\href{https://arxiv.org/abs/hep-th/0408010}{{\ttfamily hep-th/0408010}}].

\bibitem{Bena:2005va}
I.~Bena and N.~P. Warner, \emph{{Bubbling supertubes and foaming black holes}},
  \href{https://doi.org/10.1103/PhysRevD.74.066001}{\emph{Phys. Rev.}
  {\bfseries D74} (2006) 066001}
  [\href{https://arxiv.org/abs/hep-th/0505166}{{\ttfamily hep-th/0505166}}].

\bibitem{Berglund:2005vb}
P.~Berglund, E.~G. Gimon and T.~S. Levi, \emph{{Supergravity microstates for
  BPS black holes and black rings}},
  \href{https://doi.org/10.1088/1126-6708/2006/06/007}{\emph{JHEP} {\bfseries
  06} (2006) 007} [\href{https://arxiv.org/abs/hep-th/0505167}{{\ttfamily
  hep-th/0505167}}].

\bibitem{Chrusciel:2008rh}
P.~T. Chrusciel, \emph{{On higher dimensional black holes with abelian isometry
  group}}, \href{https://doi.org/10.1063/1.3068729}{\emph{J. Math. Phys.}
  {\bfseries 50} (2009) 052501}
  [\href{https://arxiv.org/abs/0812.3424}{{\ttfamily 0812.3424}}].

\bibitem{Niehoff:2016gbi}
B.~E. Niehoff and H.~S. Reall, \emph{{Evanescent ergosurfaces and ambipolar
  hyperk{\"a}hler metrics}},
  \href{https://doi.org/10.1007/JHEP04(2016)130}{\emph{JHEP} {\bfseries 04}
  (2016) 130} [\href{https://arxiv.org/abs/1601.01898}{{\ttfamily
  1601.01898}}].

\bibitem{Gibbons:1987sp}
G.~W. Gibbons and P.~J. Ruback, \emph{{The Hidden Symmetries of Multicenter
  Metrics}}, \href{https://doi.org/10.1007/BF01466773}{\emph{Commun. Math.
  Phys.} {\bfseries 115} (1988) 267}.

\bibitem{Tomizawa:2009ua}
S.~Tomizawa, Y.~Yasui and A.~Ishibashi, \emph{{Uniqueness theorem for charged
  rotating black holes in five-dimensional minimal supergravity}},
  \href{https://doi.org/10.1103/PhysRevD.79.124023}{\emph{Phys. Rev.}
  {\bfseries D79} (2009) 124023}
  [\href{https://arxiv.org/abs/0901.4724}{{\ttfamily 0901.4724}}].

\bibitem{Figueras:2009ci}
P.~Figueras and J.~Lucietti, \emph{{On the uniqueness of extremal vacuum black
  holes}}, \href{https://doi.org/10.1088/0264-9381/27/9/095001}{\emph{Class.
  Quant. Grav.} {\bfseries 27} (2010) 095001}
  [\href{https://arxiv.org/abs/0906.5565}{{\ttfamily 0906.5565}}].

\bibitem{Kunduri:2013ana}
H.~K. Kunduri and J.~Lucietti, \emph{{Classification of near-horizon geometries
  of extremal black holes}},
  \href{https://doi.org/10.12942/lrr-2013-8}{\emph{Living Rev. Rel.} {\bfseries
  16} (2013) 8} [\href{https://arxiv.org/abs/1306.2517}{{\ttfamily
  1306.2517}}].

\bibitem{Chrusciel:2010gq}
P.~T. Chrusciel and L.~Nguyen, \emph{{A uniqueness theorem for degenerate
  Kerr-Newman black holes}},
  \href{https://doi.org/10.1007/s00023-010-0038-3}{\emph{Annales Henri
  Poincare} {\bfseries 11} (2010) 585}
  [\href{https://arxiv.org/abs/1002.1737}{{\ttfamily 1002.1737}}].

\bibitem{Crichigno:2016lac}
P.~M. Crichigno, F.~Porri and S.~Vandoren, \emph{{Bound states of spinning
  black holes in five dimensions}},
  \href{https://doi.org/10.1007/JHEP05(2017)101}{\emph{JHEP} {\bfseries 05}
  (2017) 101} [\href{https://arxiv.org/abs/1603.09729}{{\ttfamily
  1603.09729}}].

\bibitem{Tomizawa:2017suc}
S.~Tomizawa and T.~Okuda, \emph{{Asymptotically flat multiblack lenses}},
  \href{https://doi.org/10.1103/PhysRevD.95.064021}{\emph{Phys. Rev.}
  {\bfseries D95} (2017) 064021}
  [\href{https://arxiv.org/abs/1701.06402}{{\ttfamily 1701.06402}}].

\bibitem{Gauntlett:1998fz}
J.~P. Gauntlett, R.~C. Myers and P.~K. Townsend, \emph{{Black holes of D = 5
  supergravity}},
  \href{https://doi.org/10.1088/0264-9381/16/1/001}{\emph{Class. Quant. Grav.}
  {\bfseries 16} (1999) 1}
  [\href{https://arxiv.org/abs/hep-th/9810204}{{\ttfamily hep-th/9810204}}].

\bibitem{Dunajski:2006vs}
M.~Dunajski and S.~A. Hartnoll, \emph{{Einstein-Maxwell gravitational
  instantons and five dimensional solitonic strings}},
  \href{https://doi.org/10.1088/0264-9381/24/7/010}{\emph{Class. Quant. Grav.}
  {\bfseries 24} (2007) 1841}
  [\href{https://arxiv.org/abs/hep-th/0610261}{{\ttfamily hep-th/0610261}}].

\bibitem{Avila:2017pwi}
J.~Avila, P.~F. Ramirez and A.~Ruiperez, \emph{{One Thousand and One Bubbles}},
   \href{https://arxiv.org/abs/1709.03985}{{\ttfamily 1709.03985}}.

\end{thebibliography}\endgroup

\end{document}